\documentclass[a4paper,USenglish,pdfa,cleveref,autoref,thm-restate]{lipics-v2021}

\nolinenumbers

\usepackage{packages}

\usepackage{macros/misc}
\usepackage{macros/rule}
\usepackage{macros/syntax}
\usepackage{macros/lfpl}

\newcommand{\FULLVERSION}{}

\author{Nathaniel Glover}{Carnegie Mellon University, Pittsburgh, PA, USA \and \url{https://www.cs.cmu.edu/~nsglover} }{nsglover@cs.cmu.edu}{https://orcid.org/0009-0006-1123-4277}{}

\author{Jan Hoffmann}{Carnegie Mellon University, Pittsburgh, PA, USA \and \url{https://www.cs.cmu.edu/~janh} }{jhoffmann@cmu.edu}{https://orcid.org/0000-0001-8326-0788}{}

\authorrunning{N. Glover and J. Hoffmann}

\Copyright{Nathaniel Glover and Jan Hoffmann}
\keywords{Type Theory, Implicit Computational Complexity, Affine Logic}

\ccsdesc[500]{Theory of computation~Type theory}
\ccsdesc[500]{Theory of computation~Complexity theory and logic}
\ccsdesc[300]{Theory of computation~Linear logic}
\ifdefined\FULLVERSION
  \hideLIPIcs
\else
  \EventEditors{Claudia Faggian and Joost-Pieter Katoen}
  \EventNoEds{2}
  \EventLongTitle{41st Annual Symposium on Logic in Computer Science (LICS 2026)}
  \EventShortTitle{LICS 2026}
  \EventAcronym{LICS}
  \EventYear{2026}
  \EventDate{July 20--23, 2026}
  \EventLocation{Lisbon, Portugal}
  \EventLogo{}
  \SeriesVolume{380}
  \ArticleNo{51}
\fi

\begin{document}

\ifdefined\FULLVERSION\else
  \relatedversiondetails[cite={fullversion}]{Full Version}{https://arxiv.org/abs/2605.12893}
\fi

\supplementdetails[cite={mechanization}]{Software}{https://doi.org/10.5281/zenodo.18348212}

\title{LFPL: Revisited and Mechanized}

\maketitle

\begin{abstract}
  Hofmann (1999) introduced the functional programming language \lfpl to characterize the functions computable in polynomial time using an affine type system.
  \lfpl enables a natural programming style, including nested recursion, and has inspired the development of type systems for automatic cost analysis, linear dependent type theories, and efficient memory management in functional programming languages.
  Despite its prominence, there does not exist a self-contained presentation, let alone a full mechanization, of \lfpl and its core metatheory.
  
  This article presents a modern account and mechanization of \lfpl and its metatheory with the goal of being self-contained and accessible while streamlining the strongest-known soundness and completeness results.
  The soundness proof works with the language \lfplplus, which extends \lfpl with additional language features.
  The proof is novel, adapting a technique by Aehlig and Schwichtenberg (2002) to construct explicit polynomials that bound the cost of an \lfplplus expression with respect to a big-step cost semantics.
  The completeness proof shows that \lfpl programs can simulate polynomial-time Turing machines while only relying on restricted forms of linear functions and lists.
  It has the same structure as the original proof by Hofmann (2002) but greatly simplifies the core argument with a novel stack-like data structure that is implemented with first-class functions and lists.
  The mechanization includes the full soundness and completeness proofs, and serves as one of the first case studies of mechanized metatheory in the recently developed proof assistant Istari.
\end{abstract}

\section{Introduction}

The goal of implicit computational complexity (ICC) is to provide a programming language which characterizes a given complexity class. Concretely, this means that a procedure is representable in the language if and only if it inhabits the desired complexity class. ICC has been extensively studied for many complexity classes including P~\cite{hofmannLinearTypesNonsizeincreasing1999,hofmannTypeSystemBounded2000,hofmannLinearTypesNonsizeincreasing2003,gaboardiSoftLinearLogic2008}, PSPACE~\cite{hofmannLinearTypesNonsizeincreasing2003, gaboardiSoftLinearLogic2008,gaboardiImplicitCharacterizationPSPACE2012}, EXPTIME~\cite{hofmannStrengthNonsizeIncreasing2002}, and LOGSPACE~\cite{schoppStratifiedBoundedAffine2007, ramyaaRamifiedCorecurrenceLogspace2011}. Such languages often achieve their desired complexity class via modifications of standard linear or affine type systems~\cite{dallagoImplicitComputationComplexity2022}.

One prominent such programming language for the complexity classes P and FP is Martin Hofmann's Linear Function Programming Language (\lfpl).
It was first defined in 1999, when Hofmann~\cite{hofmannLinearTypesNonsizeincreasing1999} showed that \lfpl is polynomial-time sound, that is, all functions definable in \lfpl belong to FP.
In 2002, Hofmann~\cite{hofmannStrengthNonsizeIncreasing2002} proved the logical inverse of this statement, polynomial-time completeness.
\lfpl is notable for characterizing FP while also supporting a natural programming style and giving rise to the development of automatic amortized resource analysis~\cite{HoffmJ22}.
Recently, there has been renewed interest in \lfpl. Atkey~\cite{atkeyPolynomialTimeDependent2024} extends \lfpl with dependent
types and unrestricted computation at the type level, creating a
language equipped with the powerful guarantees of dependent type
theory while still enforcing that computation is polynomial-time at
the term level. Lorenzen et al.~\cite{lorenzenFP2FullyInPlace2023}
utilize the type system of \lfpl in their language Koka to ensure
that certain functional programs can be executed fully in-place.

Despite its prominence, we are not aware of a fully self-contained presentation of \lfpl and its core metatheory. The original presentation by Hofmann~\cite{hofmannLinearTypesNonsizeincreasing1999} contains a semantic soundness proof but only a weak completeness result for linear-space polynomial-time Turing machines. A somewhat informal completeness result for the original \lfpl is given as a corollary in Hofmann~\cite{hofmannStrengthNonsizeIncreasing2002}, where polynomial time is not the main focus. Aehlig and Schwichtenberg~\cite{aehligSyntacticalAnalysisNonsizeincreasing2002} provide a syntactic soundness proof for a small-step semantics and a weaker completeness proof for a strong extension of \lfpl. Atkey~\cite{atkeyPolynomialTimeDependent2024} presents a similar completeness proof, as well as a soundness proof for \lfpl extended with dependent type theory. Lorenzen et al.~\cite{lorenzenFP2FullyInPlace2023} present \lfpl's type system without the metatheory. None of these works contains both a soundness proof and a completeness proof of the original \lfpl without extensions.

In this article, we give a modern account and mechanization~\cite{mechanization} of \lfpl and its metatheory. Our primary goal is for this article to be a self-contained and accessible resource for those who are interested in learning about \lfpl, with all of our results guaranteed correct by the mechanization. We have witnessed firsthand the necessity of such a resource. It was challenging to collect all this information on \lfpl from across the various different papers we have mentioned, and during the process we discovered some errors in Hofmann's completeness proof~\cite{hofmannStrengthNonsizeIncreasing2002}, the only full completeness proof that we are aware of.

Our presentation of \lfpl involves a denotational semantics similar to the original given by Hofmann~\cite{hofmannLinearTypesNonsizeincreasing1999} as well as a big-step operational cost semantics, which, to our knowledge, was not developed for \lfpl. An advantage of both semantics -- which greatly reduce the complexity of our mechanization -- is that they both avoid making use of syntactic substitution, which is a notoriously tedious operation to support and reason about in proof assistants.

The mechanized completeness proof is a simplified and streamlined version of the one given by Hofmann~\cite{hofmannStrengthNonsizeIncreasing2002}. The main novelty is the implementation of a stack-like data structure in \lfpl that allows us to avoid much of the complexity of Hofmann's proof as well as sidestep the errors we discovered in that proof. To our knowledge, ours is the strongest completeness result for \lfpl. Hofmann also gave a weaker completeness theorem for linear-time Turing machines only~\cite{hofmannLinearTypesNonsizeincreasing1999}. Both Aehlig and Schwichtenberg~\cite{aehligSyntacticalAnalysisNonsizeincreasing2002} and Atkey~\cite{atkeyPolynomialTimeDependent2024} provide full completeness proofs but for powerful extensions of \lfpl, thereby weakening the completeness result. Like Hofmann's version~\cite{hofmannStrengthNonsizeIncreasing2002}, our completeness proof works with a minimal version of \lfpl, using the fewest tools to simulate a polynomial-time Turing machine.

The mechanized soundness proof is inspired by Aehlig and Schwichtenberg~\cite{aehligSyntacticalAnalysisNonsizeincreasing2002}. We give a modern presentation of their result for a big-step operational cost semantics and an extended language \lfplplus. Using this cost semantics allows us to reason about concrete program executions and avoids the substitution-related tedium that arises when mechanizing a small-step semantics. While a big-step operational cost semantics is further away from a concrete machine model, it is well-known how to link it to small-step semantics, which have been shown to be reasonable with respect to lower-level semantic models~\cite{blellochProvableTimeSpace1996,Accattoli2019}.
Notably, for each \lfplplus term, the proof constructs a concrete polynomial that bounds the execution cost of the term as defined by the cost semantics.
To our knowledge, this is the strongest known soundness result for \lfpl;
we strengthen it by extending the language with several features such
as a built-in stack data structure and lazy products.

\ifdefined\FULLVERSION
  \enlargethispage{-4.9\baselineskip}
\else
  \enlargethispage{-1\baselineskip}
\fi

In summary, we make the following contributions: the first complete presentation of the \lfpl metatheory, a novel completeness proof, a novel soundness proof, and the first complete mechanization of the \lfpl metatheory. The novelties in the proofs make them accessible and amenable to mechanization. Our desire to mechanize the \lfpl metatheory is partially responsible for driving us to discover these proofs. Our completeness proof resolves some errors we discovered in the only other completeness proof of equal strength to ours.

\section{LFPL}
\label{sec:lfpl}

In G\"odel's System T, recursion is controlled in such a way as to ensure the language is still terminating. This is known as primitive or structural recursion. 
The only way to write variable-time programs in \lfpl is using structural recursion, but that alone is not enough to achieve polynomial time. This pseudocode is an example of an exponential-time program using only structural recursion:
\begin{lstlisting}
double zero = zero
double (succ n) = succ (succ (double n))

exp zero = succ zero
exp (succ n) = double (exp n)
\end{lstlisting}

The main idea of \lfpl is to enforce non-size-increasing computation. In the above example, the exponential behavior relies on the fact that \lstinline{double} increases the size of its input. \lfpl forbids such size-increasing computations. The non-size-increasing property is enforced with an affine type system, which treats input size as a limited resource that cannot be duplicated. Formally, size is represented by a type $\tpdiam$, so that an element of type $\tpdiam$ corresponds to one unit of size. The natural number $n$ requires $n$ elements of type $\tpdiam$ to construct. These resources, which we shall henceforth call diamonds, are returned when $n$ is consumed. The program \code{double} is no longer possible to write; in the \code{succ} case, one diamond is made available to us, yet we need two diamonds due to the two calls to \code{succ}.

In contrast to other polynomial-time systems that might restrict iteration such that nested iterations are impossible, \lfpl enables a fairly natural programming style. While many polynomial-time languages are complete with respect to polynomial-time \textit{computations}, they are often unable to express many natural \textit{algorithms}. Consider this pseudocode for an inefficient identity function on lists:
\begin{lstlisting}
append nil l2 = l2
append (n :: l1) l2 = n :: append l1 l2

id l = append (append l nil) nil
\end{lstlisting}
In a system that disallows nested iterations, expressing a program like this can be difficult or impossible when the output from a call to the iterative procedure like \lstinline{append} is fed into another such call. Meanwhile, \lfpl has no such restrictions, so the above program can be directly translated into \lfpl with minimal modifications. Due to this flexibility, \lfpl supports many other natural algorithms such as linear-time list reversal, insertion sort, and integer division by a constant.

\subsection{Properties of LFPL}

\enlargethispage{1\baselineskip}

The defining property of \lfpl is that computation is non-size-increasing. Consider a list resulting from the evaluation of an \lfpl program with some lists as inputs. The non-size-increasing theorem intuitively states that the length of this output list is no greater than the sum of the lengths of the input lists. We present this in detail in \pref{sec:soundness-size}, but for now the main takeaway is that \lfpl does not support problematic size-increasing functions like \lstinline{double}. As expected of the defining property of the language, the non-size-increasing theorem is a main tool in showing that \lfpl is polynomial-time sound.

The most complete version of the polynomial-time soundness theorem states that any function on the natural numbers expressible in \lfpl is also computable by a polynomial-time Turing machine. We only prove one step in the path from \lfpl to Turing machines: In a high-level cost semantics, given an \lfpl program and some inputs with size $n$, there exists a polynomial $P : \mathbb N \to \mathbb N$ such that evaluating the program costs at most $P(n)$. Our theorem is the only step in this path that contains reasoning specific to \lfpl: Given this theorem, it is a well-known and standard procedure to justify the polynomial bound for our high-level cost semantics in terms of a low-level model such as a Turing machine~\cite{Accattoli2019,blellochProvableTimeSpace1996}.

The polynomial-time completeness theorem states the logical inverse of soundness: Any function on the natural numbers that is computable by a polynomial-time Turing machine is expressible in \lfpl. As stated, this is not quite true for \lfpl; $\code{double}$ is computable in linear time yet is not expressible in \lfpl. We can fix this by instead proving completeness for all polynomial-time computable functions that do not increase the size of their input. Perhaps surprisingly, we see in \pref{thm:completeness-general} that the more general statement of completeness is only \textit{barely} false: There is a very reasonable perspective from which \lfpl is able to express all polynomial-time computations, even the size-increasing ones.

\subsection{Syntax and Typing Rules}
\label{sec:lfpl-definition}

\begin{figure}[t!]
\hspace{-1.7em}
\begin{minipage}[t]{0.33\linewidth}
\begin{syntax}{Type}{A,B}
\sitem{\tpdiam}{diamond}
\sitem{\tpunit}{unit}
\sitem{\tpsum{A}{B}}{sum}
\sitem{\tptens{A}{B}}{tensor}
\sitem{\tparr{A}{B}}{arrow}
\sitem{\tplist{A}}{list}
\end{syntax}
\end{minipage}
\hspace{3.3em}
\begin{minipage}[t]{0.6\linewidth}
\begin{syntax}{Term}{M,N}
\sitem{x}{variable}
\sitem{\tmnull}{unit intro}
\sitem{\tminj{i}{M} \qquad\qquad (i \in \{1,2\})}{sum intro}
\sitem{\tmcase{M}{x_1}{N_1}{x_2}{N_2}}{sum elim}
\sitem{\tmpair{M_1}{M_2}}{tensor intro}
\sitem{\tmletp{M}{x_1}{x_2}{N}}{tensor elim}
\sitem{\tmlam{x}{M}}{arrow intro}
\sitem{\tmapp{M}{N}}{arrow elim}
\sitem{\tmnil}{list intro}
\sitem{\tmcons{M_d}{M_h}{M_t}}{list intro}
\sitem{\tmrec{M}{N_1}{x_d}{x_h}{x_t}{N_2}}{list elim}
\end{syntax}
\end{minipage}
\caption{\lfpl's type and term syntax.}
\label{fig:syntax-tp-tm}
\end{figure}

The syntax of \lfpl is given in \pref{fig:syntax-tp-tm}. Other than the type $\tpdiam$, and the additional argument and variable appearing in the list introduction and elimination forms respectively, the syntax looks like System T with lists. The first argument $M_d$ in the second list introduction form $\tmcons{M_d}{M_h}{M_t}$ is intended to have type $\tpdiam$; to add an element to a list, we are forced to pay a diamond. Likewise, the first variable $x_d$ in the list elimination $\tmrec{M}{N_1}{x_d}{x_h}{x_t}{N_2}$ is intended to have type $\tpdiam$; when eliminating a list, we earn back the diamonds we paid to construct it, allowing us to construct new lists during recursion. In this way, we use elements of type $\tpdiam$ to control the lengths of lists, ensuring that the number of diamonds available to us bounds the maximum length of a list that we can construct at any point.

\enlargethispage{-1\baselineskip}
The typing judgement is written $\types{\ctx}{M}{A}$ and read as ``$M$ has type $A$ under context $\ctx$''. Importantly, \lfpl has an \textit{affine} type system, so contraction is not allowed and we must partition the context when we type an expression containing sequential subexpressions. The typing rules are given in \pref{fig:jmt-ty}. Notice that for $\tmcons{M_d}{M_h}{M_t}$ to be typed under $\ctxappend{\ctx_d}{\ctxappend{\ctx_h}{\ctx_t}}$, we must have $\types{\ctx_d}{M_d}{\tpdiam}$. Likewise, notice that for $\tmrec{M}{N_1}{x_d}{x_h}{x_t}{N_2}$ to be typed, we must type $N_2$ under a context with $x_d : \tpdiam$.

Another important subtlety is in \pref{rule:ty-ListE}, where $N_2$ is typed under a context containing \textit{only} the variables $x_d$, $x_h$, and $x_t$. One is not allowed to use additional variables from the surrounding environment. This is because $N_2$ represents the body of the structural recursor and therefore, despite appearing once statically, may be run many times during the execution of a program. Consequently, even if $N_2$ would statically appear to use a variable once, at runtime that variable would be used many times and thus violate the main principle of affine type theory. For example, $N_2$ could take just one diamond from the context and use it to create a list of length $10$, which would break \lfpl's non-size-increasing property.

\NewDocumentCommand{\RuleTy}{m m m}{\Rule{\text{Ty}:#1}{#2}{#3}\label{rule:ty-#1}}
\begin{figure}[t!]
\begin{mathpar}
\RuleTy{Var}{\types{\ctxcons{\ctx}{x}{A}}{x}{A}}{}
\and
\RuleTy{UnitI}{\types{\ctx}{\tmnull}{\tpunit}}{}
\and
\RuleTy{ArrowI}{\types{\ctx}{\tmlam{x}{M}}{\tparr{A}{B}}}{\types{\ctxcons{\ctx}{x}{A}}{M}{B}}
\and
\RuleTy{ArrowE}{\types{\ctxappend{\ctx}{\ctx'}}{\tmapp{M}{N}}{B}}{\types{\ctx}{M}{\tparr{A}{B}} \\ \types{\ctx'}{N}{A}}
\and
\RuleTy{SumI$_i$}{\types{\ctx}{\tminj{i}{M}}{\tpsum{A_1}{A_2}}}{\types{\ctx}{M}{A_i}}
\and
\RuleTy{SumE}{\types{\ctxappend{\ctx}{\ctx'}}{\tmcase{M}{x_1}{N_1}{x_2}{N_2}}{B}}{\types{\ctx}{M}{\tpsum{A_1}{A_2}} \\ \types{\ctxcons{\ctx'}{x_1}{A_1}}{N_1}{B} \\ \types{\ctxcons{\ctx'}{x_2}{A_2}}{N_2}{B}}
\and
\RuleTy{TensorI}{\types{\ctxappend{\ctx_1}{\ctx_2}}{\tmpair{M_1}{M_2}}{\tptens{A_1}{A_2}}}{\types{\ctx_1}{M_1}{A_1} \\ \types{\ctx_2}{M_2}{A_2}}
\and
\RuleTy{TensorE}{\types{\ctxappend{\ctx}{\ctx'}}{\tmletp{M}{x_1}{x_2}{N}}{B}}{\types{\ctx}{M}{\tptens{A_1}{A_2}} \\ \types{\ctxcons{\ctxcons{\ctx'}{x_1}{A_1}}{x_2}{A_2}}{N}{B}}
\and
\RuleTy{ListI$_1$}{\types{\ctx}{\tmnil}{\tplist{A}}}{}
\and
\RuleTy{ListI$_2$}{\types{\ctxappend{\ctx_d}{\ctxappend{\ctx_h}{\ctx_t}}}{\tmcons{M_d}{M_h}{M_t}}{\tplist{A}}}{\types{\ctx_d}{M_d}{\tpdiam} \\ \types{\ctx_h}{M_h}{A} \\ \types{\ctx_t}{M_t}{\tplist{A}}}
\and
\RuleTy{ListE}{\types{\ctxappend{\ctx}{\ctx'}}{\tmrec{M}{N_1}{x_d}{x_h}{x_t}{N_2}}{B}}{\types{\ctx}{M}{\tplist{A}} \\ \types{\ctx'}{N_1}{B} \\ \types{\ctxcons{\ctxcons{\ctxcons{\ctxnil}{x_d}{\tpdiam}}{x_h}{A}}{x_t}{B}}{N_2}{B}}
\end{mathpar}
\caption{\lfpl's affine typing rules.}
\label{fig:jmt-ty}
\end{figure}

\subsection{Denotational Semantics}
\label{sec:denotational-semantics}

To reason about the behavior of \lfpl programs, we need to reason about evaluating functions which may take lists as input. For instance, we will see in \pref{sec:lfpl-examples} a closed \lfpl term $\code{reverse} : \tparr{\tplist{A}}{\tplist{A}}$, but we can never actually ``run'' $\code{reverse}$ on a nonempty input list, since that would require a closed term of type $\tpdiam$ and the whole point of \lfpl's type system is to make that impossible. Thus, it is difficult to study \lfpl by itself; we instead need to reason about terms in a domain where $\tpdiam$ is inhabited.

\enlargethispage{\baselineskip}
We now give a simple set-theoretic denotational semantics. To each type $A$ we assign a set $\tpsem{A}$, and to each well-typed term $\types{\Gamma}{M}{A}$ we assign an element $\tmsem{M}{\env}$ of $\tpsem{A}$, where $\env$ is an environment such that $\env(x) \in \tpsem{A}$ for every $x : A \in \Gamma$, written as $\env \in \tpsem{\Gamma}$. Per the above discussion, we define $\tpsem{\tpdiam}$ as a nonempty set to ensure $\tpdiam$ is semantically inhabited.

To define the semantics of the \lfpl list type, we make use of the Kleene star operator. Given a set $S$, the set $S^*$ is the set of finite strings of elements of $S$. We write $\emptylist$ for the empty string, and given $x \in S$ and $\ell \in S^*$, we write $x :: \ell$ for the element of $S^*$ obtained by putting $x$ before the sequence of elements in $\ell$. Given $\ell \in S^*$, let $|\ell| \in \mathbb N$ denote the length of the string $\ell$. Lastly, for every set $T$, we inductively define the iteration operator on $S^*$:
\begin{align*}
&\text{iter}(\emptylist, b, f) = b \\
&\text{iter}(x :: \ell, b, f) = f(x, \text{iter}(\ell, b, f))
\end{align*}

For the semantics of the sum type, we make use of the disjoint union (or coproduct) operator $S \sqcup T := (\{ 1 \} \times S) \cup (\{ 2 \} \times T)$. The definitions of $\tpsem{A}$ and $\tmsem{M}{\env}$ are given in \pref{fig:sem-tp-tm}, inspired by Hofmann's set-theoretic interpretation~\cite{hofmannStrengthNonsizeIncreasing2002}. Notice the argument of type $\tpdiam$ to $\code{cons}$ is ignored; the presence of $\tpdiam$ in lists is purely a means of statically enforcing polynomial-time computation, so the dynamic semantics for $\code{cons}$ need not involve $\blacklozenge$.

\begin{figure}[t!]
\vspace{-1em}
\[
\begin{array}{@{\hspace{3.3em}}r@{\;=\;}l@{\hspace{3.3em}}r@{\;=\;}l@{\hspace{3.3em}}r@{\;=\;}l}
\tpsem{\tpdiam} & \{ \blacklozenge \} &
\tpsem{\tptens{A}{B}} & \tpsem{A} \times \tpsem{B} &
\tpsem{\tplist{A}} & \tpsem{A}^* \\
\tpsem{\tpunit} & \{ \ast \} &
\tpsem{\tpsum{A}{B}} & \tpsem{A} \sqcup \tpsem{B} &
\tpsem{\tparr{A}{B}} & \tpsem{A} \to \tpsem{B}
\end{array}
\]
\[
\begin{array}{rcll}
\tmsem{x}{\env} &=& \env(x) \\
\tmsem{\tmnull}{\env} &=& \ast \\
\tmsem{\tminj{i}{M}}{\env} &=& (i, \tmsem{M}{\env}) \\
\tmsem{\tmcase{M}{x_1}{N_1}{x_2}{N_2}}{\env} &=& \tmsem{N_i}{\envcons{\env}{x_i}{v}} 
&\!\!\!\!\! \text{where }  \tmsem{M}{\env} = (i, v) \\
\tmsem{\tmpair{M_1}{M_2}}{\env} &=& (\tmsem{M_1}{\env}, \tmsem{M_2}{\env}) \\
\tmsem{\tmletp{M}{x_1}{x_2}{N}}{\env} &=& \tmsem{N}{\env[x_1 \mapsto v, x_2 \mapsto u]}
&\!\!\!\!\! \text{where }  \tmsem{M}{\env} = (v, u) \\
\tmsem{\tmlam{x}{M}}{\env} &=& \lambda \, v . \; \tmsem{M}{\envcons{\env}{x}{v}} \; \\
\tmsem{\tmapp{M}{N}}{\env} &=& \tmsem{M}{\env} \left( \tmsem{N}{\env} \right) \\
\tmsem{\tmnil}{\env} &=& \emptylist \\
\tmsem{\tmcons{M_0}{M_1}{M_2}}{\env} &=& \tmsem{M_1}{\env} :: \tmsem{M_2}{\env} \\
  \tmsem{\tmrec{M}{N_1}{x_d}{x_h}{x_t}{N_2}}{\env} &=& \text{iter}(\tmsem{M}{\env}, \tmsem{N_1}{\env}, f)\\
&& \multicolumn{2}{r}{\text{where } f(v, w) = \tmsem{N_2}{\envnil[x_d \mapsto \blacklozenge, x_h \mapsto v, x_t \mapsto w]}}
\end{array}
\]

\caption{\lfpl's denotational semantics.}
\label{fig:sem-tp-tm}
\end{figure}

The notion of size and the non-size-increasing property were originally formulated for the denotational semantics~\cite{hofmannLinearTypesNonsizeincreasing1999}. The basic idea is to define a partial function $\mathsf{size}_A : \tpsem{A} \to \mathbb N$ by induction on $A$, intuitively counting the number of diamonds within the input. For example, $\mathsf{size}_{\tpdiam}(\blacklozenge) = 1$. The reason for partiality is that $\mathsf{size}_{\tparr{A}{B}}(f)$ is defined as the maximum difference (which might not exist) between $\mathsf{size}_B(f(x))$ and $\mathsf{size}_A(x)$, where $x$ ranges over all $x \in \tpsem{A}$ such that $\mathsf{size}_A(x)$ and $\mathsf{size}_B(f(x))$ are defined. The non-size-increasing theorem then proves simultaneously that $\mathsf{size}_A(\tmsem{M}{\env})$ is defined whenever the sizes of all values in $\env$ are defined and that the size of $\tmsem{M}{\env}$ is bounded by the sum of the sizes of the values in $\env$. In this paper, we provide an equivalent definition using an operational semantics in \pref{sec:soundness-size}. It is simpler because the definition is total, avoiding the complexities of dealing with partial functions while remaining sufficient for proving soundness.

\subsection{Example Programs}
\label{sec:lfpl-examples}
\enlargethispage{2.4\baselineskip}

To build intuition about the expressivity of \lfpl, we implement some examples of functions that demonstrate how to program in the language. Firstly, given any element type $A$, we implement efficient list reversal. For readability, we use the concrete syntax
\[ \text{\lstinline{rec M | nil => N1 | cons (xd, xh, xt) => N2}} \]
for $\tmrec{M}{N_1}{x_d}{x_h}{x_t}{N_2}$ in the examples; and a similar syntax for sum elimination.

\begin{example}[Linear-Time Reverse]
\label{example:reverse}
We can implement the standard linear-time list reversal, often seen in introductory material on functional programming.
\begin{lstlisting}
revAppend : L(A) -o L(A) -o L(A)
revAppend = lam l1 . rec l1
| nil => lam l2 . l2
| cons (d, x, r) => lam l2 . r (cons (d, x, l2))

reverse : L(A) -o L(A)
reverse = lam l1 . revAppend l1 nil
\end{lstlisting}
In the abstract syntax of \pref{fig:syntax-tp-tm}, \code{revAppend} would be written as:
\[ \tmlam{\ell_1}{\tmrec{\ell_1}{\tmlam{\ell_2}{\ell_2}}{d}{x}{r}{\tmlam{\ell_2}{\tmapp{r}{(\tmcons{d}{x}{\ell_2})}}}} \]
This term has type $\tparr{\tplist{A}}{\tparr{\tplist{A}}{\tplist{A}}}$. Using the rules in \pref{fig:jmt-ty}, we can see this via the following partial derivation. The ellipses are straightforward to fill in. 
\begin{mathpar}
\ARule
{
  \types{\ctxnil}{\tmlam{\ell_1}{\tmrec{\ell_1}{\tmlam{\ell_2}{\ell_2}}{d}{x}{r}{\tmlam{\ell_2}{\tmapp{r}{(\tmcons{d}{x}{\ell_2})}}}}}{\tparr{\tplist{A}}{\tparr{\tplist{A}}{\tplist{A}}}}
}
{
  \ARule
  {
    \types{\ctxsing{\ell_1}{\tplist{A}}}{\tmrec{\ell_1}{\tmlam{\ell_2}{\ell_2}}{d}{x}{r}{\tmlam{\ell_2}{\tmapp{r}{(\tmcons{d}{x}{\ell_2})}}}}{\tparr{\tplist{A}}{\tplist{A}}}
  }
  {
    \cdots
    \\
    \cdots
    \\
    \ARule
    {
      \types{\ctxcons{\ctxcons{\ctxsing{d}{\tpdiam}}{x}{A}}{r}{\tparr{\tplist{A}}{\tplist{A}}}}{\tmlam{\ell_2}{\tmapp{r}{(\tmcons{d}{x}{\ell_2})}}}{\tparr{\tplist{A}}{\tplist{A}}}
    }
    {
      \ARule
      {
        \types{\ctxcons{\ctxcons{\ctxcons{\ctxsing{d}{\tpdiam}}{x}{A}}{r}{\tparr{\tplist{A}}{\tplist{A}}}}{\ell_2}{\tplist{A}}}{\tmapp{r}{(\tmcons{d}{x}{\ell_2})}}{\tplist{A}}
      }
      {
        \cdots
        \\
        \ARule
        {
          \types{\ctxcons{\ctxcons{\ctxsing{d}{\tpdiam}}{x}{A}}{\ell_2}{\tplist{A}}}{\tmcons{d}{x}{\ell_2}}{\tplist{A}}
        }
        {
          \ARule{\types{\ctxsing{d}{\tpdiam}}{d}{\tpdiam}}{} \\ \ARule{\types{\ctxsing{x}{A}}{x}{A}}{} \\ \ARule{\types{\ctxsing{\ell_2}{\tplist{A}}}{\ell_2}{\tplist{A}}}{}
        }
      }
    }
  }
}
\end{mathpar}

From this derivation, it is straightforward to derive $\types{\ctxnil}{\code{reverse}}{\tparr{\tplist{A}}{\tplist{A}}}$. To prove that \code{reverse} is correct, one can use the semantic clauses in \pref{fig:sem-tp-tm} to prove $\tmsem{\code{reverse}}{\envnil} : \tpsem{A}^* \to \tpsem{A}^*$ is the reversal function on $\tpsem{A}^*$.
\end{example}

\begin{example}[List Case Analysis]
\label{example:list-case}
Importantly, both for convenience and for use in \pref{example:susp}, we can inspect whether a list is empty, which is often presented as pattern matching in standard functional programming languages. For brevity in the concrete syntax, we allow for pattern matching on tuple variables at their binding sites in place of the more verbose \code{letp}.
\begin{lstlisting}
lfold : 1 + diam * A * L(A) -o L(A)
lfold = lam x . case x .
| inj1 _ => nil
| inj2 (d, x, xs) => cons (d, x, xs)

lunfold : L(A) -o 1 + diam * A * L(A)
lunfold = lam x . rec x .
| nil => inj1 <>
| cons (d, x, r) => inj2 (d, x, lfold r)
\end{lstlisting}
\end{example}

\begin{example}[List Suspension]
\label{example:susp}
\enlargethispage{.5\baselineskip}
Given a list, we can temporarily suspend its values and obtain the diamonds within the list for use elsewhere. We, of course, need those diamonds to un-suspend the list.
\begin{lstlisting}
susp : L(A) -o ((L(1) -o L(A)) * L(1))
susp = lam x . rec x .
| nil => (lam _ . nil, nil)
| cons (d, x, r) => letp (f, m) = r in
  ((lam n . case (lunfold n) .
     | inj1 _ => nil
     | inj2 (d', _, n') => cons (d', x, f n')), cons (d, <>, m))
\end{lstlisting}
This example is important for the completeness proof, so it is worth stating its semantics: If $x \in \tpsem{A}^*$, then $\tmsem{\code{susp}}{\envnil}(x) = (f, m)$, where $f(m) = x$. Note that $m$ is the unique element of $\tpsem{\tpunit}^*$ with length $|x|$, so we recover $x$ as long as we give $f$ enough diamonds.
\end{example}

\subsection{Contraction}

The defining trait of an affine type system is that contraction, the use of a variable more than once, is not permitted. The importance of disallowing contraction in \lfpl is clear; if we could use a variable of type $\tpdiam$ more than once, then we could easily create lists of arbitrary length and break the non-size-increasing property. Despite this restriction, there are types which still enjoy contraction. We define the judgement $\hspace{-0.25em}\dfree{}$ on types as follows:
\begin{mathpar}
\ARule{\dfree{\tpunit}}{}
\and
\ARule{\dfree{\tpsum{A}{B}}}{\dfree{A} \\ \dfree{B}}
\and
\ARule{\dfree{\tptens{A}{B}}}{\dfree{A} \\ \dfree{B}}
\end{mathpar}
Going by rule induction on the judgement $\dfree{A}$ for a given type $A$, it is straightforward to define a closed term $\code{dup}_A : \tparr{A}{\tptens{A}{A}}$ that returns two copies of its input, effectively implementing contraction at $A$.

A common approach to safely reintroducing contraction into affine type systems is the exponential modality $!$. The syntax and typing rules for it might look something like:
\begin{mathpar}
\ARule{\types{!\ctx}{\code{wrap}(M)}{\; !A}}{\types{!\ctx}{M}{A}}\hspace{-1.3em}
\and
\ARule{\types{\ctx}{\code{unwrap}(M)}{A}}{\types{\ctx}{M}{\; !A}}\hspace{-1.3em}
\and
\ARule{\types{\ctxcons{\ctx}{x}{\; !A}}{[x/y]M}{B}}{\types{\ctxcons{\ctxcons{\ctx}{x}{\; !A}}{y}{\; !A}}{M}{B}}\hspace{-1.3em}
\end{mathpar}

By $!\ctx$ we mean the context obtained by replacing $x : A$ with $x : \; !A$ for every $x$ in $\ctx$. Commonly, this modality would also contain a rule for weakening, but LFPL is affine and thus already supports weakening at every type. It is tempting to add this modality to \lfpl; if we say that all values of type $!A$ have size $0$, then these operations are non-size-increasing. Unfortunately, that is not enough to guarantee polynomial-time soundness. Consider the following pseudocode for \lfpl (extended with the !-modality, where $\tpsem{!A} = \tpsem{A}$):
\begin{lstlisting}
fnExp : L(1) -o !(L(1) -o L(1)) -o !(L(1) -o L(1))
fnExp = lam l1 . lam f . rec l1
| nil => f
| cons (_, _, r) => wrap (lam x . unwrap(r) (unwrap(r) x))
\end{lstlisting}
Then, $\tmsem{\code{fnExp}}{\envnil}(n)(f)$ is the function $f^{2^{|n|}}$ that calls $f$ on its input $2^{|n|}$ times. Thus, we can use \code{fnExp} to implement functions that are not polynomial-time computable.

\section{Polynomial-Time Completeness}
\label{sec:completeness}

We now use the denotational semantics to state what it means for \lfpl to be complete with respect to polynomial-time computation on the natural numbers. The interpretation $\tpsem{\tplist{\tpunit}} = \{\ast\}^*$ of unit lists can be viewed as the set of unary-encoded natural numbers. However, it is also possible to use the binary representation $\tpsem{\tplist{\tpsum{\tpunit}{\tpunit}}}$ for $\mathbb N$.
Instead of choosing a particular encoding of the natural numbers, we
assume the digits of a natural number are encoded by a type $A$ such that
$\dfree{A}$. We require $\dfree{A}$ because plenty of polynomial-time
computable functions (even non-size-increasing ones) duplicate the
digits of their input, which is impossible in \lfpl if $A$ contains
diamonds.

\begin{theorem}[Size-Restricted Polynomial-Time Completeness]
\label{thm:completeness-concrete}
Suppose $\dfree{A}$. Then, for every polynomial-time computable function $f : \tpsem{A}^* \to \tpsem{A}^*$ such that $|f(x)| \le |x|$ for all $x \in \tpsem{A}^*$, there exists a closed term $M : \tparr{\tplist{A}}{\tplist{A}}$ such that $\tmsem{M}{\envnil} = f$.
\end{theorem}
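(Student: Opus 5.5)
We simulate a polynomial-time Turing machine $\mathcal M$ computing $f$ with time bound $p(n)$, where $n = |x|$. We may assume $p(n) = (n+1)^k$ for some fixed $k$ and that $\mathcal M$ uses at most $p(n)$ tape cells. The one resource available is the input list $x : \tplist{A}$: it carries exactly $n$ diamonds, and because $\dfree{A}$ its digits can be freely duplicated with $\code{dup}_A$. The construction has three stages: building an iteration engine of polynomial length, encoding configurations so that a single transition is a diamond-free function, and reading back an output of length at most $n$ using only the $n$ diamonds of $x$.

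The first stage builds an iterator of polynomial length. Applying \code{susp} from \pref{example:susp} to $x$ gives $(g, m)$, with $m : \tplist{\tpunit}$ of length $n$ and $g$ restoring $x$ from $m$. We never need those diamonds again, because we only want a copy of the digits. So we first copy $x$ into a diamond-free representation: recurse over $x$ and build a function-encoded or tensor-encoded structure that needs no diamonds.

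The difficulty is that diamond-free lists of unbounded length are not expressible. Types without $\tplist{}$ have bounded size in the semantics. The workaround is to use higher-order types, following the paper's mention of a ``stack-like data structure implemented with first-class functions and lists.'' Concretely, a configuration is represented by a value of type $\tparr{\tplist{\tpunit}}{\tplist{B}}$ for a suitable diamond-free $B$ encoding a tape symbol together with a state marker. Such a function takes diamonds, i.e.\ a unit list of length $n$, and returns the tape. This is the device \code{susp} already uses.

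One transition step then becomes the following operation. Given $m : \tplist{\tpunit}$ and a suspended tape $t : \tparr{\tplist{\tpunit}}{\tplist{B}}$, apply $t$ to $m$. Perform one local rewrite of the tape, which preserves length because the tape is padded. Re-suspend the result with \code{susp}, returning the diamonds and the new suspended tape.

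To get tape length $p(n)$ out of only $n$ diamonds, we use nested lists. The tape is a list of length $n$ whose entries are themselves tape-chunks, and a chunk's length is again bounded by $n$ diamonds by recursively suspending. The cleanest route is a $k$-level stack type $S_k$ defined by $S_0 = B$ and $S_{j+1} = \tparr{\tplist{\tpunit}}{\tplist{S_j}}$. Each level consumes the same $n$ diamonds only temporarily, so a single unary list of length $n$ suffices to unfold any level at the moment it is needed.

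The step function is implemented by structural recursion, threading the state and head information through the recursion using $\code{dup}$ on diamond-free data.

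Iteration $p(n)$ times comes from $k$ nested recursions over copies of $m$. Each copy is regenerated between uses by the suspend/unsuspend trick, so that every recursion body stays closed as \pref{rule:ty-ListE} requires. The recursion over $m$ produces a function of type $\tparr{C}{C}$, where $C$ is the state type $\tptens{\tplist{\tpunit}}{\text{config}}$, and the levels are composed. Finally, we extract the output. We decode the first $|f(x)| \le n$ output symbols from the final configuration and cons them using the $n$ diamonds in the final $m$. Excess diamonds are discarded, which is allowed since the system is affine.

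The main obstacle is the step-function and iteration stage under \pref{rule:ty-ListE}'s closed-body restriction. Every loop body must receive all of its resources (diamonds, tape, state) through the recursion variable $x_t$, so the carried type has to be higher-order: a function expecting the diamond list and the configuration. Getting $p(n)$ iterations from $k$ nested loops without duplicating diamonds is where care is needed. My plan is to prove by induction on $j$ that there is a closed term of type $\tparr{\tplist{\tpunit}}{\tparr{(\tparr{C}{C})}{\tptens{\tplist{\tpunit}}{(\tparr{C}{C})}}}$ that returns the unit list unchanged and the $n^j$-fold iterate of the given function. Here the function argument must itself be diamond-free in the sense that the diamond list is passed explicitly rather than captured. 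Correctness then follows by computing with the denotational semantics of \pref{fig:sem-tp-tm}.
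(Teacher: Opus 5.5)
Your architecture matches the paper's: your $S_{j+1} = \tparr{\tplist{\tpunit}}{\tplist{S_j}}$ is exactly the type in \pref{lem:stack-inductive}, and you iterate a diamond-free transition by recursion over unit lists. Two steps fail as written, however. The first is the claim that ``a single unary list of length $n$ suffices to unfold any level at the moment it is needed.'' The levels are unfolded inside one another, not one after another. Once you apply the top-level suspension to $m$, the resulting list of $n$ level-$(k-1)$ sub-structures holds all $n$ diamonds in its cons cells, and nothing is left to open a sub-structure with. Your unsuspend--rewrite--resuspend step therefore needs $kn$ diamonds live at once, so for $k \ge 2$ it cannot even reach a tape cell. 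This is why the paper's $k$-stacks consume $(\tplist{\tpunit})^k$: the tuple $m_{n,k}$ carries $nk$ diamonds, its first component opens the outer level, and the rest is passed to the sub-stack operations. It is also why the final simulation first uses $\code{divmod}_k$ to split the input into $k+1$ lists of length $m = \lfloor n/(k+1)\rfloor$. It then builds a stack for the enlarged bound $P'(m) = (k+1)(m+1) + P((k+1)(m+1)) \ge n + P(n)$. You need this accounting, or a substantially different representation (say, a zipper keeping everything off the focused path suspended), which the proposal does not supply. Relatedly, nested length-$n$ lists of single cells hold only $n^k < (n+1)^k$ cells, and none at all when $n=0$. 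The paper handles this with \pref{lem:stack-constant}, \pref{lem:stack-additive}, and the $m+1$ shift in $P'$.

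The second problem is your iteration lemma. No closed term of type $\tparr{\tplist{\tpunit}}{\tparr{(\tparr{C}{C})}{\tptens{\tplist{\tpunit}}{(\tparr{C}{C})}}}$ can return the $|m|$-fold iterate of its function argument. That argument is a variable of arrow type, and the recursor body cannot mention it. Threading it through the recursion result lets you apply it at most once, because no arrow type satisfies $\dfree{A}$ and \lfpl{} has no contraction. Whether the function ``captures'' diamonds is irrelevant: duplicating diamond-free functions is exactly the \code{fnExp} phenomenon the paper shows must be excluded. The iteration has to be done at the meta level, as in \pref{lem:linear-iterator}. For each closed $f : \tparr{\tptens{C}{\tplist{\tpunit}}}{\tptens{C}{\tplist{\tpunit}}}$, build a closed $f^{\sharp}$ by placing $f$ itself in the recursor body, which is legal because $f$ is closed. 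Each diamond is re-consed onto an accumulator, so every call of $f$ receives the full list. Then nest $\sharp$ for monomials and compose for sums, which also gives $(n+1)^k$ steps rather than $n^k$.
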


This theorem was originally proven for $A = \tpunit$~\cite{hofmannStrengthNonsizeIncreasing2002}. The constraint that $f$ does not increase the length of its input is necessary because of the non-size-increasing property enforced upon LFPL functions. As we see in \pref{sec:completeness-general-thm}, this constraint can be removed if we do not enforce that the output type of $M$ is an \lfpl list.

Proving \pref{thm:completeness-concrete} boils down to simulating the polynomial-time Turing machine that computes the function $f$, say with polynomial bound $P : \mathbb N \to \mathbb N$. This immediately presents two challenges. The first challenge is to represent the tape of the machine, which for an input list of size $n$ could contain up to $P(n) + n$ values by the time the machine halts. We only have $n$ diamonds from the input list, yet we have to store up to $P(n) + n$ elements.

The second challenge is to simulate the Turing machine for $P(n)$ steps. At first glance, it seems that \lfpl is only suited for performing $kn$ iterations given $n$ diamonds and some constant $k$. This can be achieved for $k = 1$ by iterating down a list and building it back up during the iteration process, and then it is easy to repeat this for any constant $k$ number of times. This is useful but does not help to perform, say, $n^2$ iterations.

Both of these challenges are somewhat unique to \lfpl's completeness
proof compared to completeness proofs for other languages for P.
For example, the challenge in the proof of the Cons-free system from
the work of Jones~\cite{Jones2001} lies mostly in simulating the
polynomially many iterations of the machine’s step function but not in
providing sufficient space for the tape.
This is similar to the completeness proof for \lfplplus and the
extension of \lfpl considered by
Atkey~\cite{atkeyPolynomialTimeDependent2024}, which provide
lists without size restrictions or support for iteration.

\subsection{The Bounded Stack Data Structure}
\label{sec:completeness-bounded-stack-definition}

We first address the challenge of simulating the tape of the Turing machine. A common approach to representing the tape of the Turing machine in a functional language is to use a data structure consisting of a head element and two stacks, each representing one side of the tape relative to the head element~\cite{Okasaki99}. If we use the type $A$ for tape symbols, then the type of our tape data structure could be $\tptens{\tplist{A}}{\tptens{A}{\tplist{A}}}$. However, in \lfpl, we cannot actually use the above data structure.
Given $n$ diamonds, available in an input list of length $n$, it can only hold $n + 1$ values. Instead, we construct a type $S(A)$ that depends on the bounding polynomial $P$ and supports a stack-like interface, as long as it does not have to hold more than $P(n)$ elements. Then, we use a value of type $\tptens{S(A)}{\tptens{A}{S(A)}}$ to encode the tape.

We have in fact already seen the first key insight into how to construct $S(A)$ in \pref{example:susp}. It shows that we do not need any diamonds to store data; we just need them to interact with it. Using this idea, we can design our data structure so that it does not store diamonds and its push/pop functions temporarily borrow diamonds to interface with the suspended data.

The second key insight is that we do not need to un-suspend the entire data structure to interface with it.
Since both push and pop work at the front of the stack, we just need to be able to un-suspend enough of our data structure to expose the front.
The number of diamonds we need for these operations depends on $P$ and $n$.

\begin{definition}[Bounded Stack Interface]
Given $k \in \mathbb N$ and an element type $A$, a $k$-stack implementation is a type $S$ together with the following closed terms:
\begin{itemize}
  \item $\code{empty} : S$
  \item $\code{push} : \tparr{(\tplist{\tpunit})^k}{\tparr{\tptens{A}{S}}{\tptens{(\tplist{\tpunit})^k}{(\tptens{S}{(\tpsum{A}{\tpunit})})}}}$
  \item $\code{pop} : \tparr{(\tplist{\tpunit})^k}{\tparr{S}{\tptens{(\tplist{\tpunit})^k}{(\tptens{S}{(\tpsum{\tpunit}{A})})}}}$
\end{itemize}
\end{definition}

By $(\tplist{\tpunit})^k$ we mean the type of $k$-tuples $\tplist{\tpunit} \otimes \cdots \otimes \tplist{\tpunit}$. The diamonds stored in each list are used to un-suspend parts of the data structure and returned after the respective operation. The reason for using a $k$-tuple of lists instead of just one big list becomes apparent in the proof of \pref{lem:stack-inductive}, where the use of such a tuple greatly simplifies the implementation.
Both $\code{push}$ and $\code{pop}$ also return a stack and a sum. The left branch of the sum is intended to signal failure of the operation (in which case the input stack is returned unmodified), and the right branch signals success (in which case the returned stack has been successfully modified).
For $\code{push}$, which takes in an element and a stack, failure means that the input stack is full, and so the input element is returned to us. For $\code{pop}$, failure means the input stack is empty.

This interface is inspired by the array-like data structure used in Hofmann's original completeness proof~\cite{hofmannStrengthNonsizeIncreasing2002}. The key difference is that his array data structure has read and write operations requiring an index, given as a binary \lfpl number (i.e., $\tplist{\tpsum{\tpunit}{\tpunit}}$). The extra work of dealing with this index significantly complicates both the implementation and use of the data structure.
Arrays are more general, but our stack is easier to implement and sufficient for representing a Turing machine's tape, so we believe it is the better choice of data structure for the completeness theorem.

To formalize our intuition of how $\code{push}$ and $\code{pop}$ should behave, we use the denotational semantics. We first define the element $m_{n,k} \in \tpsem{(\tplist{\tpunit})^k}$ to be the tuple $(\ell, \ldots, \ell)$, where $\ell$ is the unique element of $\tpsem{\tplist{\tpunit}}$ such that $|\ell| = n$. This way, $m_{n,k}$ contains exactly $nk$ diamonds.

\begin{definition}[Bounded Stack Correctness]
\label{def:stack-correctness}
Let $k \in \mathbb N$. Suppose $(S, \code{empty}, \code{push}, \code{pop})$ is a $k$-stack implementation with element type $A$. Given a function $B : \mathbb N \to \mathbb N$, we say this implementation is correct and bounded by $B$ if the following holds for all $n \in \mathbb N$:
\begin{itemize}
  \item There exists a relation $V_n \subseteq \tpsem{S}$ of valid stack states.
  \item There exists a function $I_n : \tpsem{S} \to \tpsem{A}^*$ that returns the list of items in the stack.
  \item Let $s = \tmsem{\code{empty}}{\ctxnil}$. Then, $V_n(s)$ holds and $I_n(s) = \emptylist$.
  \item Let $f = \tmsem{\code{push}}{\ctxnil}$. Then, for all $x \in \tpsem{A}$ and $s \in \tpsem{S}$ such that $V_n(s)$ holds, we have:
  \begin{itemize}
    \item If $|I_n(s)| = B(n)$ then $f(m_{n,k})(x, s) = (m_{n,k}, s, (1, x))$.
    \item If $|I_n(s)| < B(n)$ then $f(m_{n,k})(x, s) = (m_{n,k}, s', (2, \ast))$, where $V_n(s')$ holds and $I_n(s') = x :: I_n(s)$.
  \end{itemize}
  \item Let $f = \tmsem{\code{pop}}{\ctxnil}$. Then, for all $s \in \tpsem{S}$ such that $V_n(s)$ holds:
  \begin{itemize}
    \item If $I_n(s) = \emptylist$ then $f(m_{n,k})(s) = (m_{n,k}, s, (1, \ast))$.
    \item If $I_n(s) = x :: \ell$ then $f(m_{n,k})(s) = (m_{n,k}, s', (2, x))$, where $V_n(s')$ holds and $I_n(s') = \ell$.
  \end{itemize}
\end{itemize}
\end{definition}

The intended use of the $k$-stack data structure is to choose $n \in \mathbb N$ and always pass $m_{n,k}$ to the $\code{push}$ and $\code{pop}$ operations.
\pref{def:stack-correctness} ensures that, in this use case, the stack can hold at most $B(n)$ items. Behavior is otherwise undefined. Whenever $\code{push}$ and $\code{pop}$ receive $m_{n,k}$ as input, they return $m_{n,k}$ as the first element of their output, thus making it available for future stack operations.

\subsection{Bounded Stack Implementations}
\label{sec:completeness-bounded-stack-implementation}

We now provide an implementation of a bounded stack and ways to build more implementations on top of pre-existing ones.
This ultimately leads to a stack bounded by our choice
of a polynomial $P$, which is exactly what we need to represent the
tape of a polynomial-time Turing machine.
While most of the proofs in this section focus on the high-level
intuition and avoid writing or reasoning about concrete \lfpl
programs, these programs and other details are formalized in the mechanization~\cite{mechanization}.

\begin{lemma}[Constant Stack Construction]
\label{lem:stack-constant}
For any element type $A$ and $c \in \mathbb N$, there exists a correct $0$-stack implementation which is bounded by $B(n) = c$.
\end{lemma}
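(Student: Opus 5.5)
We construct the stack directly as a fixed-size buffer of $c$ optional slots, so no diamonds are ever needed. For $k=0$, the type $(\tplist{\tpunit})^0$ is the empty tuple, which we read as $\tpunit$. We take
\[ S \;=\; (\tpsum{A}{\tpunit})^c, \]
the $c$-fold tensor of optional elements, with $S=\tpunit$ when $c=0$. The intended invariant is that the filled slots form a prefix: slots $1,\dots,j$ hold $\tminj{1}{a_i}$ and slots $j+1,\dots,c$ hold $\tminj{2}{\tmnull}$. Accordingly, we define $V_n(s)$ to hold exactly when $s$ has this prefix shape (independently of $n$), and we define $I_n(s) = a_1 :: \cdots :: a_j :: \emptylist$. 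The term $\code{empty}$ is the tuple of $c$ copies of $\tminj{2}{\tmnull}$, which clearly satisfies $V_n$ and has $I_n = \emptylist$.

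I will define $\code{push}$ and $\code{pop}$ by induction on $c$, writing $S_c = \tptens{(\tpsum{A}{\tpunit})}{S_{c-1}}$. It is cleaner to first build auxiliary terms without the dummy $(\tplist{\tpunit})^0$ argument, and afterwards wrap them so that they accept it and return it unchanged.

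For $c=0$, $\code{push}$ returns $(s,\tminj{1}{x})$, signalling failure, and $\code{pop}$ returns $(s,\tminj{1}{\tmnull})$, since the stack is always empty.

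For $c>0$, we destructure $s$ as $(h,s')$.
\begin{itemize}
\item \textbf{Push.} Case on $h$. If $h=\tminj{2}{\tmnull}$, the stack is empty, since by the invariant the rest is also empty. We return $((\tminj{1}{x}, s'), \tminj{2}{\tmnull})$. If $h=\tminj{1}{a}$, we shift: recursively push $a$ onto $s'$ with $\code{push}_{c-1}$.
  \begin{itemize}
  \item On success, we return $((\tminj{1}{x}, s''), \tminj{2}{\tmnull})$.
  \item On failure, the recursive call hands back $a$ and $s'$ unmodified, and we return $((\tminj{1}{a},s'),\tminj{1}{x})$.
  \end{itemize}
\item \textbf{Pop.} Case on $h$. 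If $h=\tminj{2}{\tmnull}$, we return failure with $s$ reconstructed. Otherwise $h=\tminj{1}{a}$, and we call $\code{pop}_{c-1}$ on $s'$.
  \begin{itemize}
  \item On success with $b$ and new tail $s''$, we return $((\tminj{1}{b},s''),\tminj{2}{a})$.
  \item On failure, we return $((\tminj{2}{\tmnull},s'),\tminj{2}{a})$.
  \end{itemize}
\end{itemize}
All of these terms are affine. Each variable is used once per branch of a case, and the rebuilt $\tminj{2}{\tmnull}$ requires no resources. No $\code{rec}$ is used, so the restriction in \pref{rule:ty-ListE} never arises.

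Correctness is then proved by induction on $c$ using the semantic clauses of \pref{fig:sem-tp-tm}. For push, the induction hypothesis gives the following. If $|I(s')|=c-1$, the recursive push fails, and then $|I(s)|=c=B(n)$, so we correctly fail returning $x$ and $s$. Otherwise $I(s'') = a :: I(s')$, hence the new items are $x :: a :: I(s') = x :: I(s)$, with the prefix invariant preserved. Pop is symmetric: the front element $a$ is returned, and the new items are $I(s')$ either way, where in the empty-tail case $I(s')=\emptylist$.

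The main obstacle is purely bookkeeping. One part is making sure the failure branches return the input stack unchanged, as \pref{def:stack-correctness} demands; this is why the recursive calls must return the untouched argument on failure, which the induction hypothesis guarantees. The other part is stating the invariant $V_n$ so that it is preserved. I will state $V_n$ inductively: either the head is empty and so is the tail, or the head is full and the tail satisfies $V_n$.
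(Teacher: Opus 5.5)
Your proof is correct, and it rests on the same idea as the paper's: a fixed tuple $(1+A)^c$ of optional slots, a prefix-shaped validity invariant, and operations that need no iteration and hence no diamonds. Your $\tpsum{A}{\tpunit}$ just swaps the injections.

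Where you differ is the slot discipline. The paper puts the empty slots in front and the filled ones at the back, and the top of the stack is the first filled slot. Push writes into the last empty slot and pop clears the first filled one, so stored elements never move; each operation is just a bounded case analysis that locates the boundary. Viewed inductively, this is the Additive Stack Construction (\pref{lem:stack-additive}) iterated over $c$ one-slot stacks: push tries the back part first, pop tries the front part first. That is why the paper later describes the additive operations as ``very similar'' to the constant stack with $c=2$.

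You instead keep the top fixed at slot~1, so every push and pop shifts all stored elements by one position. Your version is equally valid and self-contained. The price is this data movement, and the fact that your failure branches must reassemble the input from whatever the recursive call hands back, which is where you rely on the induction hypothesis that failure returns the stack unchanged.

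One small bookkeeping point: in your pop failure branch you must rebuild the result from the stack returned by $\code{pop}_{c-1}$, not from $s'$ itself, since $s'$ was consumed by that call. The two coincide semantically by the induction hypothesis, exactly as you already noted for push.
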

\begin{proof}
We take the implementation type to be $S = (1 + A)^c$. Intuitively, a left injection represents an open space in the stack, and a right injection represents a stored element. A stack $(x_1, \ldots, x_c) \in \tpsem{S}$ is considered valid when there is some index $0 \le j \le c$ such that $x_i = (1, \ast)$ for $1 \le i \le j$, and $x_i$ is a right injection for $j < i \le c$. 

To find the head of the stack, find the first $i$ such that $x_i = (2, a)$ for some $a \in \tpsem{A}$; the value $a$ is the head element. If $i$ is $1$, then the stack is full. If no such $i$ exists, the stack is empty. Since $c$ is a constant, this procedure of locating the stack head does not require iteration, so no diamonds are required for $\code{push}$ and $\code{pop}$.
\end{proof}

The proof of \pref{lem:stack-constant} already contains intuition for how to construct stacks with non-constant bounds. 
In the proof of \pref{lem:stack-inductive}, instead of storing data in a fixed-size tuple, we store it in a list. That list can be ``suspended'' as seen in \pref{example:susp}, so that it does not require any diamonds to store.

\begin{lemma}[Inductive Stack Construction]
\label{lem:stack-inductive}
Fix an element type $A$ and some $k \in \mathbb N$. Suppose $(S, \code{empty}, \code{push}, \code{pop})$ is a $k$-stack implementation with element type $A$ that is correct and bounded by $B : \mathbb N \to \mathbb N$. Then, there is a $(k + 1)$-stack implementation with element type $A$ that is correct and bounded by $B'(n) = n B(n)$. 
\end{lemma}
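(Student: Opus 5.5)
We build the new stack as a suspended list of old stacks: a list of length at most $n$ whose entries are $k$-stacks, each holding at most $B(n)$ elements, giving capacity $nB(n)$. Concretely, take
\[ S' \;=\; \tptens{(\tparr{\tplist{\tpunit}}{\tplist{S}})}{\tplist{\tpunit}}, \]
the output type of \code{susp} from \pref{example:susp} instantiated at element type $S$. The extra (i.e.\ $(k+1)$-st) component $\ell_0$ of the diamond tuple $m_{n,k+1}$ is used only to un-suspend and re-suspend this outer list. The first $k$ components form exactly $m_{n,k}$ and are threaded through the inner \code{push}/\code{pop} calls unchanged. We set $\code{empty}' = (\tmlam{\_}{\tmnil}, \tmnil)$, i.e.\ $\code{susp}$ applied to the empty list.

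For \code{push}, the plan is as follows. First apply the suspension function to $\ell_0$ to recover the outer list $L = [s_1,\dots,s_j]$ of inner stacks; this consumes the diamonds of $\ell_0$, and the spare suffix of $\ell_0$ is kept. Then perform a single structural recursion over $L$ (whose body is closed, since the inner \code{push} is a closed term) that tries to push $x$ onto the first inner stack. If that succeeds, the remaining stacks are rebuilt unchanged. If the first stack is full, the empty slot is placed in front: when $j<n$, cons a fresh inner stack $\code{empty}$ containing $x$, using a spare diamond from the leftover part of $\ell_0$; when $j=n$, report failure and return $x$.

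To keep the recursion simple, we maintain the invariant that only the head inner stack may be non-full. The recursion then only has to inspect the head, which amounts to a \code{lunfold} from \pref{example:list-case} followed by a case split, and no iteration is needed at all. \code{pop} is symmetric: pop from the head inner stack; if the result is empty, drop that stack, return its diamond to the spare pool, and pop again. The invariant that every non-head stack is full and nonempty guarantees that this second pop succeeds, and that a single retry suffices. Finally, re-suspend via \code{susp} to obtain the new $S'$ value together with a unit list. That list has length $j'$, which we append back to the spare suffix so that we return a list of length exactly $n$.

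Correctness is checked against \pref{def:stack-correctness} using:
\begin{itemize}
\item $V'_n((f,m))$: $|m|\le n$, each $s_i$ in $f(m)$ satisfies $V_n$, the inner stacks $s_2,\dots,s_j$ are full ($|I_n(s_i)|=B(n)$), and the head stack is nonempty;
\item $I'_n$: the concatenation of the $I_n(s_i)$.
\end{itemize}
The remaining cases of the correctness definition then follow from the inner correctness hypothesis, since the inner operations always receive $m_{n,k}$ and return it. The main obstacle I expect is the diamond bookkeeping for $\ell_0$. We must split $\ell_0$ into a part of length $|m|$, used to un-suspend, and a remainder, and later glue the re-suspended unit list back onto that remainder, all inside \lfpl's affine discipline. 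Moreover, because $f$'s behaviour is determined only on inputs of length exactly $|m|$, the split must be exact. I would first implement a helper that takes a length-$n$ unit list and a suspended list and returns the unsuspended list plus the leftover diamonds. This can be done by recursion on the suspension witness $m$ paired with $\ell_0$, or more simply by storing $m$ itself as the guide for how many diamonds to peel off.
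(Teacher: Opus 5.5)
There is a genuine gap, and it lies in the diamond bookkeeping you flagged. Your type $S' = \tptens{(\tparr{\tplist{\tpunit}}{\tplist{S}})}{\tplist{\tpunit}}$ stores the witness $m$ inside the stack, and $m$ is a unit list carrying one diamond per inner stack.

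When \code{push} has to open a fresh inner stack, you hold $n + j$ diamonds: $n$ from $\ell_0$ and $j$ from the stored $m$. You must produce $n + (j+1)$ diamonds: $\ell_0$ restored to length exactly $n$, plus a stored witness of length $j+1$. The re-suspended list $m'$ cannot be both kept in the new $S'$ value and appended to the spare suffix, and the old $m$ is one diamond short.

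Cleverer code cannot fix this. Since \code{push} must return $m_{n,k+1}$ intact, \pref{thm:non-size-increasing} (transported to the denotational statement via \pref{thm:coherence}) forces $\valsize{s'} \le \valsize{x} + \valsize{s}$ for the returned stack $s'$. Take $A = \tpunit$, or any $\tpdiam$-free element type such as the tape alphabet used later. Then $\code{empty}'$ has size $0$, and so does the result of the first push. Yet your invariant gives that result a witness of length at least $1$, hence size at least $1$. A correct representation must therefore store no diamonds at all.

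The paper gets this by taking $S' = \tparr{\tplist{\tpunit}}{\tplist{S}}$ and keeping exactly $n$ sub-stacks at all times. Un-suspending then consumes all of $\ell_0$, and re-suspending with \code{susp} yields a unit list of length exactly $n$ to hand back. The price is a looser invariant (empty sub-stacks, then one unconstrained one, then full ones) and a real traversal. \code{push} tries $s_n, s_{n-1}, \dots$ in the recursor's natural order, while \code{pop} tries $s_1, s_2, \dots$ and needs the \code{revAppend}-style trick.

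Alternatively, you can keep your variable-length list and head-only invariant if you drop the stored $m$. Instead, use a variant of \code{susp} whose suspended function has type $\tparr{\tplist{\tpunit}}{\tptens{\tplist{S}}{\tplist{\tpunit}}}$ and also returns the unused suffix of its argument, with base case $\tmlam{n}{\tmpair{\tmnil}{n}}$. The original suspended function does accept longer inputs, but it weakens away the surplus diamonds; that is why you cannot simply feed it all of $\ell_0$. On that route, also handle $B(n) = 0$: a freshly opened inner stack then rejects $x$ and must be discarded, with its diamond returned.
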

\begin{proof}
We choose $S' = \tparr{\tplist{\tpunit}}{\tplist{S}}$, the type of suspended lists with element type $S$ (called ``sub-stacks''), to be our implementation type. Let $\ell$ be the unique element of $\tpsem{\tplist{\tpunit}}$ with $|\ell| = n$. To unsuspend a stack $s \in \tpsem{S'}$ we just call $s(\ell)$. We say $s$ is valid whenever $s(\ell) = s_1 :: \cdots :: s_n :: \emptylist$, where $s_i$ is valid for all $1 \le i \le n$ and there exists some index $0 \le j \le n$ such that $s_i$ is empty for $1 \le i < j$, the substack $s_j$ is unconstrained, and $s_i$ is full for $j < i \le n$. The items $I'(S')$ are the appended item lists $I(s_i)$ of each sub-stack.

It remains to correctly implement the three stack operations. Let us call them $\code{empty}'$, $\code{push}'$, and $\code{pop}'$. The operation $\code{empty}'$ is the function of type $\tparr{\tplist{\tpunit}}{\tplist{S}}$ which always returns the empty list.
When implementing $\code{push}'$, we are given as our source of diamonds an input of type $(\tplist{\tpunit})^{k+1} = \tptens{\tplist{\tpunit}}{(\tplist{\tpunit})^k}$, so we can use the first element of this tuple, call it $\ell$, to un-suspend the input stack $s \in \tpsem{S'}$ and obtain a list of sub-stacks $s(\ell) = s_1 :: \cdots :: s_n :: \emptylist$. Starting from the last element of this list, $s_n$, attempt pushing to it using the function $\code{push}$. If this succeeds, we are done. Otherwise, $s_n$ must be full, so we should try $s_{n-1}$. We can do this all the way up to $s_1$; if push still fails on $s_1$ then the whole stack must be full. This process can be implemented using \lfpl's list recursor. Afterwards, we use \pref{example:susp} to re-suspend the modified list and return the results.

The implementation of $\code{pop}'$ is similar, but instead we start from $s_1$. If popping $s_1$ fails, it must be empty, so we should try $s_2$. We go all the way down the list, and if popping $s_n$ still fails, then the whole stack must be empty. Note this is nontrivial to implement with the list recursor, since starting from $s_1$ and going down to $s_n$ is the reverse of the recursor's execution order. We solve this with a similar version of the trick used in \pref{example:reverse}.
\end{proof}

Now that we have established the main result of this subsection, a straightforward induction combines the above two lemmas, producing stack implementations with monomial bounds.

\begin{lemma}[Monomial Stack Construction]
\label{lem:stack-monomial}
Given $c,k \in \mathbb N$, there exists a $k$-stack implementation with element type $A$ that is correct and bounded by $B(n) = cn^k$.
\end{lemma}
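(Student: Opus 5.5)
We argue by induction on $k$, keeping $c$ and the element type $A$ fixed throughout.

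\emph{Base case $k = 0$.} We need a correct $0$-stack implementation bounded by $B(n) = c n^0 = c$. This is exactly \pref{lem:stack-constant}, applied with the constant $c$. We adopt the convention $n^0 = 1$ for all $n$, including $n = 0$, so that $c n^0 = c$ everywhere; this matches how the bound is used at $n = 0$.

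\emph{Inductive step.} Assume we have a correct $k$-stack implementation bounded by $B(n) = c n^k$. Applying \pref{lem:stack-inductive} gives a correct $(k+1)$-stack implementation bounded by
\[
B'(n) = n \cdot c n^k = c n^{k+1},
\]
as required.

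The only point needing care is that the bounds must agree as functions of $n$, including at $n = 0$. For $k \ge 1$ both sides equal $0$ at $n = 0$. For the step from $k = 0$ to $k = 1$, $B'(0) = 0 \cdot c = 0 = c \cdot 0^1$, so no mismatch arises there either. Since \pref{lem:stack-inductive} is stated for arbitrary $B$ and arbitrary element type, no side conditions remain.

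I expect no real obstacle in this argument. The substantive work is already contained in \pref{lem:stack-constant} and \pref{lem:stack-inductive}. In the mechanization, the main bookkeeping is to define the implementation type by recursion on $k$: $S_0 = (1 + A)^c$ and $S_{k+1} = \tparr{\tplist{\tpunit}}{\tplist{S_k}}$. The terms and the validity and item functions $V_n, I_n$ are then produced by the same recursion.
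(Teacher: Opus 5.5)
Your proposal is correct and matches the paper's argument: the paper proves this lemma by a straightforward induction on $k$, using \pref{lem:stack-constant} for the base case $k = 0$ and \pref{lem:stack-inductive} for the step, exactly as you do. Your check that $n \cdot c n^k = c n^{k+1}$ holds for every $n$, including $n = 0$, is all the inductive step needs.
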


Since every polynomial $P : \mathbb N \to \mathbb N$ is eventually bounded by some monomial $B(n) = cn^k$, it is technically enough to stop here, which is what Hofmann's original proof does~\cite{hofmannStrengthNonsizeIncreasing2002}. However, this introduces an annoying edge case when $n = 0$ because $P(0)$ could be nonzero but $B(0) = 0$ for any choice of $c$ and $k$. Due to the complexity of the array data structure, we suspect it is not worth the trouble to avoid this problem. Our simpler stack data structure allows us to sidestep this annoyance and combine monomially bounded stacks to get polynomially bounded stacks with a few more relatively straightforward implementations.

\begin{lemma}[Weakened Stack Construction]
\label{lem:stack-weakened}
Fix an element type $A$ and some $k \in \mathbb N$. Suppose $(S, \code{empty}, \code{push}, \code{pop})$ is a $k$-stack implementation with element type $A$ that is correct and bounded by $B : \mathbb N \to \mathbb N$. Then, there is a $(k + 1)$-stack implementation with element type $A$ that is correct and bounded by $B$. 
\end{lemma}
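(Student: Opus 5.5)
We keep the same implementation type $S$ and the same $\code{empty}$, and we wrap $\code{push}$ and $\code{pop}$ so that they accept one extra list of diamonds. The extra list is never inspected and is simply handed back.

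Concretely, we define
\[
\code{push}' = \tmlam{m}{\tmlam{p}{\tmletp{m}{\ell}{m_0}{\tmletp{\tmapp{\tmapp{\code{push}}{m_0}}{p}}{m_1}{r}{\tmpair{\tmpair{\ell}{m_1}}{r}}}}},
\]
writing the $(k+1)$-tuple as $\tptens{\tplist{\tpunit}}{(\tplist{\tpunit})^k}$, as in the proof of \pref{lem:stack-inductive}. We define $\code{pop}'$ in the same way. Typing is routine. The variable $\ell$ is used exactly once, in the returned tuple, and $m_0$ is consumed by the inner call. Every other variable is used at most once, which is fine because weakening is admissible in the affine system. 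No list recursion is involved, so the restriction in \pref{rule:ty-ListE} on free variables in the recursor body never comes up.

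For correctness we reuse the witnesses $V_n$ and $I_n$ given by the correctness of the original $k$-stack. The conditions for $\code{empty}$ are unchanged. For $\code{push}'$, the denotational clauses of \pref{fig:sem-tp-tm} give the following. Decomposing $m_{n,k+1}$ yields $\ell$, the length-$n$ unit list, together with $m_{n,k}$. The inner call returns $(m_{n,k}, s', r)$, where $s'$ and $r$ are exactly as described in \pref{def:stack-correctness} for bound $B$. The outer term then returns $((\ell, m_{n,k}), s', r) = (m_{n,k+1}, s', r)$. The argument for $\code{pop}'$ is identical.

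The only point that needs care is the bookkeeping identity $m_{n,k+1} = (\ell, m_{n,k})$, which holds by definition of $m_{n,k}$ once the tuple convention is fixed. I expect no real obstacle beyond that. In the mechanization, the main work is likely to be matching the nesting convention for $k$-tuples. If the convention puts the new component on the other side, we destructure on that side instead.
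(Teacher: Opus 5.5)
Your proposal is correct and is essentially the paper's proof. You reuse $S$, $\code{empty}$, $V_n$ and $I_n$, split $m_{n,k+1} = (\ell, m_{n,k})$, pass $m_{n,k}$ to the original $\code{push}$/$\code{pop}$, and hand $\ell$ back untouched (one cosmetic slip: in your term $r$ is bound to the pair of type $\tptens{S}{(\tpsum{A}{\tpunit})}$, but in the semantic argument you use $r$ for the sum component alone, which does not affect the argument).
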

\begin{proof}
This construction is relatively obvious; if $\code{push}$ and $\code{pop}$ need only $kn$ diamonds, accepting $(k + 1)n$ diamonds instead is no problem. We can reuse the implementation type $S$ as well as the code for $\code{empty}$. For the new stack operations $\code{push}'$ and $\code{pop}'$, they simply make use of the fact that $m_{n,k+1} = (\ell,m_{n,k})$ for some $\ell \in \tpsem{\tplist{\tpunit}}$. They just ignore $\ell$ and call $\code{push}$ and $\code{pop}$ (respectively) with $m_{n,k}$.
\end{proof}

The following lemma, which shows how to combine two stacks into one stack bounded by the sum of their bounds, is the main tool by which we are able to construct stacks with arbitrary polynomial bounds from stacks with monomial bounds. It is straightforward to do this construction for our stack data structure, but it is not clear how to do it for Hofmann's array data structure, 
which only yields monomially bounded arrays~\cite{hofmannStrengthNonsizeIncreasing2002}.

\begin{lemma}[Additive Stack Construction]
\label{lem:stack-additive}
Fix an element type $A$ and some $k \in \mathbb N$. For $i \in \{1,2\}$, suppose $(S_i, \code{empty}_i, \code{push}_i, \code{pop}_i)$ is a $k$-stack implementation with element type $A$ that is correct and bounded by $B_i : \mathbb N \to \mathbb N$. Then, there is a $k$-stack implementation with element type $A$ that is correct and bounded by $B(n) = B_1(n) + B_2(n)$. 
\end{lemma}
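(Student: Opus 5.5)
We take the implementation type to be $S = \tptens{S_1}{S_2}$ and $\code{empty} = \tmpair{\code{empty}_1}{\code{empty}_2}$. Given $n$, a pair $(s_1, s_2)$ is valid when both $V^1_n(s_1)$ and $V^2_n(s_2)$ hold and, in addition, either $s_1$ is empty ($I^1_n(s_1) = \emptylist$) or $s_2$ is full ($|I^2_n(s_2)| = B_2(n)$). The item function is the concatenation $I_n(s_1, s_2) = I^1_n(s_1) \mathbin{+\!\!+} I^2_n(s_2)$. The second component acts as the bottom part of the stack and the first as the top. This mirrors the invariant in the proof of \pref{lem:stack-inductive}, specialised to two sub-stacks that need not have the same type. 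Validity and emptiness of $\code{empty}$ are immediate from the corresponding facts for $\code{empty}_1$ and $\code{empty}_2$.

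For $\code{push}$, given the diamond tuple $m$, an element $x$, and $(s_1, s_2)$, we first call $\code{push}_2$ with $m$ on $(x, s_2)$. This returns $m$, an updated $s_2'$, and a sum. On a right injection we return $(m, (s_1, s_2'), (2,\ast))$. On a left injection $(1, x)$ we get $x$ back and call $\code{push}_1$ with the returned $m$ on $(x, s_1)$, then repackage its result as $(m, (s_1', s_2'), r)$.

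For $\code{pop}$ we go in the opposite order. We try $\code{pop}_1$ first, and if it reports failure we fall back on $\code{pop}_2$.

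Affinity is respected throughout. The tuple $m$ is threaded linearly through at most two sequential calls, $x$ is passed along exactly once in each case branch, and no iteration is needed, so the terms are closed and well typed using only \pref{rule:ty-SumE} and \pref{rule:ty-TensorE}.

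Correctness is a case analysis using \pref{def:stack-correctness} for the two components.

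For push, if $s_2$ is not full, then by the invariant $s_1$ is empty. Pushing onto $s_2$ therefore succeeds, and the new items are $x :: I^2(s_2) = x :: I(s)$. The invariant is preserved because $s_1$ is still empty. If $s_2$ is full, $\code{push}_2$ returns $s_2$ unchanged together with $x$, and we fall through to $\code{push}_1$. This succeeds exactly when $|I^1(s_1)| < B_1(n)$, which is equivalent to $|I(s)| < B_1(n) + B_2(n)$. The new items are $x :: I^1(s_1) \mathbin{+\!\!+} I^2(s_2)$, and the invariant holds because $s_2$ is still full. On failure both components are returned unmodified, as the definition requires.

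For pop, if $s_1$ is nonempty, then $\code{pop}_1$ returns its head, which is the head of $I(s)$. Since $s_1$ was nonempty, $s_2$ was full, so the invariant is preserved. If $s_1$ is empty, it is returned unchanged and $\code{pop}_2$ decides the outcome. After a successful pop $s_1$ is still empty, so the invariant holds.

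The main obstacle, though a mild one, is choosing the invariant so that the full/empty conditions line up exactly with the bound $B_1 + B_2$. In particular, we must check that a failed $\code{push}_2$ forces $s_1$ to be the only place left to push, and that the returned diamond tuple equals $m_{n,k}$ at every step. Both follow directly from the correctness clauses for the component stacks.
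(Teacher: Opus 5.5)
Your proof is essentially the paper's. You use the same type $\tptens{S_1}{S_2}$, the same invariant ($s_1$ nonempty only when $s_2$ is full), the same order of operations (push tries $s_2$ first, pop tries $s_1$ first), and the concatenated item list. One small point that both you and the paper gloss over: \pref{def:stack-correctness} does not say valid states hold at most $B_i(n)$ items, so to read ``$s_2$ not full'' as $|I^2_n(s_2)| < B_2(n)$ you should add $|I^i_n(s_i)| \le B_i(n)$ to the combined validity predicate, which every operation preserves.
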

\begin{proof}
We take the implementation type to be $S = \tptens{S_1}{S_2}$. A stack $(s_1, s_2) \in \tpsem{S}$ is considered valid when $s_1$ and $s_2$ are valid, and $s_1$ is only nonempty when $s_2$ is full. Intuitively, $s_1$ contains the first section of our stack and $s_2$ contains the second. The operations are very similar to that of \pref{lem:stack-constant} with $c = 2$. To pop the stack, we first try popping $s_1$. If that fails, it must have been empty, so we pop $s_2$. To push some $x \in \tpsem{A}$ onto this stack, we first try pushing to $s_2$. If that fails, it means $s_2$ is full, so we push to $s_1$.
\end{proof}

\begin{lemma}[Polynomial Stack Construction]
\label{lem:stack-polynomial}
Given a polynomial $P : \mathbb N \to \mathbb N$ with degree $k$, there exists a $k$-stack implementation with element type $A$ that is correct and bounded by $P$.
\end{lemma}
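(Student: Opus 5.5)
The plan is to write $P(n) = \sum_{i=0}^{k} c_i n^i$ with $c_i \in \mathbb N$, build one stack for each monomial, and then sum them. Since $P$ maps $\mathbb N$ to $\mathbb N$, its coefficients need not be natural numbers; for instance $P(n) = n(n+1)/2$ has fractional coefficients. I will therefore assume coefficients in $\mathbb N$. This is presumably what ``polynomial'' means here, since any polynomial bound on a Turing machine's running time can be replaced by one of this form. Choosing the implementation in the degree-$k$ case is then a matter of bookkeeping.

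For each $0 \le i \le k$, \pref{lem:stack-monomial} gives an $i$-stack implementation with element type $A$ that is correct and bounded by $c_i n^i$. For $i = 0$ this is the constant stack of \pref{lem:stack-constant} with bound $c_0$. This case is worth isolating, because it is exactly what avoids the $n = 0$ edge case discussed above: we get $B(0) = c_0 = P(0)$. Next, I apply \pref{lem:stack-weakened} $k - i$ times to the $i$-th stack. This produces a $k$-stack implementation that is still correct and still bounded by $c_i n^i$. After this step, all $k+1$ implementations share the same diamond input type $(\tplist{\tpunit})^k$.

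Finally, I combine the $k+1$ stacks by induction on the number of summands, using \pref{lem:stack-additive}. Formally, I prove by induction on $j \le k$ that there is a correct $k$-stack bounded by $\sum_{i=0}^{j} c_i n^i$. The base case is the weakened constant stack. Each inductive step applies \pref{lem:stack-additive} to the accumulated stack and the weakened monomial stack for $c_{j+1} n^{j+1}$. Taking $j = k$ gives a $k$-stack bounded by $P(n)$, which is the statement.

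The main obstacle is minor, since everything is composition of earlier lemmas. Two details need care. The first is the coefficient issue above; I would state the lemma for polynomials with natural-number coefficients, or dominate $P$ by such a polynomial when only an upper bound is needed, though exact equality $B = P$ requires the former. The second is the degenerate case $k = 0$ with $P$ constant. There no weakening is needed, and \pref{lem:stack-constant} alone suffices.
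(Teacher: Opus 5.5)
Your proof is correct and is essentially the paper's argument: write $P$ as a sum of monomials $c_i n^i$, obtain an $i$-stack for each from the monomial construction, weaken each one to a $k$-stack, and combine them with the additive construction. Your remark that the coefficients must lie in $\mathbb N$ is fair: the paper's proof makes the same assumption implicitly when it writes $P(n)=\sum_{i=0}^k c_i n^i$ and applies the monomial construction to each $c_i$.
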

\begin{proof}
Write $P(n)$ as $\sum_{i=0}^k c_i n^i$. For each $0 \le i \le k$, we can use \pref{lem:stack-monomial} to obtain an $i$-stack with bound $B_i(n) = c_i n^i$. Then, we repeatedly use \pref{lem:stack-weakened} to weaken all of them into $k$-stacks with the same bounds. Finally, we repeatedly use \pref{lem:stack-additive} to obtain the desired $k$-stack whose bound is the sum of the bounds of the weakened monomial stacks.
\end{proof}

\subsection{Polynomial Iteration}
\label{sec:completeness-poly-iter}

We now address the challenge of simulating the iteration of the transition function of a $P$-time Turing machine for $P(n)$ steps when given an input list of length $n$. Additionally, we need to simulate the transition function itself. Fortunately, both of these tasks require comparably simpler solutions than the issue of polynomial storage presented in the previous section. To iterate a function on itself $P(n)$ times, we use the next two lemmas, which are inspired by Proposition 4.1 in Hofmann's work~\cite{hofmannLinearTypesNonsizeincreasing1999}.

\begin{lemma}[Linear Iteration]
\label{lem:linear-iterator}
Let $f : \tparr{\tptens{A}{\tplist{\tpunit}}}{\tptens{A}{\tplist{\tpunit}}}$ be a closed term. Then, there exists a closed term $f^{\sharp} : \tparr{\tptens{A}{\tplist{\tpunit}}}{\tptens{A}{\tplist{\tpunit}}}$ such that $\tmsem{f^{\sharp}}{\envnil} (x, n) = (\tmsem{f}{\envnil})^{|n|}(x, n)$.
\end{lemma}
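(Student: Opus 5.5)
The plan is to recurse over the input list $n$ once. The recursion will build the $|n|$-fold composite of $\tmsem{f}{\envnil}$ as a first-class function, and it will rebuild a copy of $n$ on the side. The copy matters because \pref{rule:ty-ListE} consumes $n$, while the composite must still be applied to a list of length $|n|$. The diamonds released during the recursion are exactly enough to rebuild that copy.

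Concretely, define by list recursion on $n$ a term of type $\tptens{(\tparr{\tptens{A}{\tplist{\tpunit}}}{\tptens{A}{\tplist{\tpunit}}})}{\tplist{\tpunit}}$ as follows.
\begin{itemize}
\item In the $\tmnil$ case, return $\tmpair{\tmlam{p}{p}}{\tmnil}$.
\item In the $\code{cons}\,(d, \_, r)$ case, split $r$ as $\tmletp{r}{h}{m}{\ldots}$ and return $\tmpair{\tmlam{p}{\tmapp{h}{(\tmapp{f}{p})}}}{\tmcons{d}{\tmnull}{m}}$.
\end{itemize}
Then set
\[ f^{\sharp} = \tmlam{q}{\tmletp{q}{x}{n}{\tmletp{R}{h}{m}{\tmapp{h}{\tmpair{x}{m}}}}}, \]
where $R$ is the recursor above applied to $n$.

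The main point to check is typing against the closure restriction of \pref{rule:ty-ListE}: the step body may mention only $x_d$, $x_h$ and $x_t$. This is satisfied because $f$ is a closed term, so it can be inlined in the body without any free variables. Within the body, $h$, $m$ and $d$ are each used exactly once, and the head variable (of type $\tpunit$) is discarded by weakening. Thus every context split is a genuine partition, and the whole term is typable in the empty context.

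For correctness, prove by induction on the list $n$ that the recursor denotes the pair $(g^{|n|}, n')$ with $|n'| = |n|$, where $g = \tmsem{f}{\envnil}$. In the step case, $\lambda p.\,h(g(p))$ with $h = g^{j}$ gives $g^{j+1}$, since composing on the inside still yields a power of $g$. Since $\tpsem{\tplist{\tpunit}}$ has a unique element of each length, $n' = n$, and therefore $\tmsem{f^{\sharp}}{\envnil}(x,n) = g^{|n|}(x,n)$ as required. I expect the only real obstacle to be designing the step body so that it stays closed while still recovering the list $n$; once the "return the rebuilt list alongside the function" idea from \pref{example:susp} is used, the remaining semantic calculation is routine.
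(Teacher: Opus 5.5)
Your proof is correct and takes essentially the paper's approach: a single list recursion on $n$, with the closed $f$ inlined in the step body, builds the $|n|$-fold composite while using the recovered diamonds to rebuild a unit list of length $|n|$. The only difference is that the paper threads those diamonds through the function's argument in accumulator style, with step case $\lambda (x,n).\, f\,(r\,(x, \mathsf{cons}(d,u,n)))$ started at $(x,\mathsf{nil})$ as in \code{revAppend}, whereas you return the rebuilt list as a second tensor component in the style of \code{susp}.
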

\begin{proof}
Consider the following pseudocode:
\begin{lstlisting}
g : L(1) -o A * L(1) -o A * L(1)
g = lam m . rec m
| nil => lam s . s
| cons (d, u, r) => lam (x, n) . f (r (x, cons (d, u, n)))
\end{lstlisting}
We can then define $f^{\sharp} = \tmlam{s}{\tmletp{s}{x}{n}{\tmapp{\tmapp{g}{n}}{\tmpair{x}{\tmnil}}}}$.
\end{proof}

\begin{lemma}[Polynomial Iteration]
\label{lem:polynomial-iterator}
Let $f : \tparr{\tptens{A}{\tplist{\tpunit}}}{\tptens{A}{\tplist{\tpunit}}}$ be a closed term and $P : \mathbb N \to \mathbb N$ be a polynomial. Then, there exists a closed term $f^{\sharp P} : \tparr{\tptens{A}{\tplist{\tpunit}}}{\tptens{A}{\tplist{\tpunit}}}$ such that $\tmsem{f^{\sharp P}}{\envnil} (x, n) = (\tmsem{f}{\envnil})^{P(|n|)}(x, n)$.
\end{lemma}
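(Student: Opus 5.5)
We build $f^{\sharp P}$ from \pref{lem:linear-iterator} using two closure properties of the class of functions $Q : \mathbb N \to \mathbb N$ that are \emph{iterable}. Call $Q$ iterable if, for every closed $f : \tparr{\tptens{A}{\tplist{\tpunit}}}{\tptens{A}{\tplist{\tpunit}}}$ and every type $A$, there is a closed term $f^{\sharp Q}$ with $\tmsem{f^{\sharp Q}}{\envnil}(x,n) = (\tmsem{f}{\envnil})^{Q(|n|)}(x,n)$. The two properties are closure under addition and closure under multiplication. Every polynomial with natural-number coefficients is built from the constant $1$ and the identity $Q(n) = n$ by finitely many sums and products. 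So the lemma follows by induction on such an expression for $P$, provided the base cases and both closure steps hold.

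\textbf{Base cases.} The constant $Q = 0$ is handled by the identity $\tmlam{s}{s}$. The constant $Q = 1$ is handled by $f$ itself. The identity $Q(n) = n$ is exactly \pref{lem:linear-iterator}.

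\textbf{Addition.} Take $f^{\sharp(Q_1+Q_2)} = \tmlam{s}{\tmapp{f^{\sharp Q_2}}{(\tmapp{f^{\sharp Q_1}}{s})}}$. To see that the second iterator runs $Q_2(|n|)$ times, we need an invariant: $f$ preserves the length of the $\tplist{\tpunit}$ component.
\begin{itemize}
\item We cannot prove this invariant from typing alone. A closed $f$ could, for instance, discard diamonds.
\item The statement's equation, however, is phrased with $n$ as the counter throughout. So we strengthen the induction hypothesis to require that the iterator reads $|n|$ once, at the start, and never consults the list returned by $f$ to decide how often to iterate.
\item \pref{lem:linear-iterator} already has this property: $g$ recurses on the original $n$ and rebuilds a fresh list for $f$ to act on.
\end{itemize}
Concretely, we state the hypothesis semantically as $\tmsem{f^{\sharp Q}}{\envnil}(x,n) = F^{Q(|n|)}(x,n)$ for an arbitrary $F$ in the image of closed terms of this type. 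With that hypothesis, composition gives $F^{Q_2(m')} \circ F^{Q_1(|n|)}$, where $m'$ is the length of the list after the first phase.

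This exposes a real problem. If $F$ shrinks the list, then $m' \ne |n|$ and the count for the second phase is wrong. To avoid it, we iterate a \emph{wrapped} function instead of $f$.
\begin{itemize}
\item Using \pref{example:susp}-style duplication-free bookkeeping, set the state type to $\tptens{(\tptens{A}{\tplist{\tpunit}})}{\tplist{\tpunit}}$ and iterate a function $\hat f$.
\item $\hat f$ keeps the pristine counter list untouched in the second component and applies $f$ to the first.
\item The second component then plays the role of the counter for every phase.
\item At the end we recombine. The statement asks for $F^{P(|n|)}(x,n)$ where the $n$ seen by $f$ is threaded through, so we thread $n$ itself in the first component.
\end{itemize}
This wrapping needs a supply of diamonds. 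We meet it by first splitting, with a \code{rec} over $n$, the input into $k$ copies' worth of structure, which is impossible without extra diamonds. So instead we obtain the second phase's counter from the output of the first phase of the \emph{iterator}, not of $f$. Concretely, the iterator of \pref{lem:linear-iterator} returns $(F^{|n|}(x,n))$, and we need it to also hand back $n$. We therefore generalise to iterators of type $\tparr{\tptens{A}{\tplist{\tpunit}}}{\tptens{A}{\tplist{\tpunit}}}$ over the enlarged state $A' = \tptens{A}{\tplist{\tpunit}}$, carrying $f$'s own list inside $A'$ and using the outer list purely as the counter.

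\textbf{Multiplication.} Given iterable $Q_1, Q_2$, set $f^{\sharp(Q_1 Q_2)} = (f^{\sharp Q_1})^{\sharp Q_2}$. The inner iterator $f^{\sharp Q_1}$ is itself a closed term of the right shape, so the induction hypothesis for $Q_2$ applies to it. Each outer step performs $Q_1(|n|)$ applications of $f$, provided the counter list comes back with length $|n|$, which the enlarged-state formulation guarantees. This requires the hypothesis to be quantified over all types $A$, which is why the iterability definition ranges over every $A$.

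\textbf{Main obstacle.} The main obstacle is this counter-preservation issue. The literal statement threads the same list both as $f$'s argument and as the counter, yet the counter must survive intact across phases. We resolve it by proving the stronger enlarged-state version, in which the counter list is separate from the state and is returned with the same length, and then specialising it. We must check that \pref{lem:linear-iterator}'s $g$ indeed returns a counter of the original length; it does, since the counter is rebuilt cons by cons during the recursion.
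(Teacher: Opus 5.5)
Your skeleton is the paper's: nested $\sharp$ for products, composition for sums, and \pref{lem:linear-iterator} as the base case. Your diagnosis is also correct: both steps silently require $F = \tmsem{f}{\envnil}$ to return a list of the same length it receives. For example, take $A = \tpsum{\tpunit}{(\tpsum{\tpunit}{\tpunit})}$ and a closed $f$ with $F(c,\ell) = (c+1 \bmod 3, \mathrm{tail}\,\ell)$, which discards a diamond. For $|n| = 2$, the term $(f^\sharp)^\sharp$ computes $F^2(c,n) = (c+2,\emptylist)$ rather than $F^4(c,n) = (c+1,\emptylist)$. The paper's sketch leaves this invariant implicit. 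That is harmless for its purposes, because the Turing-machine step hands the diamonds back unchanged.

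The gap is in your repair. Iterating $\hat f((x,\ell),c) = (f(x,\ell),c)$ with a separate counter $c$ is fine. But specialising back to the lemma needs the initial state $((x,n),c)$ with $|c| = |n|$, which takes $2|n|$ diamonds out of the $|n|$ in $n$. That is exactly the splitting you concede is impossible.

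Your escape, letting the linear iterator ``also hand back $n$'', is not available, and your closing check is false. $g$ does rebuild the list cons by cons, but it passes that list to $f$ and returns whatever $F$ returns. So its output list has length $|n|$ only if $F$ preserves length. Consequently your addition and multiplication steps are unjustified for the class you defined (all closed $f$), and the induction does not close.

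The fix is to make the invariant a hypothesis: assume the list component of $F(x,n)$ always has length $|n|$. This is inherited by $f^\sharp$, by compositions and by nested iterates, since each computes some $F^{Q(|n|)}$. Then composition yields $F^{Q_1(|n|)+Q_2(|n|)}(x,n)$ and $(f^{\sharp Q_1})^{\sharp Q_2}$ yields $F^{Q_1(|n|)Q_2(|n|)}(x,n)$. This is the paper's argument made explicit, and it is what the completeness proof actually uses. The unrestricted statement, if it holds at all, needs a genuinely different idea. A separate counter cannot supply it, because it requires a second copy of $n$'s diamonds.
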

\begin{proof}
For monomials, we can repeat the $\sharp$ operation from \pref{lem:linear-iterator}, e.g., $\tmsem{(f^{\sharp})^{\sharp}}{\envnil} (x, n) = (\tmsem{f}{\envnil})^{|n|^2}(x, n)$. To add these monomials together to obtain any polynomial, we can just use function composition, since $g^m \circ g^k = g^{m + k}$ for any $g : X \to X$ and $m,k \in \mathbb N$.
\end{proof}


We now address the task of encoding the transition function of the Turing machine.
Suppose we are dealing with a tape alphabet that includes one blank symbol $a_0$ and the rest of the symbols are drawn from the set $\tpsem{A}$, where $\dfree{A}$. Additionally, suppose the set of internal states of the Turing machine includes one halting state $q_0$ and the rest of the states are drawn from the set $\tpsem{Q}$, where again $\dfree{Q}$. Then, the transition function $g$ of the Turing machine has the following signature:
\[ g : \tpsem{Q} \times (\{ a_0 \} \sqcup \tpsem{A}) \to (\{ q_0 \} \sqcup \tpsem{Q}) \times (\{ a_0 \} \sqcup \tpsem{A}) \times \{ \Leftarrow, \Rightarrow \} \]

If $(q', a', d) = g(q, a)$, the intent is that $q$ is the current state of the Turing machine, $a$ is the symbol under the tape head, $q'$ is the new state after taking a step, $a'$ is the symbol to overwrite $a$ with, and $d$ is the direction (left or right) to shift the head. So, the type of the \lfpl term encoding $g$ should be:
\[ \tparr{\tptens{Q}{(\tpsum{\tpunit{}}{A})}}{\tptens{(\tpsum{\tpunit{}}{Q})}{\tptens{(\tpsum{\tpunit{}}{A})}{(\tpsum{\tpunit}{\tpunit})}}} \]
The input and output sets of $g$ are both denotations of $\tpdiam$-free types. The following two lemmas show \lfpl is complete with respect to computations only involving types $A$ such that $\dfree{A}$, and thus that $g$ can be implemented.

\begin{lemma}
Suppose $\dfree{A}$. Then, for every $a \in \tpsem{A}$, there exists a closed term $M_a : A$ such that $\tmsem{M_a}{\envnil} = a$.
\end{lemma}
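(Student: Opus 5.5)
We proceed by rule induction on the derivation of $\dfree{A}$, proving for each case that every element of $\tpsem{A}$ is the denotation of some closed term. The statement is about closed terms only, so the environment plays no role, and the denotational clauses of \pref{fig:sem-tp-tm} apply directly. Here we write $\envnil$ for the empty environment.

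In the base case $A = \tpunit$, we have $\tpsem{\tpunit} = \{\ast\}$. We take $M_\ast = \tmnull$, which is typed by \pref{rule:ty-UnitI} under the empty context and satisfies $\tmsem{\tmnull}{\envnil} = \ast$.

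In the sum case $A = \tpsum{A_1}{A_2}$ with $\dfree{A_1}$ and $\dfree{A_2}$, an element of $\tpsem{A} = \tpsem{A_1} \sqcup \tpsem{A_2}$ has the form $(i, v)$ with $v \in \tpsem{A_i}$. The induction hypothesis for $A_i$ gives a closed $M_v : A_i$ with $\tmsem{M_v}{\envnil} = v$. We set $M_{(i,v)} = \tminj{i}{M_v}$, which is typed by the \textsc{SumI}$_i$ rule. The semantic clause for injections gives $\tmsem{\tminj{i}{M_v}}{\envnil} = (i, v)$.

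In the tensor case $A = \tptens{A_1}{A_2}$, an element has the form $(v_1, v_2)$. The induction hypotheses give closed terms $M_{v_1}$ and $M_{v_2}$. We take $\tmpair{M_{v_1}}{M_{v_2}}$, typed by \pref{rule:ty-TensorI} with the empty context split as $\ctxnil$ and $\ctxnil$. The pair clause of the semantics yields $(v_1, v_2)$.

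There is no real obstacle in this argument. The one point needing care is the induction principle: we must induct on the derivation of $\dfree{A}$ (equivalently on the structure of $A$), with a statement quantified over all $a \in \tpsem{A}$. The reason is that the term $M_a$ depends on $a$, so the hypothesis has to supply a term for each element of the component sets. It also matters that $\dfree{}$ excludes $\tpdiam$, arrows and lists. Those are exactly the types for which the statement would fail, since $\tpdiam$ has no closed inhabitant and there are only countably many closed terms for uncountably many functions. The rules of $\dfree{}$ therefore cover precisely the three cases above, and nothing else needs to be checked.
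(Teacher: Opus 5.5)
Your proof is correct and matches the paper's approach, which is a straightforward rule induction on $\dfree{A}$: $\tmnull$ for the unit case, injections for sums, and pairing (with the empty context split trivially) for tensors. One minor point: your closing remark that the excluded constructors are ``exactly the types for which the statement would fail'' overstates things, since the statement still holds for some types that are not $\tpdiam$-free, e.g.\ $\tparr{\tpunit}{\tpunit}$, whose denotation is a singleton realized by $\tmlam{x}{\tmnull}$. This aside does not affect the proof itself.
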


\begin{lemma}
Suppose $A$ and $B$ are types such that $\dfree{A}$ and, for every $b \in \tpsem{B}$, that there exists a closed term $M_b : B$ such that $\tmsem{M_b}{\envnil} = b$. Then, for every $f : \tpsem{A} \to \tpsem{B}$, there exists a closed term $M_f : \tparr{A}{B}$ such that $\tmsem{M_f}{\envnil} = f$.
\end{lemma}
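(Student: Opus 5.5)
The plan is to construct $M_f$ by induction on the derivation of $\dfree{A}$, generalizing over $B$: for each rule we build a closed term of type $\tparr{A}{B}$ that inspects its argument and dispatches to the appropriate closed term $M_b$. The induction hypothesis must be stated for \emph{every} $B$ whose elements are all definable (not just the fixed $B$), because in the tensor case we apply it at an arrow type.

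\emph{Case $A = \tpunit$.} Since $\tpsem{\tpunit} = \{\ast\}$, the function $f$ is determined by $b = f(\ast)$. Take $M_f = \tmlam{x}{M_b}$; here $x$ is simply discarded, which weakening permits. Then $\tmsem{M_f}{\envnil}(\ast) = b$.

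\emph{Case $A = \tpsum{A_1}{A_2}$ with $\dfree{A_1}$ and $\dfree{A_2}$.} Let $f_i = \lambda v.\, f(i,v) : \tpsem{A_i} \to \tpsem{B}$. The induction hypothesis gives closed terms $M_{f_i} : \tparr{A_i}{B}$. Set
\[ M_f = \tmlam{x}{\tmcase{x}{x_1}{\tmapp{M_{f_1}}{x_1}}{x_2}{\tmapp{M_{f_2}}{x_2}}}. \]
This is well typed because $M_{f_i}$ is closed, so each branch uses only its bound variable. Its denotation agrees with $f$ on both injections.

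\emph{Case $A = \tptens{A_1}{A_2}$.} Curry $f$ into $\hat f : \tpsem{A_1} \to (\tpsem{A_2} \to \tpsem{B})$ with $\hat f(v)(u) = f(v,u)$. Let $B' = \tparr{A_2}{B}$. For each $v$, the induction hypothesis for $\dfree{A_2}$ at $B$ yields a closed term $M_{\hat f(v)} : B'$ denoting $\hat f(v)$. Hence every element of $\tpsem{B'}$ is definable by a closed term. The induction hypothesis for $\dfree{A_1}$ at $B'$ then gives a closed $M_{\hat f} : \tparr{A_1}{B'}$ with $\tmsem{M_{\hat f}}{\envnil} = \hat f$. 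Set
\[ M_f = \tmlam{x}{\tmletp{x}{x_1}{x_2}{\tmapp{\tmapp{M_{\hat f}}{x_1}}{x_2}}}, \]
which is affinely typed since $x_1$ and $x_2$ are each used once.

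The main obstacle is exactly this tensor case, and the need to quantify over $B$ in the induction is what resolves it. With the hypothesis strengthened, every case is a routine check against the semantic clauses of \pref{fig:sem-tp-tm}. The preceding lemma, that every element of a $\tpdiam$-free type is definable, is not needed inside the induction. It is only used afterwards to discharge the hypothesis on $B$ when applying this lemma, e.g.\ to the Turing machine transition function $g$.
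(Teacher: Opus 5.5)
Your proposal is correct and takes the same route as the paper: rule induction on $\dfree{A}$. Generalizing the induction over $B$ is the detail the paper leaves implicit in calling the proof ``straightforward.'' It is what lets you apply the hypothesis at the arrow type $\tparr{A_2}{B}$ in the tensor case.
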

Both of these are straightforward to prove by rule induction on $\dfree{A}$. By combining the two, we obtain a closed term $M_g$ encoding the transition function $g$ of the Turing machine.

\subsection{Simulating the Turing Machine}
\label{sec:completeness-general-thm}
\enlargethispage{\baselineskip}

It only remains to put everything together. For this subsection, assume we have a function $f : \tpsem{A}^* \to \tpsem{A}^*$ computable by a Turing machine on a tape with alphabet $\{a_0\} \sqcup \tpsem{A}$, with internal state set $\{q_0\} \sqcup \tpsem{Q}$ and transition function $g$. Further assume the machine always halts in $P(n)$ steps on an input of length $n$, where $P : \mathbb N \to \mathbb N$ is a $k$-degree polynomial. Lastly, assume $\dfree{A}$ and $\dfree{Q}$. This assumption is not restrictive; Turing machines have finite alphabets and finitely many states.

Our goal is to construct a closed term $M : \tplist{A} \to \tplist{A}$ such that $\tmsem{M}{\envnil} = f$. Assuming an input of $\ell \in \tpsem{A}^*$ with length $n = |\ell|$, the body of the function for $M$ only has access to $n$ diamonds. We are going to need a tape data structure that can hold the input list and any additional memory used by the Turing machine, so up to $n + P(n)$ values, while only requiring $n$ diamonds. 

Since the Turing machine receives its input as a single list, giving us a pool of diamonds in the form $\tplist{\tpunit}$, yet stack operations require diamonds in the form $(\tplist{\tpunit})^{k + 1}$, we need a way to losslessly convert between diamonds collected into a list of type $\tplist{\tpunit}$ and diamonds in a divided form $\tptens{(\tplist{\tpunit})^{k + 1}}{\tplist{\tpunit}}$. The reverse direction is straightforward; just append all the lists together into one. The forward direction amounts to implementing unary division by the constant $k + 1$ in \lfpl.
This is nontrivial but not too different from an implementation of $\code{divmod}_k : \mathbb N \to \mathbb N \times \mathbb N$ in a standard functional programming language, where $\code{divmod}_k(n)$ returns the quotient and remainder of $n$ by $k + 1$.

Consider the polynomial $P'(m) = (k + 1)(m + 1) + P((k + 1)(m + 1))$, which has degree at most $\max(1, k)$, which we further bound with $k + 1$ to avoid a degree-zero edge case. Let $(m, r) = \code{divmod}_k(n)$ so that $0 \le r \le k$ and $n = m(k + 1) + r$. Then, $(k + 1)(m + 1) = m(k + 1) + (k + 1) > m(k + 1) + r = n$, implying $P'(m) \ge P(n) + n$. Therefore, we can apply \pref{lem:stack-polynomial} to $P'$ to obtain a stack implementation for element type $\tpsum{\tpunit{}}{A}$, say with underlying type $S$. These stacks hold up to $P'(m)$ values assuming we have $m(k + 1)$ diamonds available  and therefore at least $P(n)$ values. So we can model the tape with a data structure of type $\tptens{S}{\tptens{(\tpsum{\tpunit{}}{A})}{S}}$, as discussed earlier.

This is another place where our stack data structure simplifies the argument from Hofmann's array-based tape implementation~\cite{hofmannStrengthNonsizeIncreasing2002}. The array-based tape requires an index representing what element of the array is currently beneath the head of the tape. This index needs additional diamonds to store, creating the additional complexity of properly sharing the total $n$ diamonds between the index and the tape operations.

We obtain the following more general completeness theorem alluded to earlier, which
drops the size restriction of \pref{thm:completeness-concrete}.
\begin{theorem}[Polynomial-Time Completeness]
\label{thm:completeness-general}
Let $A$ be a type such that $\dfree{A}$ and let $P : \mathbb N \to \mathbb N$ be a polynomial with degree $k$. For every $P$-time computable function $f : \tpsem{A}^* \to \tpsem{\tpsum{\tpunit{}}{A}}^*$, there exists a correct $(k+1)$-stack implementation with element type $\tpsum{\tpunit{}}{A}$, say with underlying type $S$ and item functions $I_n : \tpsem{S} \to \tpsem{\tpsum{\tpunit{}}{A}}^*$, and a closed term $M : \tparr{\tplist{A}}{S}$ such that for every $x \in \tpsem{A}^*$:
\[ f(x) = I_{|x|}\left(\tmsem{M}{\envnil}(x)\right) \]
In other words, $M$ outputs a stack with items equal to $f(x)$.
\end{theorem}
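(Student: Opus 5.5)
The plan is to assemble the pieces of \pref{sec:completeness} into a single closed term $M$ that simulates the machine and returns one of its tape stacks. Let $n = |x|$.

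\textbf{Diamond pool and stacks.} First, $M$ converts the input list into a diamond pool and a set of data to be written on the tape. It applies \code{susp} from \pref{example:susp} to the input, which yields a suspended copy of the input and a list of type $\tplist{\tpunit}$ of length $n$. It then runs $\code{divmod}_k$ on that list, obtaining $m_{m,k+1}$ together with a remainder list of length $r$, where $n = m(k+1) + r$. We fix $P'(m) = (k+1)(m+1) + P((k+1)(m+1))$ as in the text. \pref{lem:stack-polynomial} applied to $P'$ (padded to degree $k+1$ via \pref{lem:stack-weakened} if necessary) gives a correct $(k+1)$-stack implementation $S$ for element type $\tpsum{\tpunit}{A}$, bounded by $P'$. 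The computation above shows that $P'(m) \ge n + P(n)$. This $S$ and its item function are the ones named in the theorem, except that $I_n$ must be indexed by $n$ rather than $m$. We therefore define the theorem's $I_n$ to be the stack's $I_{m}$, with $m = \lfloor n/(k+1) \rfloor$, and the validity relation likewise. Since $m$ is a function of $n$, this is legitimate.

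\textbf{Loading the tape and iterating.} To load the input we temporarily un-suspend the input list. Using the list recursor, we push each symbol (as $\tminj{2}{a}$) onto the right stack, starting from an empty tape $(\code{empty}, (\tminj{1}{\tmnull}, \code{empty}))$. Each push receives $m_{m,k+1}$, and correctness guarantees that the same $m_{m,k+1}$ is returned for the next push. The diamonds of the input are recovered and recombined with the pool.

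The state $X = Q' \otimes \text{tape} \otimes (\tplist{\tpunit})^{k+1}$, where $Q' = \tpsum{\tpunit}{Q}$, is threaded through a step function. This step function has type $\tparr{\tptens{X}{\tplist{\tpunit}}}{\tptens{X}{\tplist{\tpunit}}}$, where the outer list is the remainder. It is built as follows:
\begin{enumerate}
\item Case on the state; if it is halted, return the configuration unchanged.
\item Otherwise, apply $M_g$ to the state and the head symbol.
\item Write the new symbol.
\item Move the head in the chosen direction. Moving right pushes the written symbol onto the left stack and pops the right stack; a failed pop yields the blank $\tminj{1}{\tmnull}$. Moving left is symmetric.
\end{enumerate}
The bound $P'(m) \ge n + P(n)$ ensures that no push ever fails during the at most $P(n)$ steps. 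To iterate $P(n)$ times we use \pref{lem:polynomial-iterator}, but it iterates a polynomial in the length of the supplied unit list. Running the step function extra times is harmless, since the halted state is a fixed point, so it suffices to iterate some polynomial that dominates $P(n)$.

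I would supply the iterator with a list of length at least $m$, namely the first component of the pool, temporarily detached. The iteration polynomial is $Q(m) = P((k+1)(m+1))$, and $Q(m) \ge P(n)$ holds because $P$ may be taken monotone. Since the step function's input type must carry that unit list, the pool's first component is passed through and reattached each step.

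\textbf{Output and main obstacle.} At the end, the output of $f$ must be read off from one of the stacks. I would choose the convention that the machine halts with its head at the left end of its output, so the right stack holds $f(x)$ in order. $M$ returns the right stack and discards everything else by weakening.

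Correctness is an induction over steps, maintaining a simulation invariant through \pref{def:stack-correctness}: both stacks are valid, their items encode the tape, and the pool equals $m_{m,k+1}$. I expect the main obstacle to be the bookkeeping around the diamond pool. The iteration list must coexist with the $m_{m,k+1}$ required by stack operations without duplicating diamonds, and the indices $n$ versus $m$ must be reconciled in $I$. If detaching a component of the pool proves awkward, my fallback is to split the input list into $k+2$ parts instead, reserving one part of length $m$ for the iterator and adjusting $P'$ accordingly.
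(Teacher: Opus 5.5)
\textbf{Overall.} Your architecture is the paper's: divide the input diamonds with $\code{divmod}_k$, build a $(k+1)$-stack bounded by $P'$, write the input onto the tape, iterate a closed step function via \pref{lem:polynomial-iterator}, and return one tape stack. Several of your refinements are sound. Making the halted configuration a fixed point works. So does iterating over the first pool component with $Q(m)=P((k+1)(m+1))\ge P(n)$ while reattaching that component inside the step function.

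\textbf{The gap: loading the tape spends the same diamonds twice.} After \code{susp} and $\code{divmod}_k$, all $n$ diamonds are in $m_{m,k+1}$ and the remainder. To un-suspend the input you must feed the suspended function a unit list of length $n$. That means merging the pool and the remainder back into one list. Afterwards every diamond is inside the recovered list $x$, and no $m_{m,k+1}$ is left to pass to \code{push}. Your sentence ``the diamonds of the input are recovered and recombined with the pool'' presupposes that $x$ and the pool coexist, which would need $n+m(k+1)$ diamonds. Moreover, the recursor body is typed only under $x_d,x_h,x_t$, so it cannot reach an outer pool at all.

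\textbf{The missing idea: separate data from diamonds before any push.} This is exactly what \code{susp} does, but with pushes in place of \code{cons}. Let $\Pi=(\tplist{\tpunit})^{k+1}$ and make one recursor pass over the input with result type $\tptens{(\tparr{\Pi}{\tparr{S}{\tptens{\Pi}{S}}})}{\tplist{\tpunit}}$. In the cons case, take $(G,ds)$ from the tail and return a pair. Its first component is the closure $\lambda p.\lambda s.$ ``run $G$ on $(p,s)$, then push $\tminj{2}{x_h}$ with the returned pool''. Its second component is $\tmcons{x_d}{\tmnull}{ds}$. The closure captures only $x_h$ and $G$, so it is diamond-free because $\dfree{A}$. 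All $n$ diamonds end up in the unit list. Only then apply $\code{divmod}_k$ and run the loader on the pool and \code{empty}. This pushes $a_n,\ldots,a_1$ and leaves $a_1$ on top.

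\textbf{A smaller point on re-indexing.} Your claim that re-indexing is legitimate overstates things. Setting the theorem's $I_n$ to the stack's $I_{\lfloor n/(k+1)\rfloor}$ does not give a correctness witness in the sense of \pref{def:stack-correctness}, because that definition ties $I_n$ to operations that receive $m_{n,k+1}$. What your construction actually proves is $f(x)=I_{\lfloor |x|/(k+1)\rfloor}(\tmsem{M}{\envnil}(x))$ for the genuine witness. The paper's own phrasing with $I_{|x|}$ glosses over the same mismatch.
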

\begin{proof}
To define $M$, we start with a function abstraction taking in the input $x : \tplist{A}$. Using the tape data structure discussed above, and the $n$ diamonds from $x$, we can write the values of $x$ onto the tape. Then, using the tools from \pref{sec:completeness-poly-iter}, we take an \lfpl implementation of the transition function $g$ and iterate it on the tape (as well as an \lfpl encoding of the initial machine state $q_0 \in \tpsem{Q}$) for a total of $P(n)$ steps. Finally, we return the stack representing the right half (or left; this choice just depends on the particulars of how we define a Turing machine's output) of the tape.
\end{proof}

By allowing $M$ to output a stack rather than insisting on a list, we can implement \textit{any} polynomial-time computation, even the size-increasing ones. Since the stack operations from \pref{sec:completeness-bounded-stack-definition} and the polynomial iterator from \pref{lem:polynomial-iterator} do not permanently consume any diamonds, we still have $n$ diamonds left over after the procedure in the above proof. We can just use those to remove the first $n$ values of the stack, store them in a list, and filter out the values representing blank symbols, thereby proving \pref{thm:completeness-concrete} as a corollary of \pref{thm:completeness-general}.

\subsection{Errors in the Original Proof}
\label{sec:completeness-errors}

We originally intended to mechanize Hofmann’s completeness proof~\cite{hofmannStrengthNonsizeIncreasing2002}, but thinking carefully about some of the details, we noticed several subtle errors and wanted to avoid the difficulties that come with resolving them. Thus, the mechanization effort led us to invent a simpler stack-like data structure for representing the tape of a Turing machine.

A main difficulty in the completeness proof is to represent the tape of a polynomial-time Turing machine in LFPL, say with polynomial $P$ and degree $k$. Hofmann constructs a data structure (called a “storage device”) that can hold only $cn^k$ elements for a fixed $c$, as opposed to $P(n)$ elements, given temporary access to $n$ diamonds. In Hofmann’s encoding of the Turing machine, given an input list of length n, the first step is to construct a storage device capable of holding $P(2km)$ elements when given access to $km$ temporary diamonds, which Hofmann does by finding a monomial $cm^k \ge P(2km)$. Then, $m$ is taken to be $n/(2k)$, but the largest possible value for $m$ is $\left\lfloor n/(2k) \right\rfloor$; any larger and we could potentially have $km > n/2$. This makes the chosen storage device unsuitable for the proof since it would mandate the availability of more than our $n/2$ diamonds. See below for why we have only $n/2$ rather than $n$ diamonds. So, due to this upper bound on $m$, to have $cm^k \ge P(n)$ we actually need $cm^k$ to bound $P(2k(m + 1))$, which is one small error in the proof.

More importantly, there’s the issue that $m = \left\lfloor n/(2k) \right\rfloor = 0$ when $n < 2k$, and thus $cm^k = 0$ while it may be that $P(0) > 0$, so Hofmann’s construction fails for input lists of length less than $2k$ (as well as any inputs of length $0$). This can be fixed by hard-coding each of these finitely many inputs. Nonetheless, neither this problem nor any potential solutions are discussed in the original proof. Our proof sidesteps this issue since our stack construction works for general polynomial bounds.

The second source of errors is that storage devices have an array-like interface, where the client can access any element by providing its index as a binary natural number. In \lfpl, such an index requires some diamonds to store, whereas our stack-like interface does not have indices to worry about. In Hofmann’s encoding of the Turing machine, the input list (say with length $n$) is immediately split into two lists of length $n/2$. One half is reserved for the storage device as discussed in the previous paragraphs. The purpose of the other half is not discussed in the proof, but it is necessary to store a storage device index representing the position of the tape head. The issue is that the tape could hold up to $P(n)$ elements, but the remaining diamonds can only represent up to $2^{n/2}$ binary indices, which is not necessarily greater than $P(n)$. This could be fixed by hard-coding more inputs or storing the index as a $d$-ary number for sufficiently large $d$. None of this is discussed in Hofmann’s proof, and fortunately we are once again able to sidestep all of it with our stack data structure.

\section{Polynomial-Time Soundness}
\label{sec:soundness}
\enlargethispage{\baselineskip}

This section shows that \lfpl programs can be evaluated in polynomial
time.
To strengthen the soundness result and illustrate how to add features to \lfpl while maintaining polynomial-time soundness, we extend the language and refer to the result as \lfplplus. The features are the standard product type, which manifests as the lazy product in an affine setting, a type of unbounded stacks, which are similar to lists but do not support recursion, and the inductive type of binary trees. The syntax and typing rules of these language constructs are given in \pref{fig:ext-syntax-tp-tm} and \pref{fig:ext-jmt-ty} respectively.

The typing rules for stacks of type $\tpstack{A}$ correspond to those for lists in a standard non-affine functional programming language, where the elimination form is case analysis rather than structural recursion and the introduction form does not require diamonds.
These stacks do not violate \lfpl's core principle of controlling recursion by diamonds because they do not have a recursor.

The typing rules for the tree type $\tptree{A}$, an inductive type which does support recursion, are similar to those of lists $\tplist{A}$. The primary difference is that, in \pref{rule:ty-TreeE}, the base case $N_1$ is typed under an empty context. Like the list recursor, the tree recursor has two cases: the inductive case and base case. The difference is that for the list recursor, the base case $N_1$ is only run once, so it may have access to part of the context (though this is not necessary for completeness). In the tree recursor, $N_1$ may be run many times, as a tree can have many leaves. So, to respect linearity, it cannot use anything from the context.

The denotational semantics for these new constructs are relatively straightforward. We set $\tpsem{\tpprod{A}{B}} = \tpsem{A} \times \tpsem{B}$ and $\tpsem{\tpstack{A}} = \tpsem{A}^*$. Giving a semantics for $\tptree{A}$ and its related terms boils down to giving a mathematical account of binary trees (in the same manner we have used the Kleene star to give a simple mathematical account of lists).
The semantics for terms involving these types can be found in the mechanization.

\begin{figure}[t!]
\hspace{-1.0em}
\begin{minipage}[t]{0.33\linewidth}
\begin{extsyntax}{Type}{A,B}{\; \cdots}
\sitem{\tpprod{A}{B}}{product}
\sitem{\tpstack{A}}{stack}
\sitem{\tptree{A}}{binary tree}
\end{extsyntax}
\end{minipage}
\hspace{2.5em}
\begin{minipage}[t]{0.6\linewidth}
\begin{extsyntax}{Term}{M,N}{\; \cdots}
\sitem{\tmrecord{M}{N}}{product intro}
\sitem{\tmproj{i}{M} \qquad\qquad\quad (i \in \{1,2\})}{product elim}
\sitem{\tmempty{}}{stack intro}
\sitem{\tmpush{M_h}{M_t}}{stack intro}
\sitem{\tmpop{M}{N_1}{x_h}{x_t}{N_2}}{stack elim}
\sitem{\tmleaf{}}{tree intro}
\sitem{\tmnode{M_d}{M_x}{M_l}{M_r}}{tree intro}
\sitem{\tmtrec{M}{N_1}{x_d}{x}{x_l}{x_r}{N_2}}{tree elim}
\end{extsyntax}
\end{minipage}
\caption{\lfplplus's type and term syntax.}
\label{fig:ext-syntax-tp-tm}
\end{figure}

\begin{figure}[t!]
\begin{mathpar}
\RuleTy{ProdI}{\types{\ctx}{\tmrecord{M_1}{M_2}}{\tpprod{A_1}{A_2}}}{\types{\ctx}{M_1}{A_1} \\ \types{\ctx}{M_2}{A_2}}
\and
\RuleTy{ProdE$_i$}{\types{\ctx}{\tmproj{i}{M}}{A_i}}{\types{\ctx}{M}{\tpprod{A_1}{A_2}}}
\and
\RuleTy{StackI$_1$}{\types{\ctx}{\tmempty}{\tpstack{A}}}{}
\and
\RuleTy{StackI$_2$}{\types{\ctxappend{\ctx_h}{\ctx_t}}{\tmpush{M_h}{M_t}}{\tpstack{A}}}{\types{\ctx_h}{M_h}{A} \\ \types{\ctx_t}{M_t}{\tpstack{A}}}
\and
\RuleTy{StackE}{\types{\ctxappend{\ctx}{\ctx'}}{\tmpop{M}{N_1}{x_h}{x_t}{N_2}}{B}}{\types{\ctx}{M}{\tpstack{A}} \\ \types{\ctx'}{N_1}{B} \\\\ \types{\ctxcons{\ctxcons{\ctx'}{x_h}{A}}{x_t}{\tpstack{A}}}{N_2}{B}}
\and
\RuleTy{TreeI$_1$}{\types{\ctx}{\tmleaf}{\tptree{A}}}{}
\and
\RuleTy{TreeI$_2$}{\types{\ctxappend{\ctx_d}{\ctxappend{\ctx_x}{\ctxappend{\ctx_l}{\ctx_r}}}}{\tmnode{M_d}{M_x}{M_l}{M_r}}{\tptree{A}}}{\types{\ctx_d}{M_d}{\tpdiam} \\ \types{\ctx_x}{M_x}{A} \\\\ \types{\ctx_l}{M_l}{\tptree{A}} \\ \types{\ctx_r}{M_r}{\tptree{A}}}
\and
\RuleTy{TreeE}{\types{\ctx}{\tmtrec{M}{N_1}{x_d}{x}{x_l}{x_r}{N_2}}{B}}{\types{\ctx}{M}{\tptree{A}} \\ \types{\ctxnil}{N_1}{B} \\\\ \types{\ctxcons{\ctxcons{\ctxcons{\ctxcons{\ctxnil}{x_d}{\tpdiam}}{x}{A}}{x_l}{B}}{x_r}{B}}{N_2}{B}}
\end{mathpar}
\caption{\lfplplus's typing rules.}
\label{fig:ext-jmt-ty}
\end{figure}

\enlargethispage{\baselineskip}
The stack type $\tpstack{A}$ is a particularly powerful addition to \lfpl. The vast majority of the difficulty in proving polynomial-time completeness of \lfpl, and by far the most complex aspect of our entire mechanization, was the definition and implementation of the bounded stack data structure discussed in \pref{sec:completeness-bounded-stack-definition} and \pref{sec:completeness-bounded-stack-implementation}. So, the addition of unbounded, $\tpdiam$-free stacks to \lfpl means \lfplplus admits a much simpler completeness proof, and thus it is worthwhile to show that they are sound despite their power. In essence, \lfplplus internalizes the bounded stack data structure from \pref{sec:completeness-bounded-stack-definition} and removes the boundedness restriction. No diamonds are required to interface with the stacks of \lfplplus, and there is no upper bound on the number of stack elements.

\subsection{Big-Step Operational Cost Semantics}
\label{sec:operational-semantics}

As discussed in \pref{sec:denotational-semantics}, any semantics for \lfplplus needs to have some way to inhabit the $\tpdiam$ type. To give \lfplplus an operational semantics, we introduce a separate language for \lfplplus values in \pref{fig:syntax-val}. Note that the definitions of value and environment are mutually recursive. This means we usually have to define functions on values via mutual recursion with a corresponding function on environments. Like in the denotational semantics, diamonds in lists and trees are not explicit.
By separating values from terms, we ensure \lfplplus terms remain polynomial-time sound while obtaining a means of providing concrete inputs to observe their evaluation. We also define a typing judgement $v : A$ for values, so that $\vdiam : \tpdiam$, $\vcons{v_h}{v_t} : \tplist{A}$ whenever $v_h : A$ and $v_t : \tplist{A}$, and so on.

\begin{figure}[t!]

\begin{minipage}[t]{0.45\linewidth}
\begin{syntax}{Val}{v}
\sitem{\vdiam}{diamond value}
\sitem{\vnull}{unit value}
\sitem{\vrecord{\env}{M_1}{M_2}}{lazy pair closure}
\sitem{\vinj{i}{v}}{injection}
\sitem{\vpair{v_1}{v_2}}{pair}
\sitem{\vlam{\env}{x}{M}}{function closure}
\sitem{\vempty}{empty stack}
\sitem{\vpush{v_h}{v_t}}{nonempty stack}
\sitem{\vnil}{empty list}
\sitem{\vcons{v_h}{v_t}}{nonempty list}
\sitem{\vleaf}{empty tree}
\sitem{\vnode{v}{v_l}{v_r}}{nonempty tree}
\end{syntax}
\end{minipage}
\hspace{3.5em}
\begin{minipage}[t]{0.45\linewidth}
\begin{syntax}{Env}{\env}
\sitem{\envnil}{empty envmt.}
\sitem{\envcons{\env}{x}{v}}{nonempty envmt.}
\end{syntax}
\end{minipage}
\caption{\lfplplus's value and environment syntax.}
\label{fig:syntax-val}
\end{figure}

\NewDocumentCommand{\RuleEval}{m m m}{\Rule{\text{Eval}:#1}{#2}{#3}\label{rule:eval-#1}}
\begin{figure}[t!]
\begin{mathpar}
\RuleEval{Var}{\evals{\envcons{\env}{x}{v}}{x}{\Cvar}{v}}{}
\and
\RuleEval{ArrowI}{\evals{\env}{\tmlam{x}{M}}{\Clam}{\vlam{\env}{x}{M}}}{}
\and
\RuleEval{SumE$_i$}{\evals{\envappend{\env}{\env'}}{\tmcase{M}{x}{N_1}{x}{N_2}}{c + c' + \Ccase}{v'}}{\evals{\env}{M}{c}{\vinj{i}{v}} \\ \evals{\envcons{\env'}{x}{v}}{N_i}{c'}{v'}}
\and
\RuleEval{ArrowE}{\evals{\envappend{\env_1}{\env_2}}{\tmapp{M}{N}}{c_1 + c_2 + c' + \Capp}{v'}}{\evals{\env_1}{M}{c_1}{\vlam{\env'}{x}{M'}} \\\\ \evals{\env_2}{N}{c_2}{v} \\ \evals{\envcons{\env'}{x}{v}}{M'}{c'}{v'}}
\and
\RuleEval{ListI$_2$}{\evals{\envappend{\env_d}{\envappend{\env_h}{\env_t}}}{\tmcons{M_d}{M_h}{M_t}}{c_d + c_h + c_t + \Ccons}{\vcons{v_h}{v_t}}}{\evals{\env_d}{M_d}{c_d}{\vdiam} \\ \evals{\env_h}{M_h}{c_h}{v_h} \\ \evals{\env_t}{M_t}{c_t}{v_t}}
\and
\RuleEval{TreeE$_2$}{\evals{\env}{\tmtrec{M}{N_1}{x_d}{x}{x_l}{x_r}{N_2}}{c + c_L + c_R + c' + \Ctrec}{v}}{\evals{\env}{M}{c}{\vnode{v_x}{v_l}{v_r}} \\ \evals{\envcons{\envnil}{y}{v_l}}{\tmtrec{y}{N_1}{x_d}{x}{x_l}{x_r}{N_2}}{c_L}{v_L} \\ \evals{\envcons{\envnil}{y}{v_r}}{\tmtrec{y}{N_1}{x_d}{x}{x_l}{x_r}{N_2}}{c_R}{v_R} \\ \evals{\envnil[x_d \mapsto \vdiam, x \mapsto v_x, x_l \mapsto v_L, x_r \mapsto v_R]}{N_2}{c'}{v}}
\end{mathpar}
\caption{Selected rules for \lfplplus's evaluation judgement.}
\label{fig:jmt-eval}
\end{figure}

To account for the cost of evaluation, we give a cost-annotated big-step operational semantics, represented by the judgement $\evals{\env}{M}{c}{v}$, read ``term $M$ evaluates under the environment $\env$ to a value $v$ with cost $c$''. A representative set of rules for this judgement is given in \pref{fig:jmt-eval}, and a complete set of rules can be found in{
  \ifdefined\FULLVERSION
    \pref{app:evaluation}%
  \else
    Appendix A of the full version~\cite{fullversion}%
  \fi
}. Note the presence of constants $\Cvar, \Cnull, \ldots, \Ctrec \in \mathbb N$ (one for each syntactic construct) in the conclusion of each rule. These are left arbitrary so that our cost model is generic with respect to the particular costs of each primitive operation of the language.

\begin{example}
Consider \code{reverse} from \pref{example:reverse}. Instead of proving correctness in terms of the denotational semantics, we could show that $\evals{\envcons{\envnil}{x}{v_\ell}}{\tmapp{\code{reverse}}{v_\ell}}{v_\ell'}$, where $v_\ell'$ represents the reverse of the list represented by $v_\ell$.
\end{example}

This semantics enjoys several standard properties such as determinism and preservation of typing, which we state here and prove in the mechanization.

\begin{theorem}[Determinism]
If $\evals{\env}{M}{c}{v}$ and $\evals{\env}{M}{c'}{v'}$ then $c = c'$ and $v = v'$.
\end{theorem}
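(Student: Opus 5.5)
We prove determinism by induction on the derivation of $\evals{\env}{M}{c}{v}$, generalizing over $c'$ and $v'$. In each case we invert the second derivation $\evals{\env}{M}{c'}{v'}$. The evaluation judgement is syntax-directed: for each term former there are one or two rules, and they are distinguished by the shape of the value produced by a designated premise. So inversion leaves either the same rule, or a sibling rule whose applicability hinges on a premise we can compare with the induction hypothesis.

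\textbf{Simple cases.} The axiom-like rules are immediate. For \pref{rule:eval-Var}, \pref{rule:eval-ArrowI}, and the rules for $\tmnull$, $\tmnil$, $\tmempty$ and $\tmleaf$, the cost is a fixed constant and the value is determined by $\env$ and $M$.

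\textbf{Compound rules.} For compound rules such as \pref{rule:eval-ArrowE}, \pref{rule:eval-ListI$_2$} and the tensor and lazy-product rules, we apply the induction hypothesis to each premise in order. Determinism of the first premise tells us that, for \pref{rule:eval-ArrowE}, the closure $\vlam{\env'}{x}{M'}$ is the same in both derivations. The argument premise then yields the same $v$, so the third premise is evaluated in the same environment $\envcons{\env'}{x}{v}$ with the same body $M'$, and the induction hypothesis applies again. The costs are then equal sums of equal components plus the same constant.

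\textbf{Paired rules.} For rules that come in pairs (SumE$_i$, the list, stack and tree eliminations with their nil/cons or leaf/node variants), we first apply the induction hypothesis to the scrutinee premise. This rules out the mismatched combination, since for example $\vinj{1}{v}\neq\vinj{2}{u}$ and $\vnil\neq\vcons{v_h}{v_t}$. It also fixes the components bound in the continuation.

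\textbf{Main obstacle: environment splitting.} The conclusions are stated on $\envappend{\env}{\env'}$, and a given environment may a priori be split in different ways in the two derivations. In the mechanization, I would resolve this in one of two ways. One option is to make the split determined by the term, e.g.\ splitting according to free variables or typing, so that both derivations use the same $\env,\env'$. The other is to prove a lemma that evaluation only depends on the values of the free variables of $M$, so that any two splits agree on the relevant variables.

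The recursor cases (e.g.\ \pref{rule:eval-TreeE$_2$} and the list analogue) are proper instances of the induction hypothesis, because the recursive calls $\tmtrec{y}{\ldots}$ under $\envcons{\envnil}{y}{v_l}$ are themselves premises of the derivation. Induction on derivations, rather than on terms, is therefore the right choice here. No well-foundedness issue arises.
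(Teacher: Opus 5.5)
Your proof is correct and matches the paper's (mechanized) argument: induction on the first derivation, inverting the second. The environment split is fixed because the mechanization's intrinsically typed terms carry their \code{split} witness, which is exactly your first option. Only that option works, though; your second cannot yield $v = v'$. Closures capture the whole subenvironment they are evaluated in (rule Eval:ArrowI), so the two splits of $\envcons{\envnil}{a}{\vnull}$ for $\tmpair{\tmlam{x}{x}}{\tmnull}$ give the distinct values $\vpair{\vlam{\envcons{\envnil}{a}{\vnull}}{x}{x}}{\vnull}$ and $\vpair{\vlam{\envnil}{x}{x}}{\vnull}$.
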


\begin{theorem}[Preservation]
If $\types{\Gamma}{M}{A}$, $\envtypes{\env}{\Gamma}$, and $\evals{\env}{M}{c}{v}$, then $v : A$.
\end{theorem}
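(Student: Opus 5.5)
We prove Preservation by induction on the derivation of $\evals{\env}{M}{c}{v}$, generalizing over $\Gamma$ and $A$. We first have to pin down what $\envtypes{\env}{\Gamma}$ and $v : A$ mean; the two judgements are mutually inductive. An environment $\env$ is typed by $\Gamma$ when every $x : B \in \Gamma$ is bound in $\env$ to a value of type $B$. We allow $\env$ to carry extra bindings, which is harmless in an affine setting. A closure $\vlam{\env'}{x}{M'}$ has type $\tparr{B}{C}$ when there is some $\Gamma'$ with $\envtypes{\env'}{\Gamma'}$ and $\types{\ctxcons{\Gamma'}{x}{B}}{M'}{C}$. A lazy pair closure $\vrecord{\env'}{M_1}{M_2}$ has type $\tpprod{A_1}{A_2}$ when there is a single $\Gamma'$ typing $\env'$ with $\types{\Gamma'}{M_i}{A_i}$ for both $i$. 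Lists, stacks, trees, injections and pairs are typed componentwise, and $\vdiam : \tpdiam$.

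In each evaluation case we invert the typing derivation of $M$. For instance, for \pref{rule:eval-ArrowE} inversion gives $\Gamma = \ctxappend{\Gamma_1}{\Gamma_2}$ with $\types{\Gamma_1}{M}{\tparr{B}{A}}$ and $\types{\Gamma_2}{N}{B}$. The evaluation rule splits the environment as $\envappend{\env_1}{\env_2}$.

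\textbf{Lemma (environment splitting).} If $\envtypes{\envappend{\env_1}{\env_2}}{\ctxappend{\Gamma_1}{\Gamma_2}}$, then $\envtypes{\env_i}{\Gamma_i}$ for each $i$.

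This is the one place where the presentation matters. The split of $\env$ in the evaluation rule need not a priori align with the split of $\Gamma$ in the typing rule. Depending on how the mechanization treats these splits (they may be determined by free variables, or environments may be typed up to permutation), I would first try to state both splits relative to a common partition. If that is awkward, I would instead weaken the typing of environments to the permissive ``every typed variable is bound to a well-typed value'' version. Then, since values can simply be looked up and each subterm only reads its own variables, the induction can use the whole environment $\env$ for each premise. The alternative is a lemma that evaluation depends only on free variables, combined with weakening and strengthening of typing. I expect reconciling these splits to be the main obstacle; everything else is routine.

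With the splitting lemma in hand, the cases proceed as follows.

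\textbf{ArrowE.} The induction hypothesis on $M$ gives that the closure has type $\tparr{B}{A}$. Inverting the closure typing yields $\Gamma'$ with $\envtypes{\env'}{\Gamma'}$ and $\types{\ctxcons{\Gamma'}{x}{B}}{M'}{A}$. The induction hypothesis on $N$ gives $v : B$, so $\envtypes{\envcons{\env'}{x}{v}}{\ctxcons{\Gamma'}{x}{B}}$, and the induction hypothesis on the body finishes the case.

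\textbf{Introduction forms.} The intro rules for $\tmlam{x}{M}$ and for lazy pairs produce closures typed directly by the current $\Gamma$ and the premise of the typing rule.

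\textbf{Projections.} A projection evaluates one component of the lazy closure and is handled like ArrowE.

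\textbf{Recursors.} For lists and trees the recursive premises are stated with a fresh variable $y$ bound to a subvalue, under an otherwise empty environment. We type $\tmtrec{y}{N_1}{x_d}{x}{x_l}{x_r}{N_2}$ under $\ctxsing{y}{\tptree{A}}$ using the same premises for $N_1$ and $N_2$; this is possible because those premises are typed under closed or fixed contexts, so this step is valid exactly because of \pref{rule:ty-ListE} and \pref{rule:ty-TreeE}. The induction hypotheses then give $v_L, v_R : B$. The body is checked in the context $\ctxnil[x_d : \tpdiam, x : A, x_l : B, x_r : B]$, with $\vdiam : \tpdiam$. For the list recursor, the base case $N_1$ uses the split environment and is handled via the splitting lemma.

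\textbf{Remaining cases.} Sums, tensors and stacks are direct.
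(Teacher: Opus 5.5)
\textbf{Gap: the split-alignment step you leave open decides whether the theorem holds.} Your proof depends on the environment-splitting lemma. As stated it is false. Worse, if the decomposition $\envappend{\env_1}{\env_2}$ in the evaluation rules is arbitrary, as the rules literally say, Preservation itself fails. So your fallbacks (permissive environment typing, free-variable/weakening/strengthening lemmas) cannot close the gap.

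Counterexample: $\types{\ctxsing{z}{\tpunit}}{\tmpair{\tmrecord{z}{z}}{\tmnull}}{\tptens{(\tpprod{\tpunit}{\tpunit})}{\tpunit}}$ by TensorI with $\Gamma_1 = \ctxsing{z}{\tpunit}$ and $\Gamma_2 = \ctxnil$. Now evaluate with the opposite split, $\env_1 = \envnil$ and $\env_2 = \envcons{\envnil}{z}{\vnull}$. The ProdI evaluation rule has no premises, so this yields
$\evals{\envcons{\envnil}{z}{\vnull}}{\tmpair{\tmrecord{z}{z}}{\tmnull}}{\Crecord + \Cnull + \Cpair}{\vpair{\vrecord{\envnil}{z}{z}}{\vnull}}$.
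Under your own closure typing, the only context typing $\envnil$ is $\ctxnil$, and $\types{\ctxnil}{z}{\tpunit}$ is not derivable. So the result has no type. Nothing ever looks up $z$, so no lookup-based argument helps: the closure has simply captured the wrong environment. The same example also breaks Determinism, because the aligned split yields $\vpair{\vrecord{\envcons{\envnil}{z}{\vnull}}{z}{z}}{\vnull}$ instead.

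\textbf{What is missing is a definitional choice, not a lemma.} It is your option (1): evaluation must split the environment along the split recorded in the typing derivation. With this built in, Preservation holds by construction in the paper.
\begin{itemize}
\item The mechanization uses intrinsically typed terms, and a pair constructor stores its \code{split G GA GB} witness.
\item The subterms are a \code{term GA A} and a \code{term GB B}. They can only be evaluated in the \code{env GA} and \code{env GB} obtained from that witness.
\item \code{evals} relates a \code{term G A} and an \code{env G} to an intrinsically typed \code{value A}, so there is nothing left to prove.
\item The appendix's soundness proof likewise speaks of ``the bipartition of $\env$ corresponding to $\ctxappend{\Gamma_1}{\Gamma_2}$''.
\end{itemize}

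If you keep your extrinsic setup but index evaluation by the typing derivation, your splitting lemma becomes trivial and your case analysis goes through, with one correction. In the list recursor, the recursive premise is evaluated under $\envcons{\env'}{y}{v_t}$, where $\env'$ is the part used by $N_1$; it is not an otherwise empty environment. So you must type the recursive call under $\ctxcons{\Gamma_2}{y}{\tplist{A}}$, with the split placing $y$ on the scrutinee side.
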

We can also give a denotational semantics for values and environments, and prove that they cohere with the denotational semantics for terms. In particular, given $v : A$ and $\envtypes{\env}{\Gamma}$, we define $\tpsem{v} \in \tpsem{A}$ and $\tpsem{\env} \in \tpsem{\Gamma}$ by mutual induction on the structures of $v$ and $\env$. Most cases are straightforward; for example, $\tpsem{\vpair{v_1}{v_2}} = (\tpsem{v_1}, \tpsem{v_2})$. For closures such as function values, we appeal to the term denotational semantics: $\tpsem{\vlam{\env}{x}{M}} = \lambda v . \tmsem{M}{\envcons{\tpsem{\env}}{x}{v}}$. We prove the following in the mechanization:

\begin{theorem}[Coherence]
\label{thm:coherence}
If $\types{\Gamma}{M}{A}$, $\envtypes{\env}{\Gamma}$, and $\evals{\env}{M}{c}{v}$, then $\tpsem{v} = \tmsem{M}{\tpsem{\env}}$.
\end{theorem}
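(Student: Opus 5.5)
We prove \pref{thm:coherence} by induction on the derivation of $\evals{\env}{M}{c}{v}$, generalizing over $\Gamma$ and $A$. In each case we invert the typing derivation $\types{\Gamma}{M}{A}$ to obtain typings for the immediate subterms. Environment splitting in rules such as \pref{rule:eval-ArrowE} matches the context splitting in \pref{rule:ty-ArrowE}, so we first need two auxiliary facts. The first is a \emph{splitting lemma}: if $\envtypes{\envappend{\env_1}{\env_2}}{\ctxappend{\ctx_1}{\ctx_2}}$ in the shape produced by inversion, then $\envtypes{\env_i}{\ctx_i}$. The second is a \emph{weakening/restriction lemma} for the denotational semantics: if $\types{\ctx}{M}{A}$ and two semantic environments agree on the variables of $\ctx$, then $\tmsem{M}{\cdot}$ takes the same value on both. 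Together with $\tpsem{\envappend{\env_1}{\env_2}}$ restricting to $\tpsem{\env_i}$ on the domain of $\ctx_i$, these let us compare the inductive hypotheses, stated over $\tpsem{\env_i}$, with the clause of \pref{fig:sem-tp-tm}, stated over the whole environment. If contexts may shadow variables, we fix a convention (for example, later bindings win) and check that the evaluation rules, the typing rules and the two lemmas all use the same one.

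The introduction cases are direct. For \pref{rule:eval-Var} we have $\tpsem{v} = \tpsem{\env}(x)$ by definition. For \pref{rule:eval-ArrowI} the value is the closure $\vlam{\env}{x}{M}$, whose denotation is $\lambda u.\tmsem{M}{\envcons{\tpsem{\env}}{x}{u}}$, which is exactly $\tmsem{\tmlam{x}{M}}{\tpsem{\env}}$; the lazy pair closure is handled the same way. For cons and pairs we apply the inductive hypotheses componentwise. In cons the diamond premise is simply discarded, since diamonds are not explicit in list values and the semantics ignores $M_d$.

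The elimination cases need the inductive hypothesis applied to evaluations under extended or closure environments. For \pref{rule:eval-ArrowE}, the first premise and the IH give $\tmsem{M}{\tpsem{\env_1}} = \tpsem{\vlam{\env'}{x}{M'}}$. Here we need a typing of the closure body, so the invariant for $v : \tparr{A}{B}$ must record a context $\Gamma'$ with $\envtypes{\env'}{\Gamma'}$ and $\types{\ctxcons{\Gamma'}{x}{A}}{M'}{B}$. Preservation supplies this invariant, and we then apply the IH to the third premise, using preservation on the second premise to get $v : A$. Unfolding gives
\[\tpsem{v'} = \tmsem{M'}{\envcons{\tpsem{\env'}}{x}{\tpsem{v}}} = \tpsem{\vlam{\env'}{x}{M'}}(\tpsem{v}) = \tmsem{M}{\tpsem{\env_1}}(\tmsem{N}{\tpsem{\env_2}}),\]
which matches the semantics of application after the restriction lemma. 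Case analysis, letp, projections and stack pop are analogous, using that $\tpsem{\vinj{i}{v}} = (i,\tpsem{v})$ determines the semantic branch.

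The main obstacle is the recursors, \pref{rule:eval-TreeE$_2$} and the list analogue. Their premises re-evaluate $\tmrec{y}{\ldots}{}{}{}{}$ under the fresh environment $\envcons{\envnil}{y}{v_l}$, not on $M$. The IH therefore gives the semantics of the recursor applied to the variable $y$, that is $\text{iter}(\tpsem{v_l},\ldots)$ over an environment containing only $y$. We must check that this equals the iterate on the tail in the semantics of the original term. This works because the typing rules force $N_2$ (and $N_1$ for trees) to be typed in closed or fixed contexts, so their meaning does not depend on the ambient environment; for lists, $N_1$ lives in $\ctx'$, and we need the restriction lemma so that the base case is unaffected by re-rooting. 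We then retype the re-rooted recursor under $y : \tplist{A}$ (or $y : \tptree{A}$) and use preservation to know $v_l$, $v_L$ and $v_R$ have the right types, which the IH needs. Finally we unfold one step of $\text{iter}$ (or its tree analogue), so the semantics of the whole term equals $f(\tpsem{v_x},\tpsem{v_L},\tpsem{v_R})$, which the last premise's IH identifies with $\tpsem{v}$. Induction on the evaluation derivation, rather than on $M$, handles these non-structural premises automatically.
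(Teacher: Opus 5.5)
Your proposal is correct and takes essentially the same route as the paper's mechanized proof. The paper gives no written argument for Coherence, but the Istari lemma takes the evaluation derivation as its hypothesis, and induction on that derivation is what the closure-body and re-rooted-recursor premises require. The main difference is bookkeeping: the mechanization uses intrinsically typed terms, values and environments with an explicit \texttt{split} judgement. That makes your closure-typing invariant, the appeals to preservation, and the match between environment and context splits automatic, where you have to prove them as separate lemmas.
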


Polynomial-time soundness is meant to be a statement about the computational complexity of functions encoded by \lfpl with respect to some machine model like a Turing machine. The big-step cost semantics is somewhat removed from a Turing machine's cost model.
However, it is well understood how to justify a cost semantics for a functional language like \lfplplus in terms of a low-level model such as a Turing machine~\cite{blellochProvableTimeSpace1996,Accattoli2019}.
The only non-standard features of \lfpl are the affine type system and the type $\tpdiam$, and both of these can be forgotten in a simple cost-preserving translation.
Thus, when proving true polynomial-time soundness of \lfpl, the only part which relies on \lfpl-specific reasoning is the very first step of giving a polynomial bound on the cost in \lfpl's big-step semantics; after that, these standard methods will carry the polynomial bound all the way down to the machine cost model. As the focus of this paper is \lfpl, we only mechanize the construction and verification of the polynomial cost bound.

\subsection{The Non-Size-Increasing Property}
\label{sec:soundness-size}
\enlargethispage{1.7\baselineskip}

We now make precise the non-size-increasing property referred to in previous sections. Our goal is to assign to each value $v$ a number $\valsize{v} \in \mathbb N$ that, roughly speaking, counts the number of diamond values $\vdiam$ contained within $v$.
We then show that the size of the value resulting from the evaluation of an \lfplplus term is bounded by the size of the values in the environment. We define size by mutual induction on values and environments in \pref{fig:size-val-env}.

\pagebreak
\begin{figure}[t!]
  \[ \substack{
  \begin{array}{r@{\;=\;}l@{\hspace{2.5em}}r@{\;=\;}l}
    \valsize{\vdiam} & 1 &
    \valsize{\vrecord{\env}{M_1}{M_2}} & \envsize{\env}  \\                           
    \valsize{\vnull} & 0 &
    \valsize{\vlam{\env}{x}{M}} &  \envsize{\env} \\                               
    \valsize{\vinj{i}{v}} & \valsize{v} &
    \valsize{\vpair{v_1}{v_2}} & \valsize{v_1} + \valsize{v_2} \\                                          
    \valsize{\vempty} & 0 &
    \valsize{\vpush{v_h}{v_t}} & \valsize{v_h} + \valsize{v_t} \\                            
    \valsize{\vnil} & 0 &
    \valsize{\vcons{v_h}{v_t}} & 1 + \valsize{v_h} + \valsize{v_t} \\                          
    \valsize{\vleaf} &  0 &
    \valsize{\vnode{v}{v_l}{v_r}} & 1 + \valsize{v} + \valsize{v_l} + \valsize{v_r}
  \end{array}\\[1ex]
  \begin{array}{r@{\;\;}c@{\;\;}lcr@{\;\;}c@{\;\;}l}  
   \envsize{\envnil} & = & 0 & &
  \envsize{\envcons{\env}{x}{v}} & = & \envsize{\env} + \valsize{v} 
  \end{array}}
  \]
\caption{Size of values and environments in \lfplplus.}
\label{fig:size-val-env}
\end{figure}

\begin{theorem}[The Non-Size-Increasing Property]
\label{thm:non-size-increasing}
Suppose that $\types{\ctx}{M}{A}$. Then, for all environments $\env$, values $v$, and costs $c$ such that $\evals{\env}{M}{c}{v}$, we have $\valsize{v} \le \envsize{\env}$.
\end{theorem}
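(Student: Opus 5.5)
The plan is to prove a slightly stronger statement: drop the typing hypothesis and show $\valsize{v} \le \envsize{\env}$ for every derivable $\evals{\env}{M}{c}{v}$, by induction on that evaluation derivation. The affine discipline is already built into the evaluation rules. Rules with several subterms split the environment as $\envappend{\env}{\env'}$, mirroring the context splitting of \pref{fig:jmt-ty}. The bodies of recursors are run under environments built from $\envnil$, as in \pref{rule:eval-TreeE$_2$}. If some case does need typing after all (I do not expect one), I would instead induct on the evaluation derivation while carrying $\types{\ctx}{M}{A}$ and $\envtypes{\env}{\ctx}$, using Preservation to type intermediate values and closures.

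The first step is a lemma: $\envsize{\envappend{\env}{\env'}} = \envsize{\env} + \envsize{\env'}$, proved by induction on $\env'$ from the definition in \pref{fig:size-val-env}.

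The cases then go as follows.
\begin{itemize}
\item Variable: $\valsize{v} \le \envsize{\env} + \valsize{v}$ holds trivially.
\item Closures $\vlam{\env}{x}{M}$ and $\vrecord{\env}{M_1}{M_2}$: both have size exactly $\envsize{\env}$.
\item Constants ($\vnull$, $\vnil$, $\vempty$, $\vleaf$): all have size $0$.
\item Injections and pairs: immediate from the IH and additivity.
\item Application: the IH gives $\envsize{\env'} \le \envsize{\env_1}$ for the closure and $\valsize{v} \le \envsize{\env_2}$ for the argument. The IH on the body evaluated under $\envcons{\env'}{x}{v}$ then yields $\valsize{v'} \le \envsize{\env'} + \valsize{v} \le \envsize{\envappend{\env_1}{\env_2}}$.
\item Case, let-pair and stack pop: handled in the same way, passing the bound on the scrutinee's components into the continuation's environment.
\item Projection: the projected component is evaluated under the closure's captured environment, whose size is bounded by the IH on $M$.
\end{itemize}

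The interesting cases are the ones involving diamonds. For $\tmcons{M_d}{M_h}{M_t}$, the IH on $M_d$ evaluating to $\vdiam$ gives $1 = \valsize{\vdiam} \le \envsize{\env_d}$. This pays for the extra $1$ in $\valsize{\vcons{v_h}{v_t}}$, and node construction is identical.

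For the list recursor on $\vcons{v_h}{v_t}$, the evaluation derivation contains three premises:
\begin{itemize}
\item the evaluation of the scrutinee, whose IH bounds $1 + \valsize{v_h} + \valsize{v_t}$ by $\envsize{\env}$;
\item a recursive evaluation on the tail with the base case's environment $\env'$, which the IH bounds by $\valsize{v_t} + \envsize{\env'}$;
\item the body $N_2$ under $\envnil[x_d \mapsto \vdiam, x_h \mapsto v_h, x_t \mapsto v_r]$, whose IH gives a bound of $1 + \valsize{v_h} + \valsize{v_r}$.
\end{itemize}
Chaining these gives $\envsize{\env} + \envsize{\env'}$. The diamond released by $x_d$ is exactly the one counted in the cons cell.

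For trees (\pref{rule:eval-TreeE$_2$}), the left and right recursive results are bounded by $\valsize{v_l}$ and $\valsize{v_r}$, since the base case runs under $\envnil$ and so contributes $0$. The body then gives $1 + \valsize{v_x} + \valsize{v_l} + \valsize{v_r} = \valsize{\vnode{v_x}{v_l}{v_r}} \le \envsize{\env}$.

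The main obstacle is getting these recursor cases exactly right. The recursive premise is evaluated in an auxiliary environment such as $\envcons{\envnil}{y}{v_t}$, appended with the base-case environment. The bookkeeping must show that this environment's size is bounded by the original one via additivity, without double-counting $\env'$ between the nil case and the cons case. I would state the recursive premises' environment sizes explicitly and handle the list nil case separately: it simply evaluates $N_1$ under $\env'$.
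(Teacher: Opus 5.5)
Your proposal is correct and takes essentially the same approach as the paper, which proves the theorem by rule induction on the evaluation derivation $\evals{\env}{M}{c}{v}$, with the individual cases discharged in the mechanization (where terms are intrinsically typed). Your remark that the typing hypothesis is unnecessary is sound: the environment splitting in the evaluation rules and the $\envnil$-based environments for recursor bodies already carry the affine accounting, and your list- and tree-recursor calculations go through exactly as you wrote them.
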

\begin{proof}
By rule induction on $\evals{\env}{M}{c}{v}$. All cases are provided in the mechanization.
\end{proof}

The relevance of this result to soundness might already be clear. An immediate corollary is that there are no closed terms of type $\tpdiam$, because $\vdiam$ has size $1$ but the empty environment has size $0$. This is one of the main properties \lfpl's type system was designed to ensure.

\subsection{Polynomial Cost Bound}

Our goal for the rest of this section is to prove the following:

\begin{theorem}[Concrete Polynomial-Time Soundness of \lfplplus]
\label{thm:soundness-concrete}
Suppose $\types{\ctxcons{\ctxnil}{\ell}{\tplist{A}}}{M}{B}$, where $\dfree{A}$. Then, there exists a polynomial $P_M : \mathbb N \to \mathbb N$ such that, for any value $v_\ell : \tplist{A}$ representing a list of length $n$, there exists a value $v : B$ and a cost $c \in \mathbb N$ such that $\evals{\envcons{\envnil}{\ell}{v_\ell}}{M}{c}{v}$ and $c \le P(n)$.
\end{theorem}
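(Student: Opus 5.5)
The plan is to follow Aehlig and Schwichtenberg: build, by induction on typing derivations, an explicit polynomial $P_M$ in a size parameter that bounds evaluation cost. Termination (existence of $v$ and $c$) and the cost bound should be proved simultaneously, since a bare cost bound says nothing about whether evaluation terminates.

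\textbf{Step 1: a logical relation.} First I would define, by induction on types $A$, a predicate $\mathcal{V}_A(v)$ on values, together with a statement for open terms. Roughly:
\begin{itemize}
\item For base data types, $\mathcal{V}_A$ just requires well-formedness.
\item For a closure $\vlam{\env}{x}{M}$ of type $\tparr{A}{B}$, I would require a polynomial $p$ such that for every $u$ with $\mathcal{V}_A(u)$, the body evaluates under $\envcons{\env}{x}{u}$ to some $w$ with $\mathcal{V}_B(w)$, at cost at most $p(\valsize{u} + \valsize{\vlam{\env}{x}{M}})$.
\item Lazy pair closures would be treated analogously for each projection.
\end{itemize}
The fundamental lemma then reads: for $\types{\ctx}{M}{A}$ there is a polynomial $P_M$, depending only on the derivation, such that for every $\env$ satisfying $\mathcal{V}$ pointwise, we have $\evals{\env}{M}{c}{v}$ with $\mathcal{V}_A(v)$ and $c \le P_M(\envsize{\env})$.

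The key design choice is that bounds are uniform polynomials in the \emph{size}, not in the values. This is only meaningful because of \pref{thm:non-size-increasing}: when we compose, the output size of each subterm is at most the size of its environment slice.

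\textbf{Step 2: the cases.}
\begin{itemize}
\item For the splitting rules (ArrowE, TensorE, SumE, cons, node, push, pop), the environment sizes of the slices add up to $\envsize{\env}$. Each intermediate value has size bounded by its slice, so I would take $P_M$ to be a sum of the subterm polynomials evaluated at $\envsize{\env}$, plus a constant.
\item For ArrowE, the function polynomial is applied to $\valsize{u} + \valsize{\text{closure}} \le \envsize{\env}$.
\item For ProdI the context is shared, but only one component is ever forced, so the maximum, which is at most the sum, suffices.
\end{itemize}

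\textbf{Step 3: the recursors (main obstacle).} The crux is ListE and TreeE. The step term $N_2$ is closed apart from $x_d, x_h, x_t$, so it yields a fixed polynomial $Q$. One recursion step costs at most $Q(1 + \valsize{v_h} + \valsize{\text{acc}})$. The accumulator is itself the result of evaluating the recursor on the tail, so by non-size-increasing its size is at most $\valsize{\text{tail}} + \envsize{\env'}$. Hence every step is bounded by $Q(s)$ with $s$ the total environment size. The number of steps is the list length, or the number of nodes for trees, which is itself at most $s$, since each cons or node carries one diamond and contributes $1$ to $\valsize{\cdot}$. This gives a bound like $s \cdot (Q(s) + C) + P_{N_1}(s) + P_M(s)$, which is the whole point of counting diamonds in the size.

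The difficulty is making the induction go through when $B$ is a function type. There the accumulator is a closure whose \emph{cost} polynomial must not compound across iterations; it must only add. This is why the relation must bound cost by a fixed polynomial in the total size of argument plus closure, rather than letting the closure carry its own growing polynomial. I expect to need to check carefully that $\mathcal{V}$ at arrow type, with the fixed polynomial from $N_2$, is preserved by the recursion with a bound linear in the number of steps. I would first try to prove, by induction on the list, that the recursor's result closure satisfies the arrow clause with polynomial $\mathrm{len}\cdot Q + \ldots$, where the argument-dependence enters only through the size.

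\textbf{Step 4: conclusion.} Finally, I would specialize the fundamental lemma to $\ctx = \ell : \tplist{A}$. Since $\dfree{A}$, every element has size $0$, so $\envsize{\envcons{\envnil}{\ell}{v_\ell}} = n$, and $\mathcal{V}$ holds trivially for such data. We obtain $c \le P_M(n)$.
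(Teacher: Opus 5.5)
Your overall shape matches the paper's: a logical relation that proves termination and the cost bound together, a maximum for lazy pairs, and non-size-increasing to bound the number of recursion steps. However, the fundamental lemma you state is false, and the repair you sketch in Step~3 does not work.

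Your arrow clause only asserts that \emph{some} polynomial exists for each closure. So the pointwise hypothesis on $\env$ puts no uniform bound on what the closures stored in $\env$ cost. Take $\types{f : \tparr{\tpunit}{\tpunit}}{\tmapp{f}{\tmnull}}{\tpunit}$, and let $\env$ map $f$ to $\vlam{\envnil}{y}{M'}$, where $M'$ is $m$ nested applications of $\tmlam{z}{z}$ to $\tmnull$. This closure satisfies your clause with a constant polynomial, and $\envsize{\env} = 0$. Yet if $\Capp > 0$, the cost of $\tmapp{f}{\tmnull}$ is at least $m\Capp$, so no $P_M$ depending only on the derivation exists. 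The paper gives exactly this counterexample.

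The same defect breaks your ArrowE case even when the environment holds no closures. The lemma for $M$ gives you a closure with an unknown polynomial $p$, and nothing relates $p$ to $P_M$, so $P_{\tmapp{M}{N}}$ cannot be assembled from subterm polynomials.

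Making the polynomial \emph{fixed} fails in the other direction, because a closure's cost is not a function of its type and size. The closure above has size $0$ and arbitrary cost. A recursor over a list of length $k$ can also produce a size-$0$ closure of type $\tparr{\tpunit}{\tpunit}$ whose application costs $\Theta(k)$. To see this, carry an accumulator of type $\tptens{(\tparr{\tpunit}{\tpunit})}{\tplist{\tpunit}}$, put $x_d$ into the list component, and rewrap the function as $\tmlam{y}{\tmapp{f}{y}}$ in a context containing only $f$. So the bound attached to a closure must depend on the closure itself and may grow with iterations. Consequently, the bound for an open term must be charged for the closures in its environment.

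The missing idea is the paper's amortized potentials. The paper defines polynomials on values and environments syntactically, e.g.\ $P_{\vlam{\env}{x}{M}} = P_\env + P_M$ and $P_{\envcons{\env}{x}{v}} = P_\env + P_v$. It removes the existential polynomial from the arrow clause: $\Rval{\tparr{A}{B}}{\vlam{\env}{x}{M}}$ iff $\Rterm{B}{\envcons{\env}{x}{u}}{M}$ for every $u$ with $\Rval{A}{u}$. It then proves that $\Rterm{A}{\env}{M}$ holds, meaning: for every $n \ge \envsize{\env}$ there are $v, c$ with $\evals{\env}{M}{c}{v}$, $\Rval{A}{v}$, and $c + P_v(n) \le P_M(n) + P_\env(n)$.

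This invariant does three things your proposal lacks.
\begin{itemize}
\item The environment's potential pays for invoking its closures, which handles the $\tmapp{f}{\tmnull}$ example.
\item Subtracting the result's potential makes closure costs add rather than compound. In ArrowE, the potential $P_{\env'} + P_{M'}$ of the function value is exactly what pays for running its body.
\item Evaluating every polynomial at one global $n \ge \envsize{\env}$, rather than at local sizes as in your clause, lets the accumulator's potential in the recursor grow by $\Cvar + \Crec + P_{N_2}(n)$ per step. By non-size-increasing, the number of steps is at most the size of the list's environment slice, hence at most $n$.
\end{itemize}

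Your Step~4 then goes through once you also note that $P_{v_\ell} = 0$ for a list with $\tpdiam$-free elements. Then $P_\env = 0$ and $c \le P_M(n)$.
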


Note that it is important that the polynomial only depends on $M$, i.e., it is quantified before the input $v_\ell$. Taking $B = \tplist{\tpunit}$ and $A = \tpunit$, we recover the basic notion of soundness for \lfplplus functions on natural numbers. We present this more general form to motivate our transition to the even more general form we need to successfully prove it. As usual, we must reason about arbitrary terms under arbitrary contexts. It is tempting to say the following:
\begin{center}
\parbox{.9\linewidth}{
  \it
Suppose $\types{\ctx}{M}{A}$. There exists a polynomial $P_M : \mathbb N \to \mathbb N$ such that, if $\env : \Gamma$, there exists a value $v : A$ and a cost $c \in \mathbb N$ such that $\evals{\env}{M}{c}{v}$ and $c \le P(\envsize{\env})$.
}
\end{center}
However, this statement is false. The problem is that the environment may contain values which incur cost when used, which is dynamic information and $P_M$ only has access to static information. Consider $\types{f : \tparr{\tpunit}{\tpunit}}{\tmapp{f}{\tmnull}}{\tpunit}$. If for simplicity $\Capp = \Cvar = \Cnull = 0$, we cannot statically observe how this program could incur any cost. After all, the only statically observable syntactic constructs are application, variable, and null. On the other hand, dynamically, $\env$ could map $f$ to a function which incurs cost.

To resolve this issue, we must consider the potential cost of running functions (or lazy pairs) stored in the environment $\env$ and result value $v$. Suppose for a moment we have defined three families of polynomials: $P_M$ on all terms $M$, $P_v$ on all values $v$, and $P_\env$ on all environments $\env$. Our new goal is to prove the following theorem.

\begin{theorem}[General Soundness]
\label{thm:soundness-general}
Suppose $\types{\ctx}{M}{A}$. Then, for any $\env : \Gamma$, there exists a value $v : A$ and a cost $c \in \mathbb N$ such that $\evals{\env}{M}{c}{v}$ and:
\[ c + P_v(\envsize{\env}) \le P_M(\envsize{\env}) + P_{\env}(\envsize{\env}) \]
\end{theorem}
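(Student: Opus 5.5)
The plan is to first fix the three polynomial families and then prove the statement by induction on the typing derivation $\types{\ctx}{M}{A}$, generalized over all environments $\env : \Gamma$. For lists and trees inside that induction we use a nested induction on the input value. Following Aehlig and Schwichtenberg, $P_M$ is defined by recursion on the syntax of $M$ as a polynomial in one variable $x$ (intended to be the size). $P_v$ and $P_\env$ are defined by mutual recursion on values and environments: $P_{\envnil} = 0$, and $P_{\envcons{\env}{x}{v}} = P_\env + P_v$. Closures carry potential, with $P_{\vlam{\env}{x}{M}}(x) = P_M(x) + P_\env(x)$, and similarly for lazy pairs using $P_{M_1} + P_{M_2}$. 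Data constructors sum the potentials of their components, and $\vdiam$, $\vnull$, $\vnil$, $\vempty$ and $\vleaf$ get $0$. All these polynomials have natural coefficients and are therefore monotone; monotonicity will be used constantly.

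For term polynomials, the non-recursive constructs get a constant plus a sum, for example $P_{\tmapp{M}{N}} = \Capp + P_M + P_N$. The application case then goes as follows. The induction hypotheses give $c_1 + P_{\vlam{\env'}{x}{M'}}(|\env_1|) \le P_M + P_{\env_1}$, and similarly for $N$. Evaluating the body costs $c'$ with $c' + P_{v'}(s) \le P_{M'}(s) + P_{\env'}(s) + P_v(s)$, where $s = \envsize{\env'} + \valsize{v}$. By Theorem~\ref{thm:non-size-increasing}, $s \le \envsize{\env_1} + \envsize{\env_2} = \envsize{\env}$, so monotonicity lets us evaluate every polynomial at the single argument $\envsize{\env}$. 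The closure's potential $P_{M'} + P_{\env'}$ then pays exactly for running the body. One subtlety is that this needs the value polynomials to be evaluated at the whole environment size rather than at the value's own size; that is why the statement measures everything at $\envsize{\env}$. Sums, tensors, cons, stacks and lazy pairs are analogous. For lazy pairs, projection discards one component's potential, which is harmless.

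The hard cases are the recursors \pref{rule:ty-ListE} and \pref{rule:ty-TreeE}, because there the body $N_2$ runs once per element. The plan is to set $P_{\mathsf{rec}}(x) = \Crec + P_M(x) + P_{N_1}(x) + x\cdot(\Crec + P_{N_2}(x)) + \ldots$; the key is that the factor $x$ in front of $P_{N_2}$ is affordable precisely because the list has at most $\envsize{\env}$ elements. For a list value $v_0$ of length $k$, a separate induction on $k$ proves an iteration lemma. It says that evaluating the recursor on $v_0$ from base value $b$ costs at most $k(\Crec + P_{N_2}(s))$ and produces a result $r$ with $P_r(s) \le P_b(s) + P_{v_0}(s) + k\,P_{N_2}(s)$, where $s$ bounds $\valsize{v_0} + \valsize{b}$; Theorem~\ref{thm:non-size-increasing} gives $\valsize{\text{result}} \le s$ at each step. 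Since $N_2$ is typed in a closed context of $x_d, x_h, x_t$, its environment at each step has size at most $1 + \valsize{v_h} + \valsize{r}$, and this is $\le s$ by non-size-increase. Trees work the same way, with $k$ the number of nodes and with $N_1$ also run once per leaf; the number of leaves is at most $k+1$, so $N_1$'s cost also gets a linear factor. That is harmless because $N_1$ is closed.

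The main obstacle I expect is this bookkeeping of potentials through iteration. The potential $P_r$ of each intermediate result could a priori grow with every step, and each step may invoke a closure stored in $r$ whose potential then has to be paid. The inequality must therefore be stated as an invariant that is additive in $k$, not multiplicative. If the naive invariant fails, I would first try strengthening $P_{\mathsf{rec}}$ to use $x\cdot P_{N_2}(x)$ with all polynomials evaluated at the fixed outer size $\envsize{\env}$ throughout, since that size dominates every intermediate size by Theorem~\ref{thm:non-size-increasing}. Existence of $v$ (termination) falls out of the same inductions, since each case constructs the evaluation derivation.
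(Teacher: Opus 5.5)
\textbf{Gap 1: no induction hypothesis for closure bodies.} Your induction on $\types{\ctx}{M}{A}$ does not give you the fact you use in the application case. The body $M'$ of the closure $\vlam{\env'}{x}{M'}$ that $M$ evaluates to is not a subterm of $\tmapp{M}{N}$. It may be a closure stored in the arbitrary environment $\env$, or one built elsewhere and passed along. So the evaluation $\evals{\envcons{\env'}{x}{v}}{M'}{c'}{v'}$ (termination) and the bound $c' + P_{v'} \le P_{M'} + P_{\env'} + P_v$ are not induction hypotheses. They are instances of the very statement you are proving. The same problem arises for projections out of lazy pairs, and wherever $N_2$ in a recursor calls a closure it receives through $x_h$ or $x_t$. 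This is exactly why the paper remarks that the theorem, as stated, cannot be proven by rule induction on typing.

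The missing idea is a Tait-style logical relation. Define $\Rval{A}{v}$ by induction on $A$, so that $\Rval{\tparr{A}{B}}{\vlam{\env'}{x}{M'}}$ means $\Rval{A}{v}$ implies $\Rterm{B}{\envcons{\env'}{x}{v}}{M'}$. Treat lazy pairs analogously and relate lists elementwise. Then prove the fundamental theorem $\Renv{\Gamma}{\env} \Rightarrow \Rterm{A}{\env}{M}$ by induction on typing; the stated theorem follows from it. In the application case, the body's evaluation and cost bound now come from the relatedness of the closure, not from the syntax.

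\textbf{Gap 2: a single evaluation point is not a usable invariant.} Your claim that monotonicity lets you move every inequality to the single argument $\envsize{\env}$ does not hold. Take $c_1 + P_{v_1}(\envsize{\env_1}) \le P_M(\envsize{\env_1}) + P_{\env_1}(\envsize{\env_1})$: it need not persist at a larger argument, because the potential $P_{v_1}$ on the left grows too. Yet chaining the inequalities in the application case requires the bounds for $M$, $N$ and $M'$ at one common point. The paper's $\Rterm{A}{\env}{M}$ therefore quantifies over all $n \ge \envsize{\env}$, a future-world condition. Then \pref{thm:non-size-increasing} is used only to check that the outer $n$ dominates $\envsize{\envcons{\env'}{x}{v}}$, so every hypothesis can be instantiated at that same $n$. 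Your closing remark about evaluating at a fixed outer size points in this direction, but it must be built into the statement proved by induction, not only into $P_{\mathsf{rec}}$.

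\textbf{The iteration lemma.} You should not split cost and potential there. A separate cost bound $k(\Crec + P_{N_2}(s))$ is false as soon as $N_2$ spends potential stored in $x_h$ or $x_t$ by calling it. The invariant must bound $c + P_r(n)$ jointly, with the list's potential on the right-hand side. This is what the paper's recursor case does: it replaces $n$ by $\mathsf{length}(\ell) \le n$ and inducts on $\ell$.
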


This theorem is true (for some definitions of $P_v$, $P_M$, and $P_\env$), though as stated it is not strong enough to be proven by rule induction on $\types{\ctx}{M}{A}$. Before discussing this further, we define the polynomials used in this theorem. By $\max(P,Q) : \mathbb N \to \mathbb N$ we mean the polynomial with coefficients equal to the pairwise maximums of the coefficients of polynomials $P,Q : \mathbb N \to \mathbb N$. This way, $\max(P(n), Q(n)) \le \max(P,Q)(n)$.

\begin{definition}[Term Polynomial]
Let $M$ be a \lfplplus term. We inductively define the polynomial $P_M : \mathbb N \to \mathbb N$ as follows.
{\upshape
\begin{align*}
P_{x}(n) &= \Cvar \\
P_{\tmrecord{M_1}{M_2}}(n) &= \Crecord + \max(P_{M_1}, P_{M_2})(n) \\
P_{\tmproj{i}{M}}(n) &= \Cproj{i} + P_M(n) \\
P_{\tmlam{x}{M}}(n) &= \Clam + P_M(n) \\
P_{\tmapp{M}{N}}(n) &= \Capp + P_M(n) + P_N(n) \\
P_{\tmcase{M}{x_1}{N_1}{x_2}{N_2}}(n) &= \Ccase + P_M(n) + \max(P_{N_1}, P_{N_2})(n) \\
P_{\tmrec{M}{N_1}{x_d}{x_h}{x_t}{N_2}}(n) &= P_M(n) + (\Crec + P_{N_1}(n)) 
 + n (\Cvar + \Crec + P_{N_2}(n)) \\
P_{\tmtrec{M}{N_1}{x}{y}{z}{w}{N_2}}(n) &= P_M(n) {+} (n {+} 1) (\Ctrec {+} P_{N_1}(n)) 
 {+} n (2\Cvar {+} \Ctrec {+} P_{N_2}(n))
\end{align*}
}
The remaining cases are similar and provided in the mechanization.
\end{definition}

The aim of $P_M$ is to bound the cost of evaluating $M$, and the future cost of using any closures it evaluates to, by $P_M(n)$ in an environment containing $n$ diamonds. The intuition behind $P_{\tmrec{M}{N_1}{x_d}{x_h}{x_t}{N_2}}(n)$ is that $M$ and $N_1$ are evaluated once, and $N_2$ could be evaluated up to $n$ times, where $n = \envsize{\env}$ is the number of diamonds available in the environment. Similarly, for $P_{\tmtrec{M}{N_1}{x}{y}{z}{w}{N_2}}$, the base case $N_1$ could be evaluated $n + 1$ times (a tree with $n$ internal nodes has $n + 1$ leaf nodes) and $N_2$ could be evaluated $n$ times.

\begin{definition}[Value and Environment Polynomials]
We define polynomials $P_\env : \mathbb N \to \mathbb N$ and $P_v : \mathbb N \to \mathbb N$ via mutual induction as follows:
\[
\begin{array}{r@{\;=\;}l@{\hspace{4.5em}}r@{\;=\;}l}
  P_{\vdiam} = P_{\vnull} & P_{\vempty} = P_{\vnil} = P_{\vleaf} = 0 &
  P_{\envnil} & 0 \\
 P_{\vrecord{\env}{M_1}{M_2}} & P_{\env} + \max(P_{M_1}, P_{M_2}) &
  P_{\envcons{\env}{x}{v}} & P_\env + P_v \\
P_{\vlam{\env}{x}{M}} & P_{\env} + P_M \\
P_{\vpair{v_1}{v_2}} & P_{v_1} + P_{v_2} 
\end{array}
\]
The definition of $P_v$ for other values $v$ is given in the mechanization.
\end{definition}
The idea of  $P_v$ and $P_\env$ is to bound the future cost of evaluating all of the closures (functions and lazy pairs) stored inside $v$ and $\env$ respectively. With these choices of polynomials, \pref{thm:soundness-general} implies \pref{thm:soundness-concrete}, because when $v : \tplist{A}$ and $\dfree{A}$, it is clear that $P_v = 0$ and $\valsize{v} = n$, where $n$ is the length of the list represented by $v$.

\begin{example}
Consider \code{reverse} and \code{revAppend} from \pref{example:reverse}. For simplicity, assume our cost model has $\Crec = \Capp = 1$ and all other constants are zero. We have:
\[ P_\code{\tmapp{reverse}{\ell}}(n) = 3\Capp + P_\code{revAppend}(n) = 3\Capp + \Crec + n(\Crec + \Capp) = 2n + 4 \]
Applying \pref{thm:soundness-concrete} to this result, we learn that if \code{reverse} is evaluated with an input list of length $n$ (and an $\tpdiam$-free element type), we will incur at most $2n + 4$ cost, thus proving \code{reverse} is linear-time.
\end{example}

\subsection{Proving the Soundness Theorem}
\label{sec:proving-general-soundness}

To prove \pref{thm:soundness-general}, we take a standard logical relations approach. We define a unary logical relation $\mathcal R$ for terms, values, and environments with the goal that the fundamental theorem for $\mathcal R$ should imply \pref{thm:soundness-general}.

\begin{definition}
We first define the shorthand $\Rterm{A}{\env}{M}$ to mean:
\begin{align*}
&\text{For all $n \ge \envsize{\env}$, there exists $v : A$ and a cost $c \in \mathbb N$ such that $\evals{\env}{M}{c}{v}$,} \\
&\text{$\Rval{A}{v}$ holds, and $c + P_v(n) \le P_M(n) + P_\env(n)$}
\end{align*}
\end{definition}
By working with a general $n \ge \envsize{\env}$ rather than just $\envsize{\env}$ as the input to the polynomials, we can deal with the case where $\env$ arises as part of a larger environment during the proof. This is reminiscent of \emph{future worlds} from Kripke-style logical relations~\cite{Mitchell1991}.

\begin{definition}
We define $\Rval{A}{v}$ by induction on the type $A$:
\begin{align*}
\Rval{\tpunit}{v} &\quad \text{iff} \quad v = \vnull \\
\Rval{\tptens{A_1}{A_2}}{v} &\quad \text{iff} \quad v = \vpair{v_1}{v_2}, \; \Rval{A_1}{v_1}, \text{ and } \Rval{A_2}{v_2} \\
\Rval{\tpprod{A_1}{A_2}}{v} &\quad \text{iff} \quad v = \vrecord{\env}{M_1}{M_2}, \; \Rterm{A_1}{\env}{M_1}, \text{ and } \Rterm{A_2}{\env}{M_2} \\
\Rval{\tparr{A}{B}}{v} &\quad \text{iff} \quad v = \vlam{\env}{x}{M}, \text{ and } \Rval{A}{v} \text{ implies } \Rterm{B}{\envcons{\env}{x}{v}}{M} \\
\Rval{\tplist{A}}{v} &\quad \text{iff} \quad \text{$v$ is a list $\ell$ such that } \Rval{A}{-} \text{ holds for each element of $\ell$ }
\end{align*}
The remaining cases are provided in the mechanization.
\end{definition}

\begin{definition}
We define $\Renv{\Gamma}{\env}$ by induction on $\Gamma$. $\Renv{\ctxnil}{\envnil}$ always holds. $\Renv{\ctxcons{\Gamma}{x}{A}}{\envcons{\env}{x}{v}}$ holds whenever $\Rval{A}{v}$ and $\Renv{\Gamma}{\env}$ hold.
\end{definition}

Now, \pref{thm:soundness-general} is clearly implied by the fundamental theorem for our relation $\mathcal R$:

\begin{theorem}[Fundamental Theorem]
Suppose $\types{\Gamma}{M}{A}$ and $\envtypes{\env}{\Gamma}$. Then, if $\Renv{\Gamma}{\env}$ holds, so does $\Rterm{A}{\env}{M}$.
\end{theorem}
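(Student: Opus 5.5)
We prove the Fundamental Theorem by rule induction on the typing derivation $\types{\Gamma}{M}{A}$, generalizing over $\env$. Throughout we fix $n \ge \envsize{\env}$ and use three bookkeeping facts, each proved once up front.
\begin{itemize}
\item Splitting: if $\env = \envappend{\env_1}{\env_2}$ matches a context split $\ctxappend{\ctx_1}{\ctx_2}$, then $\envsize{\env} = \envsize{\env_1} + \envsize{\env_2}$ and $P_\env = P_{\env_1} + P_{\env_2}$. Also $\Renv{}{}$ splits, and $n \ge \envsize{\env_i}$.
\item Monotonicity: every $P_M$, $P_v$ and $P_\env$ has natural coefficients, so each is monotone in $n$.
\item Non-size-increasing: by \pref{thm:non-size-increasing}, any $v$ with $\evals{\env}{M}{c}{v}$ satisfies $\valsize{v} \le \envsize{\env} \le n$.
\end{itemize}
The relation $\Rterm{}{}{}$ quantifies over all $n \ge \envsize{\env}$, so we can instantiate the induction hypotheses at the same outer $n$. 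This is what lets costs from sequential subterms add up against one polynomial evaluated at a single $n$.

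Most cases are routine.
\begin{itemize}
\item Variables: $P_{\envcons{\env}{x}{v}} \ge P_v$, so $\Cvar + P_v(n) \le P_x(n) + P_\env(n)$.
\item Abstraction: the value $\vlam{\env}{x}{M}$ has $P = P_\env + P_M$. The required bound is exactly $\Clam \le \Clam$, and relatedness is the induction hypothesis for $M$ under the extended environment. To satisfy $\Rterm{}{}{}$ for the body at any $n' \ge \envsize{\env} + \valsize{v}$, we need the hypothesis to hold for all environments and not only the current one. This is why the induction is stated for arbitrary $\env$.
\item Application $\tmapp{M}{N}$: the hypothesis on $M$ with $\env_1$ gives $c_1 + P_{\env'}(n) + P_{M'}(n) \le P_M(n) + P_{\env_1}(n)$. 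The hypothesis on $N$ with $\env_2$ gives $c_2 + P_v(n) \le P_N(n) + P_{\env_2}(n)$. Relatedness of the closure, instantiated at $n$, requires $n \ge \envsize{\env'} + \valsize{v}$. Non-size-increasing supplies this, since $\envsize{\env'} \le \envsize{\env_1}$ and $\valsize{v} \le \envsize{\env_2}$. It yields $c' + P_{v'}(n) \le P_{M'}(n) + P_{\env'}(n) + P_v(n)$. Adding the three inequalities telescopes to the goal.
\item Tensors, sums, lazy products and stacks follow the same pattern, using $\max$ for branches and for lazy pairs.
\end{itemize}

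The main obstacle is the recursor cases, where $P_{\tmrec{M}{N_1}{x_d}{x_h}{x_t}{N_2}}$ charges $n(\Cvar + \Crec + P_{N_2}(n))$. We evaluate $M$ to a list $\vcons{v_1}{\cdots}$ of length $m$. Since each cons contributes $1$ to the size, $m \le \valsize{\cdot} \le \envsize{\env_1}$. We then prove an auxiliary lemma by induction on the list. Evaluating the recursion on a suffix of length $j$, with base value $b$ obtained from $N_1$ under $\env_2$, costs at most $j(\Cvar + \Crec + P_{N_2}(n)) + \Crec$ plus the potentials $\sum_i P_{v_i}(n) + P_b(n)$, minus the potential of the result. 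The result is related, and its size is at most $j + \sum_i \valsize{v_i} + \valsize{b}$.

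In the step we use the closed hypothesis for $N_2$, whose context is only $x_d, x_h, x_t$. The environment there has size $1 + \valsize{v_i} + \valsize{\text{acc}}$, which is bounded by $n$ using the size lemma above and $m + \sum_i \valsize{v_i} + \envsize{\env_2} \le n$. Evaluating $N_2$ at the outer $n$ is legitimate because of the Kripke-style quantification. The element potentials $\sum_i P_{v_i}(n)$ are bounded by $P_{v_M}(n)$, which is consumed from the bound for $M$.

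Trees are handled the same way. We induct on the tree, counting $m$ nodes and $m+1$ leaves, with $N_1$ closed and hence of potential $0$. This matches the $(n+1)$ and $n$ factors in $P$ together with the $2\Cvar$ for the two recursive variables.

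The delicate point in both cases is arranging the list/tree induction hypothesis so that sizes are bounded by $n$ at every intermediate step. I would first attempt this with the invariant "size of accumulated result plus remaining list size $\le \envsize{\env}$", derived from \pref{thm:non-size-increasing}.
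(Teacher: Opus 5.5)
Your proposal is correct and follows the paper's proof. Both use rule induction on the typing derivation, with \pref{thm:non-size-increasing} justifying instantiation of every hypothesis at the same outer $n$, the same telescoping inequalities in the application case, and, for the list recursor, an inner induction on the evaluated list that proves termination together with the tighter bound in which $\mathsf{length}(\ell) \le \envsize{\env_1} \le n$ replaces $n$. Your auxiliary lemma needs one bookkeeping fix: the recursive call $\tmrec{y}{N_1}{x_d}{x_h}{x_t}{N_2}$ on a suffix of length $j$ pays $\Cvar$ at all $j+1$ levels (including the $\tmnil$ one) and also pays for evaluating $N_1$ at the bottom, so state the lemma with the scrutinee's own cost factored out and with $N_1$'s cost accounted for (e.g., $P_{N_1}(n) + P_{\env_2}(n)$ instead of $P_b(n)$), after which the totals match $P_{\tmrec{M}{N_1}{x_d}{x_h}{x_t}{N_2}}$ exactly.
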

\begin{proof}
We go by rule induction on $\types{\ctx}{M}{A}$. \pref{thm:non-size-increasing} comes up in nearly every case, since we need to satisfy the premise $n \ge \envsize{\env}$ of $\Rterm{A}{\env}{M}$ to use any inductive hypotheses. Importantly, we actually make use of this premise in the list and tree recursor cases, where we can say for example that the length of any list value $v$ is at most $\envsize{\env}$ and therefore at most $n$. Most cases are verbose but straightforward, and a couple of the less straightforward cases can be found in{
  \ifdefined\FULLVERSION
    \pref{app:soundness}%
  \else
    Appendix B of the full version~\cite{fullversion}%
  \fi
}.
\end{proof}

\section{Mechanization}
\label{sec:mechanization}

We have mechanized all results stated in Sections \ref{sec:lfpl}, \ref{sec:completeness}, and \ref{sec:soundness} in the proof assistant Istari~\cite{istariProofAssistant}. To our knowledge, our mechanization serves as one of two published case studies in this relatively new proof assistant~\cite{Li2026}. The mechanization is roughly 7,000 lines of meaningful Istari code. Around 3,000 of these are definitions and \lfpl code, around 3,500 are proof scripts, and around 800 are lemma and theorem statements~\cite{mechanization}.

\subsection{Defining LFPL}
\enlargethispage{\baselineskip}

We begin by defining the types of \lfpl as an inductive datatype \lstinline{tp} within Istari. \lfpl does not have type variables or similar complexities. We also define the judgement $\dfree{A}$ within Istari. Next, we represent contexts $\ctx$ as an Istari list of \lfpl types. We do not need variable names associated with the types because we use De Bruijn indices in the term structure. Importantly, we define a judgement $\code{split} \; \ctx \; \ctx_1 \; \ctx_2$ to indicate when $\ctx$ can be  bipartitioned into $\ctxappend{\ctx_1}{\ctx_2}$, used extensively in the typing rules. Essentially, viewing the contexts as type lists as we do in our mechanization, our definition of $\code{split} \; \ctx \; \ctx_1 \; \ctx_2$ ensures $\code{append}(\ctx_1)(\ctx_2)$ is equivalent to some permutation of $\ctx$.

Instead of presenting an untyped syntax with an extrinsic typing judgement, as we do in \pref{sec:lfpl-definition}, our mechanization defines \lfpl's term syntax and typing rules at once, so that all terms are intrinsically well-typed under a known context. These intrinsically typed terms are represented by an inductive datatype \lstinline{term : ctx -> tp -> type}, where \lstinline{type} refers to Istari types, so that a value of type \lstinline{term G A} corresponds to a term $M$ such that $\types{G}{M}{A}$. For example, the rule for the term $\tmpair{M_1}{M_2}$ is:
\begin{lstlisting}
| pair : forall (G GA GB : ctx) (A B : tp) .
    split G GA GB -> term GA A -> term GB B -> term G (tensor A B)
\end{lstlisting}
It says we must bipartition the context $G$ into $\ctxappend{G_A}{G_B}$, and give a well-typed term under each of those contexts to form a pair $\tmpair{M_1}{M_2}$ which has type $\tptens{A}{B}$ under context $G$.

\subsection{Denotational and Operational Semantics}

Instead of the set-theoretic denotational semantics we give in \pref{sec:denotational-semantics}, our mechanization gives a denotational semantics from intrinsically typed \lfpl terms into Istari terms of the appropriate type. This is similar to the set-theoretic formulation, since all operations we use on sets (products, coproducts, etc.) are also present in the Istari type system. One benefit is that we can rely on Istari's reduction engine to simplify Istari terms resulting from the semantics of a complicated \lfpl term. These denotations have the following signatures.
Here, a value of type \lstinline{env G} represents an environment $\env$ such that $\env : G$. 
\begin{lstlisting}
tp_sem : tp -> type
term_sem : forall (G : ctx) (A : tp) . term G A -> env G -> tp_sem A
\end{lstlisting}
Similarly, for the operational semantics judgement
$\evals{\env}{M}{c}{v}$ from \pref{sec:operational-semantics}, we
define an inductive datatype encoding its rules with the following
type:
\begin{lstlisting}
evals : forall (G : ctx) (A : tp) . term G A -> env G -> value A -> nat -> type
\end{lstlisting}
As briefly mentioned at the end of \pref{sec:operational-semantics}, we give a denotational semantics for values and prove a coherence result (\pref{thm:coherence}) in our mechanization, which is a good example of what many of our Istari theorems look like.
\begin{lstlisting}
value_sem : forall (A : tp) . value A -> tp_sem A
value_sem = ...
lemma "operational_equiv_denotational"
  forall (G : ctx) (A : tp) (M : term G A) (V : env G) (v : value A) (c : nat) .
  evals M V v c -> value_sem v = term_sem M (env_sem V)
\end{lstlisting}

\subsection{Takeaways from Istari}

When beginning this project, we were uncertain which proof assistant would be most suitable for the mechanization, so we began by developing two mechanizations in parallel, one in Agda and one in Istari. During preliminary work for the completeness theorem, we ran into two significant difficulties with Agda.

The first issue was a lack of tactics. Agda expressions are often cleaner and more readable than the proof scripts of tactic-based proof assistants like Rocq and Istari. However, when one constantly needs to reason about equality, say by substituting $e'$ for $e$ in a term $f(e)$ where $f$ is a complex expression, it becomes helpful to have a substitution tactic. In Agda, we needed to manually specify $f$, which is infeasible when $f$ is large.

Secondly, Agda requires that all expressions are well-typed, even in the statement of a theorem yet to be proven. This presents some difficulties. For example, consider the \lfpl type $(\tplist{\tpunit})^k$ used in \pref{sec:completeness-bounded-stack-definition}. We know that $\tpsem{(\tplist{\tpunit})^k} = \tpsem{\tplist{\tpunit}}^k$, so we implicitly switch between these forms as needed, but in the mechanization this must be explicit. Some parts of our proof are only well-typed if they are given the form on the left-hand side of the equality, and others need the right-hand form. In Agda, we had to state a theorem where some parts of the theorem needed the left form, and others needed the right, so there was no way to formulate this such that it was well-typed without using transport functions, which are known to cause other major difficulties~\cite{tabareau2018}. Istari does not require the well-typedness of a theorem before it is proven. It must be given a type by the end of the proof, but this is not difficult since we have a lemma stating $\tpsem{(\tplist{\tpunit})^k} = \tpsem{\tplist{\tpunit}}^k$. The difference is that Istari lets us show well-typedness during the proof, whereas Agda demands it before the proof.

Such issues led us to finish the completeness and soundness proofs only in Istari. Retrospectively, we can say that it was a more than adequate tool for this task. In fact, we never encountered any Istari-related roadblocks. This is not to say there weren't annoyances; as a relatively new proof assistant, Istari sometimes struggles to infer implicit arguments and automatically eliminate vacuous proof cases, requiring manual intervention in situations where Agda would succeed.

\section{Related Work}

The most closely related work to ours is that of Atkey~\cite{atkeyPolynomialTimeDependent2024}.
He develops a powerful extension of \lfpl with full dependent types
that includes lists that do not enjoy iteration, much
like the stack types we introduced in \pref{sec:soundness}.
This separation of (unrestricted) construction of data-structures and iteration has
been previously used by Jones~\cite{Jones2001} in his
``Cons-free'' system for characterizing the class P.
Atkey also provides a semantic polynomial-time soundness proof using the realizability method of Dal Lago and Hofmann~\cite{dallagoRealizabilityModelsImplicit2011}. These developments are mechanized in Agda. Our method for proving soundness differs from his in that it is syntactic rather than semantic and constructs an explicit polynomial bound on the cost of evaluation for each term. Furthermore, we devote great attention to mechanizing polynomial-time completeness, ensuring the strongest result by working with a minimal version of \lfpl.

Hofmann's original work on LFPL~\cite{hofmannLinearTypesNonsizeincreasing1999,hofmannStrengthNonsizeIncreasing2002} is the basis of this article and has been discussed in detail in Sections~\ref{sec:lfpl},~\ref{sec:completeness}, and~\ref{sec:soundness}.
The soundness proof is based on previous work by Aehlig and Schwichtenberg~\cite{aehligSyntacticalAnalysisNonsizeincreasing2002}, which introduces the technique of constructing explicit polynomials from \lfpl syntax to bound the cost of evaluation. Ours is the first work to apply their technique to a big-step operational semantics.
None of these articles comes with a mechanization or full account of both soundness and completeness.

\lfpl also directly inspired automatic amortized resource analysis (AARA)~\cite{HoffmJ22}, and so our work could perhaps inspire early steps towards mechanizing the metatheory of AARA. Additionally, there is a polynomial-time completeness result for AARA similar to the one for \lfpl discussed in \pref{sec:completeness}~\cite{phamTypableFragmentsPolynomial2021}.

\enlargethispage{\baselineskip}
Beyond \lfpl, there is a large body of work on implicit computational complexity (ICC)~\cite{dallagoImplicitComputationComplexity2022}. The polynomial time and space classes are a common focus~\cite{gaboardiSoftLinearLogic2008,gaboardiImplicitCharacterizationPSPACE2012}, though there has been interest in logarithmic space usage as well~\cite{schoppStratifiedBoundedAffine2007, ramyaaRamifiedCorecurrenceLogspace2011}. Interestingly, Hofmann~\cite{hofmannStrengthNonsizeIncreasing2002} noticed that, if the structural recursor of \lfpl is replaced with general recursion, the terminating fragment of the language characterizes exponential-time computation. \lfpl is an influential language in ICC, helping to inspire numerous developments in the field~\cite{dallagoRealizabilityModelsImplicit2011,marion2003,marion2011,baillot2016,dalLago2009,bonfante2011}.

We are only aware of two other works in ICC, besides ours and
Atkey's~\cite{atkeyPolynomialTimeDependent2024}, that are
significantly mechanized.
Heraud and Nowak~\cite{Heraud2011} mechanized the soundness and
completeness of the language for P of Bellantoni and
Cook~\cite{Bellantoni92} in Rocq.
This language is quite different from LFPL since it does not feature a
linear type discipline or higher-order functions but instead relies on
the idea of \emph{safe arguments}.
Férée et al.~\cite{Feree2018} describe a library for verifying
polynomial-time complexity in Rocq by using
quasi-interpretations. Instead of a higher-order language with a
linear type system, they consider first-order term-rewrite systems. As
a result, the soundness and completeness arguments are very different
from the ones for \lfpl.
In particular, the completeness proofs for both, the language of
Bellantoni and Cook and the system of quasi-interpretations, do not
require the construction of bounded stacks, which is the main
difficulty in completeness proof for \lfpl.

\section{Conclusion}

This article provides a self-contained presentation of Hofmann's \lfpl.
The mechanized soundness proof constructs, for each \lfpl expression,
an explicit polynomial that bounds the evaluation cost with respect to
a big-step cost semantics.
The mechanized completeness proof shows how \lfpl can simulate
polynomial-time Turing machines, relying only on linear lists and
functions to encode the polynomially sized tape in a non-size-increasing
way.
Both proofs are based on existing ideas but contain significant
simplifications and improvements.

Our hope is that this work contributes to making \lfpl's metatheory
more accessible and to inspiring \lfpl-based innovations in areas like
automatic resource bound analysis~\cite{HoffmJ22}, quantitative type
theory~\cite{atkeyPolynomialTimeDependent2024}, and memory management
in functional languages~\cite{lorenzenFP2FullyInPlace2023}.

\section*{Acknowledgments}

This material is based upon work supported by Jane Street Group, LLC
and the {National Science Foundation under grants
no.~2311983 and 2525102.
Any opinions, findings, and conclusions or recommendations in this
material are those of the authors and do not necessarily reflect the
views of the NSF.


\bibliography{references}

\ifdefined\FULLVERSION
  \newpage
  \appendix
  \section{Evaluation Judgement Rules}
\label{app:evaluation}

In \pref{fig:app-jmt-eval-1} and \pref{fig:app-jmt-eval-2}, we provide all rules for the big-step evaluation judgement, $\evals{\env}{M}{c}{v}$, defined in \pref{sec:operational-semantics}. The additional language features of \lfplplus are included.

  \RenewDocumentCommand{\RuleEval}{m m m}{\Rule{\text{Eval}:#1}{#2}{#3}}
\begin{figure*}[t!]
\begin{mathpar}
\RuleEval{Var}{\evals{\env}{x}{\Cvar}{\env(x)}}{}
\and
\RuleEval{UnitI}{\evals{\env}{\tmnull}{\Cnull}{\vnull}}{}
\and
\RuleEval{SumE$_i$}{\evals{\envappend{\env}{\env'}}{\tmcase{M}{x}{N_1}{x}{N_2}}{c + c' + \Ccase}{v'}}{\evals{\env}{M}{c}{\vinj{i}{v}} \\ \evals{\envcons{\env'}{x}{v}}{N_i}{c'}{v'}}
\and
\RuleEval{SumI$_i$}{\evals{\env}{\tminj{i}{M}}{c + \Cinj{i}}{\vinj{i}{v}}}{\evals{\env}{M}{c}{v}}
\and
\RuleEval{TensorI}{\evals{\envappend{\env_1}{\env_2}}{\tmpair{M_1}{M_2}}{c_1 + c_2 + \Cpair}{\vpair{v_1}{v_2}}}{\evals{\env_1}{M_1}{c_1}{v_1} \\ \evals{\env_2}{M_2}{c_2}{v_2}}
\and
\RuleEval{TensorE}{\evals{\envappend{\env}{\env'}}{\tmletp{M}{x_1}{x_2}{N}}{c + c' + \Cletp}{v}}{\evals{\env}{M}{c}{\vpair{v_1}{v_2}} \\\\ \evals{\env'[x_1 \mapsto v_1, x_2 \mapsto v_2]}{N}{c'}{v}}
\and
\RuleEval{ArrowE}{\evals{\envappend{\env_1}{\env_2}}{\tmapp{M}{N}}{c_1 + c_2 + c' + \Capp}{v'}}{\evals{\env_1}{M}{c_1}{\vlam{\env'}{x}{M'}} \\\\ \evals{\env_2}{N}{c_2}{v} \\ \evals{\envcons{\env'}{x}{v}}{M'}{c'}{v'}}
\and
\RuleEval{ArrowI}{\evals{\env}{\tmlam{x}{M}}{\Clam}{\vlam{\env}{x}{M}}}{}
\and
\RuleEval{ListE$_1$}{\evals{\envappend{\env}{\env'}}{\tmrec{M}{N_1}{x_d}{x_h}{x_t}{N_2}}{c + c' + \Crec}{v}}{\evals{\env}{M}{c}{\vnil} \\ \evals{\env'}{N_1}{c'}{v}}
\and
\RuleEval{ListI$_2$}{\evals{\envappend{\env_d}{\envappend{\env_h}{\env_t}}}{\tmcons{M_d}{M_h}{M_t}}{c_d + c_h + c_t + \Ccons}{\vcons{v_h}{v_t}}}{\evals{\env_d}{M_d}{c_d}{\vdiam} \\ \evals{\env_h}{M_h}{c_h}{v_h} \\ \evals{\env_t}{M_t}{c_t}{v_t}}
\and
\RuleEval{ListI$_1$}{\evals{\env}{\tmnil}{\Cnil}{\vnil}}{}
\and
\RuleEval{ListE$_2$}{\evals{\envappend{\env}{\env'}}{\tmrec{M}{N_1}{x_d}{x_h}{x_t}{N_2}}{c + c_T + c' + \Crec}{v_2}}{\evals{\env}{M}{c}{\vcons{v_h}{v_t}} \\ \evals{\envcons{\env'}{y}{v_t}}{\tmrec{y}{N_1}{x_d}{x_h}{x_t}{N_2}}{c_T}{v_T} \\ \evals{\cdot[x_d \mapsto \vdiam, x_h \mapsto v_h, x_t \mapsto v_T]}{N_2}{c'}{v_2}}
\end{mathpar}
\caption{The full definition of \lfplplus's evaluation judgement. (1/2)}
\label{fig:app-jmt-eval-1}
\end{figure*}
\begin{figure*}[t!]
\begin{mathpar}
\RuleEval{ProdI}{\evals{\env}{\tmrecord{M_1}{M_2}}{\Crecord}{\vrecord{\env}{M_1}{M_2}}}{}
\and
\RuleEval{ProdE$_i$}{\evals{\env}{\tmproj{i}{M}}{c + c' + \Cproj{i}}{v}}{\evals{\env}{M}{c}{\vrecord{\env'}{M_1}{M_2}} \\\\ \evals{\env'}{M_i}{c'}{v}}
\and
\RuleEval{StackI$_1$}{\evals{\env}{\tmempty}{\Cempty}{\vempty}}{}
\and
\RuleEval{StackI$_2$}{\evals{\envappend{\env_h}{\env_t}}{\tmpush{M_h}{M_t}}{c_h {+} c_t {+} \Cpush}{\vpush{v_h}{v_t}}}{\evals{\env_h}{M_h}{c_h}{v_h} \\ \evals{\env_t}{M_t}{c_t}{v_t}}
\and
\RuleEval{StackE$_1$}{\evals{\envappend{\env}{\env'}}{\tmpop{M}{N_1}{x_h}{x_t}{N_2}}{c {+} c' {+} \Cpop}{v}}{\evals{\env}{M}{c}{\vempty} \\ \evals{\env'}{N_1}{c'}{v}}
\and
\RuleEval{StackE$_2$}{\evals{\envappend{\env}{\env'}}{\tmpop{M}{N_1}{x_h}{x_t}{N_2}}{c + c' + \Cpop}{v}}{\evals{\env}{M}{c}{\vpush{v_h}{v_t}} \\\\ \evals{\env'[x_h \mapsto v_h, x_t \mapsto v_t]}{N_2}{c'}{v}}
\and
\RuleEval{TreeI$_2$}{\evals{\envappend{\env_d}{\envappend{\env_x}{\envappend{\env_l}{\env_r}}}}{\tmnode{M_d}{M_x}{M_l}{M_r}}{c_d + c_x + c_l + c_r + \Cnode}{\vnode{v_x}{v_l}{v_r}}}{\evals{\env_d}{M_d}{c_d}{\vdiam} \\ \evals{\env_x}{M_x}{c_x}{v_x} \\ \evals{\env_l}{M_l}{c_l}{v_l} \\ \evals{\env_r}{M_r}{c_r}{v_r}}
\and
\RuleEval{TreeI$_1$}{\evals{\env}{\tmleaf}{\Cleaf}{\vleaf}}{}
\and
\RuleEval{TreeE$_1$}{\evals{\env}{\tmtrec{M}{N_1}{x_d}{x}{x_l}{x_r}{N_2}}{c + c' + \Ctrec}{v}}{\evals{\env}{M}{c}{\vleaf} \\ \evals{\envnil}{N_1}{c'}{v}}
\and
\RuleEval{TreeE$_2$}{\evals{\env}{\tmtrec{M}{N_1}{x_d}{x}{x_l}{x_r}{N_2}}{c + c_L + c_R + c' + \Ctrec}{v}}{\evals{\env}{M}{c}{\vnode{v_x}{v_l}{v_r}} \\ \evals{\envcons{\envnil}{y}{v_l}}{\tmtrec{y}{N_1}{x_d}{x}{x_l}{x_r}{N_2}}{c_L}{v_L} \\ \evals{\envcons{\envnil}{y}{v_r}}{\tmtrec{y}{N_1}{x_d}{x}{x_l}{x_r}{N_2}}{c_R}{v_R} \\ \evals{\envnil[x_d \mapsto \vdiam, x \mapsto v_x, x_l \mapsto v_L, x_r \mapsto v_R]}{N_2}{c'}{v}}
\end{mathpar}
\caption{The full definition of \lfplplus's evaluation judgement. (2/2)}
\label{fig:app-jmt-eval-2}
\end{figure*}

  \section{Soundness Proof Cases}
\label{app:soundness}

Here, we provide two cases of the soundness proof. The other cases are similar to these, though often more straightforward, and can all be found in the mechanization~\cite{mechanization}. First, we go through the function application case in moderate detail to demonstrate the need for a logical relation. Second, we sketch the list recursor case, which as a driver of variable-time computation is one of the most critical points to consider for polynomial-time soundness.
\begin{theorem}[Fundamental Theorem]
Suppose $\types{\Gamma}{M}{A}$ and $\envtypes{\env}{\Gamma}$. Then, if $\Renv{\Gamma}{\env}$ holds, so does $\Rterm{A}{\env}{M}$.
\end{theorem}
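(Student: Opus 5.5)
We proceed by rule induction on $\types{\ctx}{M}{A}$, fixing $\env$ with $\envtypes{\env}{\Gamma}$ and $\Renv{\Gamma}{\env}$, and an arbitrary $n \ge \envsize{\env}$.

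\textbf{Preliminary facts.} Before the induction we establish three auxiliary facts.
\begin{itemize}
\item If $\ctx$ splits as $\ctxappend{\ctx_1}{\ctx_2}$, then $\env = \envappend{\env_1}{\env_2}$ with $\envsize{\env} = \envsize{\env_1}+\envsize{\env_2}$, $P_\env = P_{\env_1}+P_{\env_2}$, and $\mathcal R$ restricts to both parts.
\item All polynomials $P_M, P_v, P_\env$ have natural coefficients. Hence they are monotone in $n$, and every instance $n \ge \envsize{\env_i}$ passes to the components.
\item Weakening is harmless, because dropped variables only decrease $P_\env$.
\end{itemize}

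\textbf{Structural cases.} The variable, unit, and introduction cases are direct. For the lazy product and the $\lambda$ case, the closure value $v$ satisfies $P_v = P_\env + P_M$ (respectively $P_\env+\max(P_{M_1},P_{M_2})$). The cost is exactly the constant, so $c + P_v(n) \le P_{\tmlam{x}{M}}(n)+P_\env(n)$. The relational part $\Rval{}{v}$ comes from the induction hypothesis on the body. The induction hypothesis is applied under $\envcons{\env}{x}{u}$ for any related $u$, at some $n' \ge \envsize{\env}+\valsize{u}$; this is exactly why $\Rterm{}{}{}$ quantifies over all larger $n$.

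\textbf{Application.} We use the IH for $M$ under $\env_1$ to get a related closure $\vlam{\env'}{x}{M'}$ with $c_1 + P_{\env'}(n) + P_{M'}(n) \le P_M(n)+P_{\env_1}(n)$. We use the IH for $N$ under $\env_2$ to get $u$ with $c_2+P_u(n)\le P_N(n)+P_{\env_2}(n)$. We then unfold $\Rval{\tparr{}{}}{}$ at the same $n$. This requires $n \ge \envsize{\env'}+\valsize{u}$, which follows from \pref{thm:non-size-increasing}: $\envsize{\env'}\le\envsize{\env_1}$ and $\valsize{u}\le\envsize{\env_2}$. Adding the three inequalities gives the claim with the constant $\Capp$, because the closure's potential $P_{\env'}+P_{M'}$ and $P_u$ are consumed exactly by the body's bound. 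Case analysis, tensor elimination, and stack pop are handled in the same way, using $\max$ for the branches.

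\textbf{Recursors (main obstacle).} For $\tmrec{M}{N_1}{x_d}{x_h}{x_t}{N_2}$, the IH on $M$ yields a related list $\ell$ with $c_M + P_\ell(n) \le P_M(n)+P_{\env_1}(n)$. By non-size-increase, $|\ell| \le \valsize{\ell}\le\envsize{\env_1}$, and moreover $|\ell| + \envsize{\env_2}\le n$.

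We then prove an inner lemma by induction on the list $\ell$. For a suffix $\ell'$ of length $k$, evaluating the recursor on $\ell'$ terminates with a related result $w$ such that
\[ c + P_w(n) \le \Crec + P_{N_1}(n) + P_{\env_2}(n) + k\,(\Cvar+\Crec+P_{N_2}(n)) + \textstyle\sum_{h\in\ell'} P_h(n). \]
In the step case we apply the IH of the typing derivation for $N_2$ under $[x_d\mapsto\vdiam, x_h\mapsto h, x_t\mapsto w]$. This requires $n \ge 1+\valsize{h}+\valsize{w}$, which holds because $\valsize{w}$ is bounded by the sizes of the remaining suffix and $\env_2$ (non-size-increase again), all fitting inside $n$. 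Finally, $\sum_h P_h \le P_\ell$ and $k\le n$ give the stated $P$ for the recursor.

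The delicate points are the bookkeeping of sizes needed to justify $n\ge\cdots$ at each iteration, and ensuring that the step polynomial $P_{N_2}$ is evaluated at the global $n$ rather than at a growing argument. Trees are handled analogously, using $n+1$ leaves and $n$ nodes, with both subtrees' results fed into $N_2$.
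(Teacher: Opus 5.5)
Your proposal is correct and takes essentially the same route as the paper: rule induction on the typing derivation, the non-size-increasing theorem to discharge every premise $n \ge \envsize{\env}$ (in particular in the application case), and, for the recursors, an inner induction on the list or tree under a tightened bound that uses the actual length in place of $n$. One small bookkeeping caution: read your inner invariant as bounding the recursor's cost \emph{excluding} the evaluation of the scrutinee, since each recursive call pays $\Cvar$ to look up $y$, which is why each step carries one $\Cvar$. If you instead count that lookup inside the suffix evaluation, the $k=0$ case, and hence every $k$, is off by exactly one $\Cvar$.
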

\begin{proof}
We go by rule induction on $\types{\Gamma}{M}{A}$.

\paragraph{Application Case}
Suppose $\types{\ctxappend{\ctx_1}{\ctx_2}}{\tmapp{M}{N}}{B}$ because $\types{\ctx_1}{M}{\tparr{A}{B}}$ and $\types{\ctx_2}{N}{A}$. Assuming $\Renv{\Gamma}{\env}$, our goal is to show that $\Rterm{B}{\env}{\tmapp{M}{N}}$.

Let $\envappend{\env_1}{\env_2}$ be the bipartition of $\env$ corresponding to $\ctxappend{\ctx_1}{\ctx_2}$. Unpacking the definition of $\Rterm{B}{\env}{\tmapp{M}{N}}$, we need to show that, for all $n \ge \envsize{\env_1} + \envsize{\env_2}$, there is a value $v : B$ and a cost $c \in \mathbb N$ such that $\Rval{B}{v}$ holds, $\evals{\env}{\tmapp{M}{N}}{c}{v}$, and:
\[ c + P_v(n) \le \Capp + P_M(n) + P_N(n) + P_{\env_1}(n) + P_{\env_2}(n) \]

Since $\Renv{\Gamma}{\env}$, we have $\Renv{\Gamma_1}{\env_1}$ and $\Renv{\Gamma_2}{\env_2}$. Therefore, by the induction hypotheses, we obtain $\Rterm{\tparr{A}{B}}{\env_1}{M}$ and $\Rterm{A}{\env_2}{N}$. Applying each to $n$ (which is valid since $n \ge \envsize{\env} \ge \envsize{\env_i}$ for $i \in \{1,2\}$), we conclude that:
\begin{itemize}
  \item There is a value $v_M$ and a cost $c_M$ such that $\Rval{\tparr{A}{B}}{v}$ holds, $\evals{\env_1}{M}{c_M}{v_M}$, and:
  \[ c_M + P_{v_M}(n) \le P_M(n) + P_{\env_1}(n) \]
  \item There is a value $v_N$ and a cost $c_N$ such that $\Rval{B}{v_N}$ holds, $\evals{\env_2}{N}{c_N}{v_N}$, and:
  \[ c_N + P_{v_N}(n) \le P_N(n) + P_{\env_2}(n) \]
\end{itemize}

By definition of $\Rval{\tparr{A}{B}}{v}$, we know that $v_M = \vlam{\env'}{x}{M'}$ and $\Rterm{B}{\envcons{\env'}{x}{v_N}}{M'}$. In particular, by definition of $P_{\vlam{\env'}{x}{M'}}$, we obtain the following inequality:
\[ c_M + P_{\env'}(n) + P_{M'}(n) \le P_M(n) + P_{\env_1}(n) \]
Also, by \pref{thm:non-size-increasing}, we know:
\begin{align*}
\envsize{\envcons{\env'}{x}{v_N}} &= \envsize{\env'} + \valsize{v_N} = \envsize{v_M} + \valsize{v_N} \\
&\le \envsize{\env_1} + \envsize{\env_2} \le n
\end{align*}
Therefore, we may apply $\Rterm{B}{\envcons{\env'}{x}{v_N}}{M'}$ to $n$, concluding that there is a value $v : B$ and a cost $c_{M'}$ such that $\Rval{B}{v}$ holds, $\evals{\envcons{\env'}{x}{v_N}}{M'}{c}{v}$, and:
\[ c_{M'} + P_v(n) \le P_{M'}(n) + P_{v_N}(n) + P_{\env'}(n) \]
Take $c = c_M + c_N + c_{M'} + \Capp$. After applying \pref{rule:eval-ArrowE} to conclude the first part of our goal, it remains to prove the cost bound. Chaining all our inequalities together, we get that:
\[ c_M + c_N + c_{M'} + P_v(n) \le P_M(n) + P_N(n) + P_{\env}(n) \]
Adding $\Capp$ to both sides yields the desired inequality.

\paragraph{Recursor Case} Suppose $\types{\ctxappend{\ctx_1}{\ctx_2}}{\tmrec{M}{N_1}{x_d}{x_h}{x_t}{N_2}}{B}$ because:
\begin{itemize}
  \item $\types{\ctx_1}{M}{\tplist{A}}$
  \item $\types{\ctx_2}{N_1}{B}$
  \item $\types{\ctxcons{\ctxcons{\ctxcons{\ctxnil}{x_d}{\tpdiam}}{x_h}{A}}{x_t}{B}}{N_2}{B}$
\end{itemize}

Assume $\Renv{\Gamma_1}{\env_1}$ and $\Renv{\Gamma_2}{\env_2}$, where $\envappend{\env_1}{\env_2}$ is the bipartition corresponding to $\ctxappend{\Gamma_1}{\Gamma_2}$. By the inductive hypothesis for $M$, we know that there is a value $v_M$ and cost $c_M$ such that $\Rval{\tplist{A}}{v_M}$ holds, $\evals{\env_1}{M}{c_M}{v_M}$, and:
\[ c_M + P_{v_M}(n) \le P_M(n) + P_{\env_1}(n) \]
By definition of $\Rval{\tplist{A}}{v_M}$, we know that $v_M$ is a list $\ell$ of values of type $A$ for which $\Rval{A}{-}$ holds. Now, recall the definition of $P_{\tmrec{M}{N_1}{x_d}{x_h}{x_t}{N_2}}(n)$:
\[ P_M(n) + (\Crec + P_{N_1}(n)) + n (\Cvar + \Crec + P_{N_2}(n)) \]
Instead of directly dealing with this bound on the cost, we use a stronger one:
\[ P_M(n) + (\Crec + P_{N_1}(n)) + \mathsf{length}(\ell) (\Cvar + \Crec + P_{N_2}(n)) \]
This bound makes intuitive sense; the whole point of multiplying by $n$ there was because it was suppose to bound the maximum number of times an iteration could run. Now that we are in the midst of the proof with access to dynamic information, we know we will iterate exactly $\mathsf{length}(\ell)$ times. This bound is indeed stronger; by \pref{thm:non-size-increasing}, we know $\mathsf{length}(\ell) \le \envsize{\env_1} \le n$.

From here, the strategy is to proceed by induction on the structure of $\ell$ and prove termination alongside this tighter bound. The cases of this induction are relatively straightforward and do not require anything more complicated than what is seen in the application case above. The base case makes use of the induction hypothesis for $N_1$, and the inductive case makes use of the induction hypothesis for $N_2$.
\end{proof}

\fi

\end{document}